\newtheorem{prop}{Proposition}
\newtheorem{lemma}[prop]{Lemma}
\newtheorem{theorem}[prop]{Theorem}
\newtheorem{corollary}[prop]{Corollary}
\newtheorem{defi}{Definition}
\def\l@paragraph{\@tocline{4}{0pt}{1pc}{7pc}{}}
\def\cal{\mathcal}
\def\Z{{\mathbb Z}}
\def\N{{\mathbb N}}
\def\R{{\mathbb R}}
\def\E{{\mathbb E}}
\def\P{{\mathbb P}}
\def\var{\mathop{\operator@font var}\nolimits}
\def\cov{\mathop{\operator@font cov}\nolimits}
\def\Pois{\mathop{\operator@font Pois}\nolimits}
\def\steq#1{\stackrel{{\rm #1}}{=}}
\def\iM{1}\def\iP{2}
\newcommand{\diff}{\mathop{}\mathopen{}\mathrm{d}}
\newcommand\croc[1]{\left\langle #1\right\rangle}
\newcommand{\ind}[1]{\ensuremath{\mathbbm{1}_{\left\{#1\right\}}}}
\newcommand{\Var}{\mathop{}\mathopen{}\mathrm{Var}}
\newcommand{\Cov}{\mathop{}\mathopen{}\mathrm{Cov}}
\newcommand{\nocontentsline}[3]{}
\newcommand{\tocless}[3]{\bgroup\let\addcontentsline=\nocontentsline#1{#2\label{#3}}\egroup}
\author{Philippe Robert}
\address{INRIA de Paris,
2 rue Simone Iff
CS 42112
75589 Paris Cedex 12
FRANCE}
\email{Philippe.Robert@inria.fr}
\urladdr{https://team.inria.fr/rap/robert}
\title[Mathematical Models of Gene Expression]{Mathematical Models of Gene Expression}
\date{\today}
\begin{document}

\begin{abstract}
In this paper we analyze the equilibrium properties of a large class of stochastic processes describing the fundamental biological  process within bacterial cells, {\em the production process of proteins}.  Stochastic models classically used in this context to describe the time evolution of the numbers of mRNAs and proteins are presented and discussed. An extension of these  models, which includes  elongation phases of mRNAs and proteins, is introduced.  A convergence result to equilibrium for the process associated to the number of proteins and mRNAs is proved and a representation of this  equilibrium  as a functional of a Poisson process in an extended state space is  obtained. Explicit  expressions  for the first two moments of the number of mRNAs and proteins at equilibrium are derived, generalizing some classical formulas.   Approximations used in the  biological literature for the equilibrium distribution of the number of proteins are discussed and investigated in the light of these results. Several convergence results for the distribution of the number of proteins at equilibrium are in particular obtained under different scaling assumptions. 
\end{abstract}

\maketitle

\bigskip
\hrule

\vspace{-3mm}

\tableofcontents

\vspace{-10mm}

\hrule

\bigskip

\newpage

\section{Introduction}\label{PIntroSec}
 The \emph{gene expression} is the process by which the genetic information is synthesized into a functional product, the proteins.  These macro{-}molecules are the crucial agents of functional properties of cells. They play an important role in most of the basic biological processes within the cell, either directly, or as a component of complex macro{-}molecules such as polymerases or ribosomes.  The information flow from DNA genes to proteins is a fundamental process, common to all living organisms.

We analyze this fundamental process in the context of prokaryotic cells, like bacterial  cells or archaeal cells.  The {\em cytoplasm} of these cells is not as structured as eukaryotic cells, like  mammalian cells for example, so that most of the macro{-}molecules of these cells can potentially collide with each other. This key biological process can be, roughly, described as resulting of multiple encounters/collisions of several types of macro{-}molecules of the cell: {\em polymerases} with DNA,  {\em ribosomes} with  {\em mRNAs}, or {\em proteins} with DNA,  \ldots An additional feature of this process is that it  is consuming an important fraction of energy resources of the cell, to build chains of amino{-}acids or  chains of nucleotides in particular. 

The fact that the cytoplasm of a bacterial cell is a disorganized medium  has important implications on the internal dynamics of these organisms. Numerous events  are triggered by random events associated  to thermal noise. When the external conditions are favorable, these cells can nevertheless multiply via division at a steady pace.  In each cell,  around 4000 different types of proteins are produced, with different concentrations: from 5 elements per cell for some rare proteins to 100,000 per cell for some ribosomal proteins. Despite of the noisy environment, the protein production process can handle such exponential growth with these constraints. The cost of the translation phase, the last step of the protein production process, is estimated to account for ~50\% of the energy consumption of a rapidly growing bacterial cell, see  Russell and Cook~\cite{Russell} for example.  

An important question in this context is of understanding and quantifying the role of the parameters of the cell in the production of proteins when large fluctuations may occur, as in the case of abundant proteins. Given the high level of energy consumed by protein production, these large stochastic fluctuations should be, at least, ``controlled'' to avoid a shortage of resources for example. Note nevertheless that, outside ``pure'' fluctuations, random events involving macro-molecules with low copy-numbers may play an important role in the time evolution of cells,  in the case of change of environment of the cell for example so that the cells may adapt. Some components of the ``noise'' in the gene expression may thus have a functional role.   See Eldar and Elowitz~\cite{eldar}, Lestas et al.~\cite{Lestas} and Norman et al.~\cite{Norman}.

From the point of view of the design, the cell can be seen as a system where the processing power is expressed in terms of numerous chemical reactions occurring in a noisy environment. An important difference with  classical, non-biological,  systems, like computer networks,  the possibility of {\em explicitly} storing information used to regulate the production process is not as clear as for these systems.  See Burrill and Silver~\cite{Burrill} on this topic.  This is a really challenging aspect of these biological systems.

\subsection{The Protein Production Process}
We give a quick, simplified, sketch of the steps involved in the production of proteins. See Watson et al.~\cite{Watson} for a much more detailed description of these complex processes. 
\begin{itemize}
\item[]
\item {\em  Gene Activation/Deactivation}. \\
  A gene is a contiguous section of the DNA  associated to a functional property of the cell. When a gene is {\em active}, a macro{-}molecule of the cell, an {\em RNA polymerase}, one of the macro{-}molecules moving in the cytoplasm,  can possibly  bind to the DNA at the promoter, i.e.\ the beginning of the section of DNA for the gene. The binding to the gene is subject to the various (random) fluctuations  of the medium. The gene may also be {\em inactive} and in this case no binding is possible. The reason is that some macro{-}molecules, like proteins or mRNAs, of the cell can ``block'', via chemical bonds, the gene so that  polymerases cannot access it. This is one of the regulation mechanisms of the cell. When random perturbations  break these bindings, the gene becomes again active.

\item[]
\item {\em Transcription}.  
  When an RNA polymerase is bound to an active gene, it starts to make a copy of this gene. The   product which is a sequence of nucleotides is a {\em messenger RNA}, or {\em mRNA}.  The time during which nucleotides are added sequentially  is the {\em elongation} phase of the production of the mRNA.  When the full sequence of nucleotides of the mRNA has been  successively assembled,  the mRNA is released in the cytoplasm. It has a finite lifetime, being degraded by other macro{-}molecules.
\item[]
\item {\em Translation}. The step is achieved through another large macro{-}molecule: the {\em ribosome}. A ribosome is also moving within the cytoplasm. When it encounters an mRNA, it can also bind to it via chemical reactions. In this case it builds a chain of amino{-}acids  using the mRNA as a template to produce a {\em  protein}.   This is the {\em elongation} phase of the production of the protein.  As for mRNAs,  proteins have also a finite lifetime.  See Figure~\ref{Fig1}.
\end{itemize}

\begin{figure}[ht]
        \begin{center}
\scalebox{0.8}{
                \begin{tikzpicture}[node distance=4.5cm,->,>=stealth,thick]
                  \node  (active)  {\includegraphics[width=2cm]{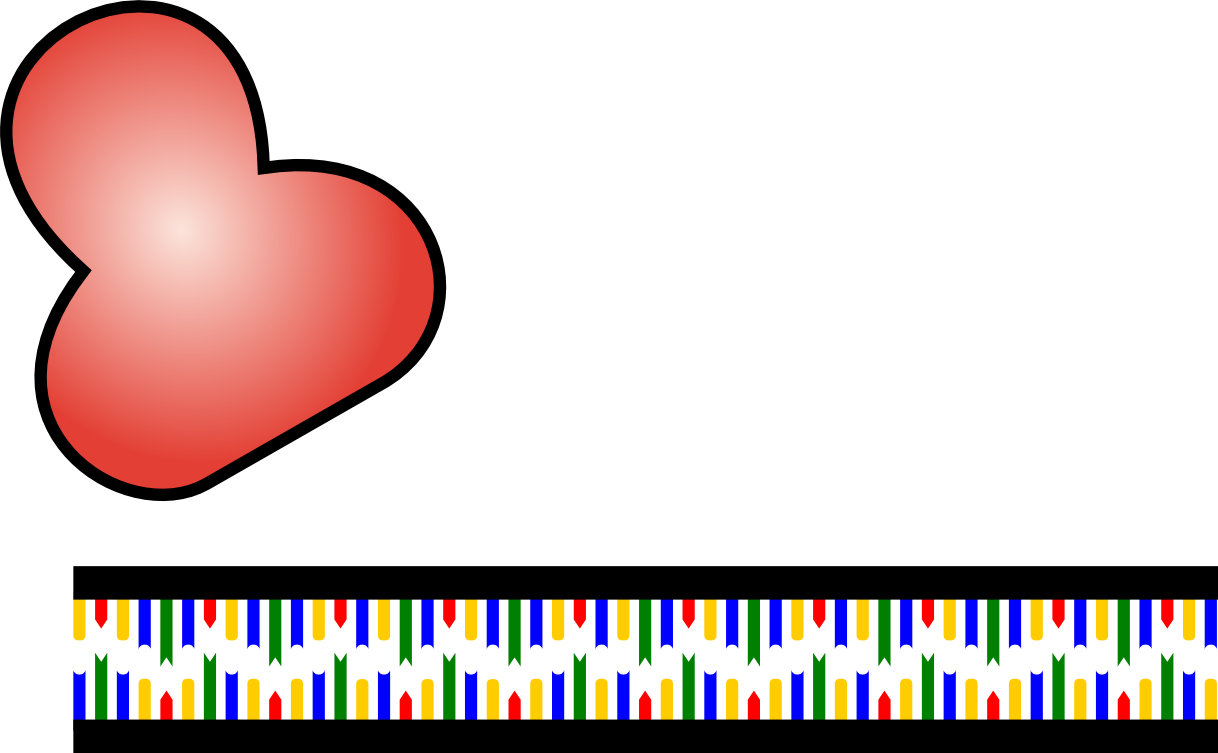}};
                  \node (name-active) [below of=active, node distance = 1cm]{Active Gene};
                \node (name-active2) [above of=active, right of=active, node distance = 18mm]{Polymerase};
                                \node (inactive) [below of=active, node distance = 3cm]   {\includegraphics[width=2cm]{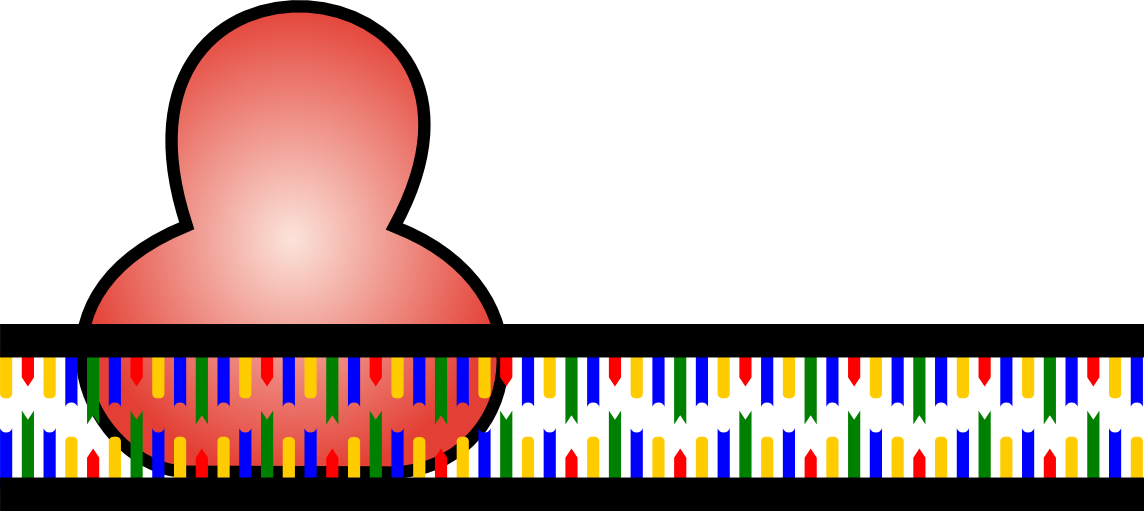}};
                  \node (name-inactive) [below of=inactive, node distance = 8mm]{Inactive Gene};
                \node (RNA) [right of=active] {\includegraphics[width=2cm]{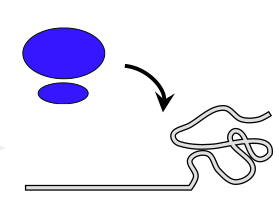}};
                \node (name-RNA) [above of=RNA, left of=RNA,  node distance = 8mm]{Ribosome};
                \node (name-RNA2) [below of=RNA,  right of=RNA, node distance = 9mm]{mRNA};
               \node (deadRNA) [below of=RNA, node distance = 2cm] {\includegraphics[width=1cm]{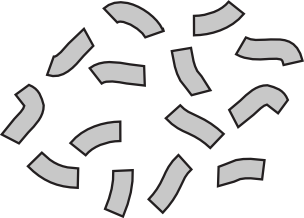}};
                \node (Prot) [right of=RNA] {\includegraphics[width=2cm,]{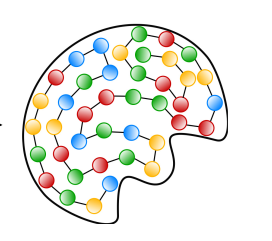}};
                \node (name-RNA) [below of=Prot, right of=Prot,  node distance = 7mm]{Protein};
                \node (deadProt) [below of=Prot, node distance = 2cm] {\includegraphics[width=1cm]{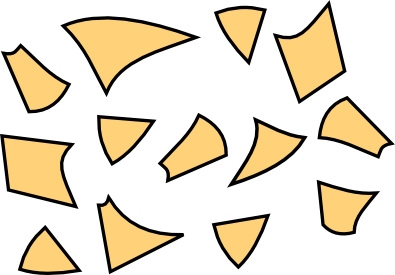}};
                \path   (active.south east) edge node [right] {} (inactive.north east)
                (inactive.north west) edge node [left] {} (active.south west);
                \path   (RNA) edge node [right] {} (deadRNA);
                \path   (Prot) edge node [right] {} (deadProt);
                \path (active) edge node [below] {} node[above=0.2cm]{\includegraphics[width=1.5cm]{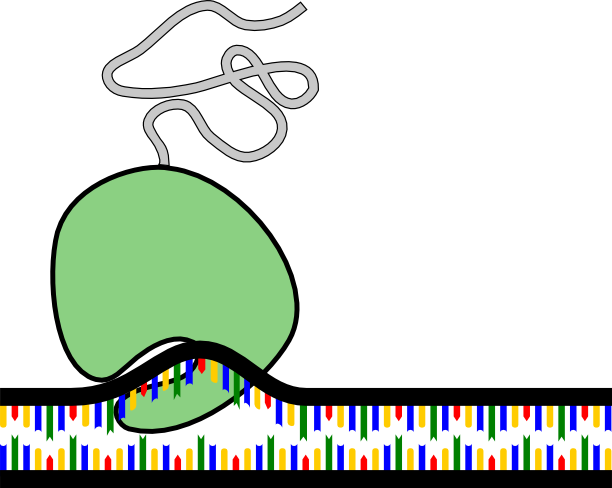}}  (RNA);
                \path (RNA) edge node [below] {} node[above=0.2cm]{\includegraphics[width=2cm]{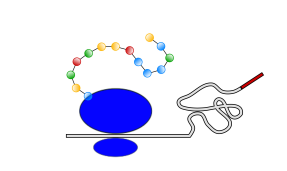}} (Prot);
\end{tikzpicture}}
        \end{center}
           \caption{A Simplified View of the Protein Production Process}\label{Fig1}
  \end{figure}

\subsection{Mathematical Models}\ 

The above description of the protein production process is clearly simplified. The  processes associated to actions of polymerases and ribosomes are composed of several steps. The way  ribosomes bind to mRNAs in particular, see Chapter~14 of Watson et al.~\cite{Watson} for example.  Once a polymerase binds to the gene, the messenger RNA chain is built through a series of specific stages, in which the polymerase recruits one of the four nucleotides in accordance to the DNA template.  Additionally, a dedicated proof reading mechanism takes place during this process.  There is a similar description for the translation step.  The binding events of polymerases to gene or of ribosomes to mRNAs are also due to a sequence of specific steps. 

Because of the disorganized medium of the bacteria, the protein production process is a highly stochastic process. The randomness is partially due to the thermal excitation of the environment. It drives the diffusion of the main components, mRNAs and ribosomes within the cytoplasm and  it also impacts the pairing of cellular components diffusing through the cytoplasm. It can cause the spontaneous rupture of such pairs,  before either transcription or translation can start.

The problem is of understanding the mechanisms used by the cell to produce a large number of proteins with very different concentrations in  such a random, ``noisy'', context. The main goals of a mathematical analysis in the biological literature are generally the following.
\begin{enumerate}
\item Estimate impact on variance of the number of proteins of assumptions on
  \begin{itemize}
  \item Activation/Deactivation rates of the gene;
  \item Transcription/Translation rates. Polymerases  may bind more easily to some genes. A similar phenomenon holds for ribosomes and mRNAs. This is generally mathematically  represented via a ``rate'' of binding: transcription rate for polymerases on gene and translation rate for ribosomes on mRNAs, $k_\iM$ and $k_\iP$ in the following;
  \item The distributions of lifetimes of mRNAs and proteins, the death rate of mRNAs, (resp., proteins),  is denoted by $\gamma_\iM$, (resp., $\gamma_\iP$). 
  \end{itemize}
  \begin{figure}[ht]
        \begin{center}
\scalebox{0.8}{
                \begin{tikzpicture}[node distance=2cm,->,>=stealth,thick]
                  \node  (active)  {DNA};
                \node (RNA) [right of=active] {mRNA};
                \node (Prot) [right of=RNA] {Protein};
                \node (deadProt) [below of=Prot, node distance = 1cm] {$\emptyset$};
                \node (deadRNA) [below of=RNA, node distance = 1cm] {$\emptyset$};
                \path   (active) edge node [above] {$k_\iM$} (RNA);
                \path   (RNA) edge node [above] {$k_\iP$} (Prot);
                \path   (RNA) edge node [right] {$\gamma_\iM$} (deadRNA);
                \path   (Prot) edge node [right] {$\gamma_\iP$} (deadProt);
\end{tikzpicture}}
        \end{center}
        \caption{Basic Chemical Reactions}
  \end{figure}
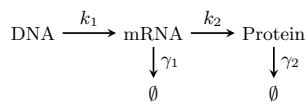
 \item Determine the parameters which achieve minimal variance of concentration of a type of protein with a fixed average concentration. When an analytical formula for the variance is available, one can determine, in principle, the parameters that minimize it, with the constraint of that its average is fixed. The general idea is of determining if the current parameters of the cell are adjusted, depending of the environment, to minimize this variance. 
\item[]\item Estimate  biological parameters for experiments.  Analytical formulas for the average and the variance of the number of proteins can also be used to estimate some biological parameters. See Taniguchi~\cite{Taniguchi}. 
\end{enumerate}
The protein production process  is a fundamental biological process, a process at the basis of ``life'', justifying the extensive efforts to understand and to quantify, via mathematical models, the role of its different  components: polymerases, ribosomes, mRNAs and proteins. It also serves as a generic example of a typical biological process: a component of type $A$ is produced with components of type $B$, but via the creation of another component of type $C$, which may also be used in the regulation of the process.

As a general remark, the mathematician should always keep in mind that she/he is studying an incredibly complex system, the molecular biology of gene expression. It has evolved over a time span of {\em four billion} years.  A mathematical model may take into account only partial aspects  of the sophisticated mechanisms into play. In this presentation we focus on the basic parameters of the transcription and translation phases. 

\bigskip
\noindent
{\bf Some history}.\\ 
The basic principles of molecular biology, starting with Crick and Watson~\cite{Crick}, Brenner et al.~\cite{Brenner} and Jacob and Monod~\cite{Jacob}, have been discovered in the 1950-1960's.  At that time, measuring  concentrations of  different types of macro{-}molecules within the cell was not really possible with  the experimental tools available. Efficient methods to estimate these concentrations, like the {\em fluorescence microscopy}, have been available much later  at the beginning of the 1990's.  This is probably the main motivation of mathematical models at the end the 1970's in the pioneering works Berg~\cite{Berg} and Rigney~\cite{Rigney} to study the fluctuations of the concentration of proteins within the cell.  Analytical expressions for the mean and the variance of the number of proteins in a cell have been derived for some simplified models.  These results were later extended  in the context of Markov chain theory. They are recalled in Section~\ref{MarkSec}. The formulas obtained can give some insight on the role  of  the parameters of the cell like the transcription rate of the gene or the average lifetime of some macro{-}molecules on these fluctuations. Interestingly, this is one of the few  examples in applied mathematics when mathematical models have been used {\em before} measurements could be done.   There is a huge literature on the stochastic analysis of this process using a large range of mathematical methods. For a detailed review of these works, see Paulsson~\cite{Paulsson}.  We focus here on exact formulas that can be obtained for specific stochastic models. 

\bigskip
\noindent
{\bf State Space of the Production Process}. \\
In most  mathematical analyses, the production process  of proteins of a fixed gene is investigated. There is an implicit assumption in this case that the cell allocates a fixed fraction of its resources to the production of the mRNAs, proteins of this gene, independently of the production process of other proteins. With this hypotheses the production process of this type of protein can be analyzed in isolation of the other processes within the cell. This is a convenient given the complexity of the interactions. In this case, the production process is usually  described as a three-dimensional  Markov process $(I(t),M(t),P(t))$ where, for $t{\ge}0$,
\begin{itemize}
\item $I(t){\in}\{0,1\}$ describes the state of the gene at time $t$, active or inactive;
\item $M(t)$ is the number of mRNAs at this instant;
\item $P(t)$ is the number of proteins.
\end{itemize}
For these models, the duration of elongation of mRNAs or of proteins is not taken into account. For example, the transition $M(t){\to}M(t){+}1$, the production of an mRNA, is, implicitly,  associated to the binding of a polymerase to the gene. This amounts to neglect the elongation time of an mRNA. A glance at the numbers of Section~\ref{NumbSec} reveals that, in average, the time to build an mRNA of 3000 nucleotides is  35 seconds, which is not small compared to its lifetime, of the order of $2$ minutes. For different types of mRNAs, the average length is 1000, see Figure~\ref{FigSize} of Section~\ref{NumbSec} in the Appendix.

Another assumption, the price of the Markov property, is that the duration of the transitions have an exponential distribution. This hypothesis for the duration of time to have a binding polymerase{-}gene or ribosome{-}mRNA can be justified with the current parameters of the cell. The assumption is more questionable for lifetimes of mRNAs and, in particular, of proteins.  The lifetimes of  proteins are comparable to the duration of time between divisions of the cell. The interpretation of death/degradation is more difficult in such a case. See the discussion on the assumptions on the distribution of lifetimes. 

The Markovian description of the protein production process gives the possibility of using classical results of Markov theory,  for the convergence to equilibrium as well as for an analytical characterization of equilibrium points. 

\bigskip
\noindent
{\bf Numbers of copies or Concentration~?} \\
The representation of the levels of proteins and mRNAs by {\em numbers}, i.e.\ integers,  is convenient for the mathematical analysis, in particular to investigate the stochastic fluctuations.  For some aspects, it is nevertheless more natural to express the various quantities in terms of {\em concentrations} rather than numbers.  For example, if ${\cal R}$, ${\cal M}$ and ${\cal P}$ denotes respectively the chemical species ribosomes, mRNAs and proteins, the production and destruction of a protein is expressed as , using notations from chemistry,
\[
\begin{cases}
  {\cal M}{+}{\cal R} \longrightarrow  {\cal P}{+}{\cal M}{+}{\cal R},\\
  \phantom{a{+}{\cal R}}{\cal P} \longrightarrow \emptyset.
\end{cases}
\]
Let  $[{\cal A}](t)$ denote the concentration of species ${\cal A}$,  ${\cal A}{\in}\{{\cal R},{\cal M},{\cal P}\}$,  at time $t{\ge}0$. This quantity is the ratio of the number of macro{-}molecules in the cell of  type ${\cal A}$ to the volume of the cell at time $t$. The {\em law of mass action} of chemical physics gives the associated {\em Michaelis-Menten} kinetics equations for the concentration of the various macro{-}molecules, they are analytically expressed by a deterministic differential equation,
\[
\frac{\diff}{\diff t} [{\cal P}](t)= k_{\rm on}[{\cal M}](t)[{\cal R}](t)-\nu [{\cal P}](t).
\]
See van Kampen~\cite{vanKampen} and Chapter~6 of Murray~\cite{Murray} for example. The parameter $k_{\rm on}$ is the rate at which a given ribosome binds to a given mRNA and $\nu$ is the exponential growth rate of the volume of the cell. The concentration of proteins decreases mostly because of volume growth. This is the {\em dilution effect}. In a Markovian context, see Section~\ref{MarkSec}, the number of copies of a given type of protein is used rather than its concentration. In this case  the parameter $\nu$ is interpreted as a death rate  of proteins, which is less natural in some way.  In this presentation, we will nevertheless use the discrete representation with numbers. It is more convenient to describe stochastic phenomena involving a finite, but not too large, number of macro{-}molecules, like for mRNAs.  See Anderson and Kurtz~\cite{Anderson} for an introduction on stochastic modeling of biological systems. 

\bigskip
\noindent
{\bf Organization of the Paper.}\\
Due to its historical importance, the analysis of the first moments of the equilibrium  of  Markovian models is presented in Section~\ref{MarkSec}. 
Section~\ref{c-model} introduces a fundamental marked Poisson point process used to describe the whole  protein production process. It is an extension based on the model of  Fromion et al.~\cite{Fromion}. A result of convergence to equilibrium is proved via a ``coupling from the past'' method. An important representation of the number of mRNAs and proteins in terms of the Poisson process is established. See Theorem~\ref{theoCVMP}.

It should be noted that we do not include negative/positive feedback mechanisms in the models of gene expression, i.e. when proteins/mRNAs, or some other macro{-}molecules, may lock/unlock the gene of some proteins under some circumstances. We are here  mainly focused on the ``historical'' model of gene expression. The main technical reason is that the marked Poisson point process approach presented in this paper does not seem to work for this class models. They are generally investigated via scaling methods. See Mackey~\cite{Mackey} for example. 

In Section~\ref{mRNAsec}, (resp., Section~\ref{TranslSec}),  general formulas for the mean and variance of the number of mRNAs  (resp., of proteins),  at equilibrium are established, generalizing the classical formulas of Markovian models.  An important part of the biological literature is devoted to the analysis of these first moments, this is the main motivation of these two sections.  The results obtained are an extension, with somewhat simpler technical arguments,  of results of  Fromion et al.~\cite{Fromion}.  When the gene is always active, the equilibrium of the detailed description of the state of mRNAs is characterized by Proposition~\ref{theoMemp}. A formula for the generating function of the equilibrium distribution of the number of proteins is established in Proposition~\ref{GenP}.   The joint distribution of the number of mRNAs and of proteins at equilibrium is also investigated in Section~\ref{TranslSec}.

In Section~\ref{Sec-AppP}  the various approximations  used in the biological literature are revisited in the light of the results proved in the previous sections. They are formulated as new scaling results when one of the parameters goes to infinity: switching rate of the state of the gene, average lifetime of proteins, translation rate, \ldots generally under the constraint that the average number of proteins is fixed. Several convergence  results for the equilibrium distribution of the number of proteins are obtained in this way. The methods rely on the results of Section~\ref{TranslSec},  probabilistic arguments and some technical estimates.  A {\em central limit theorem}, which seems to be new, is  established in Proposition~\ref{CTL}. A study on the impact of the elongation phase of proteins on the variability of the protein production process concludes this section. This topic is rarely addressed in  mathematical models of the biological literature. It is shown  that, under some statistical assumptions,  increasing the variability of the elongation phase in a stochastic model may have the surprising effect of reducing the variance of the equilibrium  distribution of the  number of proteins. A consequence of this observation is that the choice of exponential distribution for the distribution of the duration of the  elongation phase of proteins, as it would be natural in a Markovian context, may lead to underestimate the ``real''  variance. 

The central result of this approach is  Theorem~\ref{theoCVMP}, it gives an explicit representation of the number of proteins at equilibrium in terms of a functional of a marked  Poisson point process. The main advantages are that
\begin{enumerate}
\item it is not necessary to solve invariant measure  equations as in the Markovian approaches;
\item Formulas for Poisson processes, recalled in the appendix,  give additional insight on the equilibrium distribution of the number of proteins;
\item General distributions, instead of exponential distributions,  for the duration of elongation times and lifetimes of mRNAs and proteins can be included in the model. 
  \end{enumerate}

\bigskip
\noindent
{\bf Acknowledgments.}\\
The material presented in this review is based on a series of studies in collaboration with Vincent Fromion (INRA) who has, convincingly, introduced the author to these highly interesting processes. A PhD thesis~\cite{Leoncini} by Emanuele Leoncini has been also part of this collaboration. The author is grateful for comments on a preliminary version of this paper by Ga\"etan Vignoud, Emanuele Leoncini and Fr\'ed\'eric Fl\`eche.

\bigskip
\noindent
{\bf Mathematical Notations and Conventions.}\\
Throughout this presentation, for the sake of clarity, we will frequently do the following abuse of notations. If $X$ is some integrable random variable on $\R_+$
\begin{enumerate}
\item $X(\diff x)$ will denote its distribution on $\R_+$;
\item $(X_{n})$ an i.i.d. sequence of random variables with this distribution;
\item $F_X$ a random variable whose distribution  has density
  \[
x\mapsto  \frac{\P(X{\ge}x)}{\E(X)}, 
  \]
  with respect to Lebesgue's measure on $\R_+$.  We denote by $(F_{X,n})$ an i.i.d. sequence with this distribution. 
\end{enumerate}
Note that, with Fubini's Theorem, for $a{\ge}0$,
  \begin{equation}\label{eqFa}
    \E\left((X{-}a)^+\right)=\int_a^{+\infty} \P(X{\ge}u)\,\diff u=\E(X)\P(F_X{\ge}a),
  \end{equation}
  where $b^+{=}\max(b,0)$, $b{\in}\R$.  Additionally, a simple calculation of the Laplace transform shows that  the relation $F_X{\steq{dist}}X$ holds if and only if the law of $X$ is exponential. 

  \medskip
  
Throughout the paper, we will use the following notation:
\begin{itemize}
\item For mRNAs, $L_{\iM}(\diff w)$ is the distribution of their lifetimes and  $E_{\iM}(\diff v)$ the distribution of their elongation times.
\item For proteins, $L_{\iP}(\diff z)$ is the distribution of their lifetimes and  $E_{\iP}(\diff y)$ the distribution of their elongation times.
\end{itemize}
The index $\iM$ refers to mRNAs and $\iP$ to proteins. With our convention, items~(1) and~(2) above, $(L_{\iM})$ and $(E_{\iM})$, (resp.   $(L_{\iM,n})$ and $(E_{\iM,n})$), denote random variables, (resp.  i.i.d. sequences of random variables) associated  to the distribution $L_{\iM}(\diff w)$, (resp. $E_{\iM}(\diff v)$).  And similarly for proteins, for the random variables  $(L_{\iP})$ and $(E_{\iP})$ and the i.i.d.\ sequences  $(L_{\iP,n})$ and $(E_{\iP,n})$.
\section{The Classical Markovian Three-Step Model}\label{MarkSec}
In this section,  Markovian models of the protein production process are presented. Although these models have some limitations in terms of modeling this is an important class of models to study the stochastic fluctuations associated to gene expression. This is in fact the main stochastic model of gene expression used in the biological literature from the early works of Rigney~\cite{Rigney}, Berg~\cite{Berg} to more recent studies Thattai and van Oudenaarden~\cite{Thattai}, Shahrezaei and Swain~\cite{Swain2}. See Paulsson~\cite{Paulsson} for a survey. See also Chapter~6 of Bressloff~\cite{Bressloff} and Chapter~4 of Mackey et al.~\cite{Mackey}. These models are still popular in the biological literature. 

A more general modeling of the stochasticity of  gene expression is presented and discussed in Section~\ref{c-model}. It is analyzed in the rest of the paper.
In this section we will consider the time evolution of the number of mRNAs and of proteins associated to a  gene $G_0$ in a given cell (bacterium). The  statistical assumptions are the following:
\begin{enumerate}
\item If the gene is inactive, in state $0$ say, (resp.,  active, in state $1$), it becomes active, (resp., inactive) after an exponentially distributed  amount of time with parameter $k_+$  (resp., $k_-$).
\item If the gene is active, the mRNAs of $G_0$ are produced according to a Poisson process with parameter ${k_{\iM}}$ and their lifetimes are exponentially distributed with parameter $\gamma_{\iM}$.
\item during its lifetime an mRNA of $G_0$ produces proteins according to a Poisson process with parameter ${k_{\iP}}$ and the lifetime of a protein is exponentially distributed with parameter $\gamma_{\iP}$.
\end{enumerate}
For this model, the parameter $k_\iM$ can be seen as the rate at which a polymerase binds on an active gene, but also the rate at which an mRNA is produced. The steps of binding and elongation are thus reduced to a single step. The same remark holds for the production of proteins. The model of Section~\ref{c-model} distinguishes these two steps.
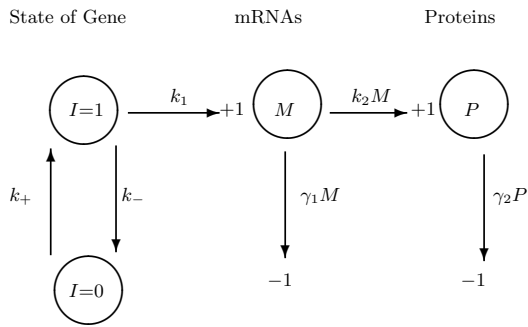
\begin{figure}
\setlength{\unitlength}{2044sp}

\begin{center}
\scalebox{0.8}{
\begin{picture}(7017,5292)(1828,-4594)
{\thicklines
\put(2395,-1328){\circle{1026}}
}%
{\thicklines\put(2465,-4058){\circle{1026}}
}%
\put(2165,-4158){$I{=}0$}
{\thicklines\put(5479,-1238){\circle{1026}}
}%
\put(4679,0){mRNAs}
\put(5279,-1408){$M$}
\put(1265,0){State of Gene}
\put(2165,-1408){$I{=}1$}
\put(3700,-1208){${k_{\iM}} $}
\put(6450,-1208){${k_{\iP}} M$}
\put(1265,-2700){$k_+$}
\put(2965,-2700){$k_- $}
\put(5679,-2700){$\gamma_{\iM} M$}
\put(8109,-4000){$-1$}
\put(8600,-2700){$\gamma_{\iP} P$}
{\thicklines\put(8309,-1238){\circle{1026}}
}%
\put(8159,-1408){$P$}
\put(7559,0){Proteins}
{\thicklines\put(3061,-1366){\vector( 1, 0){1390}}
}%
\put(4451,-1400){$+1$}
{\thicklines\put(6121,-1366){\vector( 1, 0){1175}}
}%
\put(7351,-1400){$+1$}
{\thicklines\put(1891,-3526){\vector( 0, 1){1620}}
}%
{\thicklines\put(2881,-1861){\vector( 0,-1){1620}}
}%
{\thicklines\put(5446,-1951){\vector( 0,-1){1620}}
}%
\put(5179,-4000){$-1$}
{\thicklines\put(8461,-1996){\vector( 0,-1){1620}}
}%
\end{picture}
}
\end{center}
\caption{Classical Three-Step Model for Protein Production.}\label{transfig1}
\end{figure}

For $t{\ge}0$, we denote $I(t){\in}\{0,1\}$ the state of  gene at time $t$ and  $M(t){\in}\N$, (resp.,  $P(t){\in}\N$), is the number of mRNAs, (resp., of proteins), at this instant. It is easily checked  that the process  $(Z(t)){\steq{def}}(I(t),M(t),P(t))$ is an irreducible Markov process in the state space ${\cal S}_0{\steq{def}}\{0,1\}{\times}\N^2$. The non{-}zero elements of its $Q$-matrix $Q{=}(q(z,z'),z,z'{\in}{\cal S}_0)$ outside the diagonal are given by, for $z{=}(i,x,y){\in}{\cal S}_0$, 
\[
\begin{cases}
q(z,z{+}e_{\iM}){=}k_+\ind{i{=}0}, &q(z,z{-}e_{\iM}){=}k_-\ind{i{=}1},\\  
q(z,z{+}e_2){=}{k_{\iM}}\ind{i{=}1}, &q(z,z{-}e_2){=}\gamma_{\iM} x,\\
q(z,z{+}e_3){=}{k_{\iP}} x , &q(z,z{-}e_3){=}\gamma_{\iP} y,
  \end{cases}
\]
where $e_i$, $i{=}1$, $2$, $3$, are the unit vectors of ${\cal S}_0$. See Figure~\ref{transfig1}. As a functional operator $Q$, it can be defined as 
\[
Q(f)(z)=\sum_{z'{\in}{\cal S}_0} q(z,z') f(z')= \sum_{z'{\in}{\cal S}_0\setminus\{z\} } q(z,z') (f(z'){-}f(z)),\qquad z{\in}{\cal S}_0,
\]
for some function $f$ on ${\cal S}_0$.
The starting point of analyses of the literature is always the classical system of Kolmogorov's equations associated to this Markov process
\[
\frac{\diff}{\diff t} \E\left[\rule{0mm}{3mm}f(Z(t))\right] =\E\left[\rule{0mm}{3mm}Q(f)(Z(t))\right],
\]
for some set of test functions $f$, like the indicator function of $x$, for $x{\in}{\cal S}_0$. They are generally referred to as {\em the master equation} in this literature. Its fixed point is the invariant distribution of the Markov process. 
\begin{prop}\label{IMPErgo}
  The Markov process $(I(t),M(t),P(t))$ has a unique invariant distribution $(\pi(z),z{\in}{\cal S}_0)$ with finite second moments
  \[
  \sum_{(i,x,y){\in}{\cal S}_0} \left(x^2{+}y^2\right)\pi(i,x,y)<{+}\infty.
  \]
\end{prop}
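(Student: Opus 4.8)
The plan is to establish the two claims---existence and uniqueness of an invariant distribution, and finiteness of its second moments---by combining the standard theory of continuous-time Markov chains with a Lyapunov (Foster--Lyapunov) argument. First I would recall that $(Z(t))=(I(t),M(t),P(t))$ is an irreducible, non-explosive Markov process on the countable state space $\mathcal S_0=\{0,1\}\times\N^2$: irreducibility is clear from the $Q$-matrix (one can reach $(1,0,0)$ from any state by deactivation, mRNA death, protein death, and reach any state from $(1,0,0)$ by the forward transitions), and non-explosion follows because the total jump rate out of $(i,x,y)$ is bounded by $k_++k_-+k_\iM+k_\iP x+\gamma_\iM x+\gamma_\iP y$, which grows at most linearly, a classical sufficient condition for no explosion. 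Given irreducibility, it then suffices to exhibit \emph{one} finite invariant measure, or equivalently to prove positive recurrence; uniqueness is automatic for an irreducible positive recurrent chain.

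The heart of the argument is a Lyapunov function. I would take $V(i,x,y)=x+y+1$ (or a weighted version $V=ax+by+1$ with suitable positive constants $a,b$) and compute $Q(V)(i,x,y)$. The transitions give
\[
Q(V)(i,x,y)=k_\iM\ind{i=1}-\gamma_\iM x + k_\iP x - \gamma_\iP y
\le k_\iM + k_\iP x - \gamma_\iP y - \gamma_\iM x .
\]
This is not yet negative outside a finite set because of the $k_\iP x$ term feeding into $y$. The standard fix is to choose the weights so that the mRNA coordinate is penalized enough to dominate the production of proteins: with $V=ax+y+1$ one gets $Q(V)=ak_\iM\ind{i=1}-a\gamma_\iM x+k_\iP x-\gamma_\iP y$, and choosing $a>k_\iP/\gamma_\iM$ makes the coefficient of $x$ strictly negative, so $Q(V)(i,x,y)\le C-\eps(x+y)$ for constants $C,\eps>0$ and all $(i,x,y)$. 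By the Foster--Lyapunov / Tweedie criterion this yields positive recurrence, hence existence and uniqueness of $\pi$, and moreover $\sum x\,\pi(i,x,y)<\infty$ and $\sum y\,\pi(i,x,y)<\infty$, i.e.\ finite \emph{first} moments.

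For the finite \emph{second} moments I would iterate the same idea with a quadratic Lyapunov function, say $W=\alpha x^2+\beta xy+\delta y^2+(\text{lower order})$. Computing $Q(W)$, the drift of $x^2$ contributes $-2\gamma_\iM x^2+O(x)$, the drift of $y^2$ contributes $-2\gamma_\iP y^2+2k_\iP xy+O(y)$, and the cross term $xy$ contributes a term $\approx -(\gamma_\iM+\gamma_\iP)xy+k_\iP x^2+O(x+y)$; choosing the coefficients $\alpha,\beta,\delta>0$ in the right order (first pick $\delta$, then $\beta$ large relative to $\delta k_\iP$, then $\alpha$ large relative to $\beta k_\iP$) makes the quadratic part of $Q(W)$ negative definite, giving $Q(W)\le C'-\eps'(x^2+y^2)$. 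Integrating this bound against $\pi$ (justified because we already know $\pi$ has finite first moments, so the $O(x+y)$ error terms are $\pi$-integrable, and one applies the bound to the truncated chain or uses a standard monotone-convergence / Dynkin argument) yields $\sum (x^2+y^2)\pi(i,x,y)<\infty$.

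The main obstacle is making the integration of the Lyapunov inequality against $\pi$ rigorous without circularity: one cannot a priori integrate $Q(W)$ against $\pi$ before knowing $W$ is $\pi$-integrable. The clean way around this is the standard truncation argument---apply Dynkin's formula to $W$ stopped at the exit time of a finite set $\{x\le N, y\le N\}$, take expectations under the stationary initial law, use $Q(W)\le C'-\eps'(x^2+y^2)$ on that set, and let $N\to\infty$ with Fatou's lemma---which produces the uniform second-moment bound. Alternatively one invokes the general moment estimates for Foster--Lyapunov chains (e.g.\ Meyn--Tweedie), which package exactly this bootstrap from a degree-$2$ Lyapunov drift condition to finite second stationary moments. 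Everything else is a routine but slightly tedious verification that the coefficients can indeed be chosen in the stated order.
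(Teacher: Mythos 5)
Your proof is correct, but it takes a genuinely different route from the paper. The paper in fact gives no direct argument for this proposition: it explicitly skips the proof and defers to Propositions~\ref{propM} and~\ref{FTMP}, where the result is obtained as a by-product of the marked Poisson point process machinery --- convergence to equilibrium is proved by coupling from the past (Theorem~\ref{theoCVMP}), and the first and second moments of $M$ and $P$ at equilibrium are then computed in closed form from the Poisson functional representation, which in particular shows they are finite. Your approach is a self-contained Foster--Lyapunov argument inside the Markovian setting: the linear Lyapunov function $V=ax+y$ with $a>k_\iP/\gamma_\iM$ correctly yields a negative drift coefficient on $x$ (the drift computations $Q(x^2)=-2\gamma_\iM x^2+O(x)$, $Q(y^2)=2k_\iP xy-2\gamma_\iP y^2+O(x{+}y)$ and $Q(xy)=k_\iP x^2-(\gamma_\iM{+}\gamma_\iP)xy+O(y)$ are all right), the triangular choice of coefficients in the quadratic function $W$ does make its quadratic drift part negative definite on $\N^2$, and you correctly flag and resolve the only delicate point, namely the truncation/Dynkin argument needed to integrate $Q(W)\le C'-\eps'(x^2{+}y^2)$ against $\pi$ without circularity. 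What each approach buys: yours is more elementary and stays entirely within standard Markov chain theory, which is arguably the natural proof for a proposition stated in the Markovian section; the paper's route requires the heavier point-process apparatus but delivers strictly more --- explicit formulas for the moments (Propositions~\ref{PaulProp}, \ref{propM}, \ref{FTMP}) and validity for general, non-exponential lifetime and elongation distributions, where no Markovian Lyapunov argument on ${\cal S}_0$ is available. Note also that your method, as stated, gives finiteness of second moments only, whereas the paper remarks (citing Gupta et al.) that exponential moments are in fact finite; your bootstrap would need to be iterated or replaced by an exponential Lyapunov function to recover that.
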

\begin{proof}
The proof is skipped. This result is in fact established in the next sections in a much more general context. See the proofs of Propositions~\ref{propM} and~\ref{FTMP}.
\end{proof}
This result on the first two moments is used in the following to derive an explicit expression of them. A much more general result holds in fact since the equilibrium of both variables $(M(t))$ and $(P(t))$ has a finite exponential moment, i.e. there exists $\eta{>}0$ such that
\[
  \sum_{(i,x,y){\in}{\cal S}_0} \left(e^{\eta x}{+}e^{\eta y}\right)\pi(i,x,y)<{+}\infty.
\]
See Gupta et al.~\cite{Gupta} for example. 

There does not seem that there exists a closed form expression for the invariant distribution  $(\pi(i,x,y))$ of the equilibrium equations. See Relation~\eqref{GenPM} below for the generating function of $P$ when the gene is always active. It turns out that, nevertheless, one can get  explicit expressions for the moments of this distribution. Related formulas of the kind have been the main and essentially, the only,  rigorous results of mathematical models of gene expression starting from the first studies in 1977. They are still used in the quantitative analyses of the biological literature. See the supplementary material of Taniguchi et al.~\cite{Taniguchi} for example. 
\begin{prop}\label{PaulProp}
  If $(I,M,P)$ is a random variable whose law is the invariant distribution of the Markov process $(I(t),M(t),P(t))$, then
  \[
  \E(I){=}\delta_+{\steq{def}}\frac{k_+}{k_+{+}k_-},\quad E(M){=}\delta_+\frac{{k_{\iM}}}{\gamma_{\iM}} \text{ and } \E(P){=}\delta_+\frac{{k_{\iM}}{k_{\iP}}}{\gamma_{\iM}\gamma_{\iP}},
\]
and, for the variances, if $\Lambda{=}k_+{+}k_-$,
\[
\frac{\Var(M)}{\E(M)}=1{+}(1{-}\delta_+)\frac{{k_{\iM}}}{\Lambda{+}\gamma_{\iM}},
\]
\begin{equation}\label{PaulVar}
  \frac{\Var(P)}{\E(P)}{=}1{+}\frac{k_{\iP}}{\gamma_\iM{+}\gamma_\iP}{+}(1{-}\delta_+)\frac{k_{\iM}k_{\iP}}{\gamma_{\iM}{+}\gamma_{\iP}}\frac{\Lambda{+}\gamma_\iM{+}\gamma_\iP}{(\Lambda{+}\gamma_\iM)(\Lambda{+}\gamma_\iP)}.
\end{equation}
\end{prop}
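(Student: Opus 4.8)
The plan is the classical one: write the stationary form of Kolmogorov's equation, $\E[Q(f)(Z)]=0$, for a well-chosen finite family of polynomial test functions $f$ on ${\cal S}_0$, and solve the resulting linear system. The structural fact that makes this work is that if $f$ is a monomial in $(i,x,y)$ of degree at most $2$, then $Q(f)$ is again a polynomial of degree at most $2$ in $(i,x,y)$ — here one uses $i^2=i$ to keep the gene variable linear — so the family of moment equations closes, and in fact is triangular once the monomials are ordered suitably. Proposition~\ref{IMPErgo} provides the finiteness of the second moments of $(I,M,P)$ under $\pi$, which is exactly what is needed to make the identity $\E[Q(f)(Z)]=0$ legitimate for these $f$: truncating $f$ to a bounded function, applying the stationary identity for the bounded approximant, and passing to the limit with the moment bound of Proposition~\ref{IMPErgo} is a routine localization.

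First the degree-one relations. Taking $f(i,x,y)=i$ gives $k_+(1-\E(I))=k_-\E(I)$, hence $\E(I)=\delta_+$; taking $f(i,x,y)=x$ gives $\gamma_\iM\E(M)=k_\iM\E(I)$; taking $f(i,x,y)=y$ gives $\gamma_\iP\E(P)=k_\iP\E(M)$. This yields the three announced means. For the second moments I would use, in order, the monomials $ix$, $iy$, $x^2$, $xy$, $y^2$. With $f=ix$, writing $\ind{i=0}x=x-ix$ and $\ind{i=1}x=ix$, one gets
\[
k_+\E(M)-(\Lambda+\gamma_\iM)\E(IM)+k_\iM\E(I)=0,
\]
which determines $\E(IM)$; with $f=iy$ one likewise gets
\[
k_+\E(P)-(\Lambda+\gamma_\iP)\E(IP)+k_\iP\E(IM)=0,
\]
determining $\E(IP)$. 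Then $f=x^2$ yields a relation for $\E(M^2)$ in terms of $\E(IM)$, $\E(I)$ and $\E(M)$, from which $\Var(M)=\E(M^2)-\E(M)^2$ follows and collapses to the stated $\Var(M)/\E(M)=1+(1-\delta_+)k_\iM/(\Lambda+\gamma_\iM)$. Next $f=xy$ gives
\[
k_\iM\E(IP)+k_\iP\E(M^2)-(\gamma_\iM+\gamma_\iP)\E(MP)=0,
\]
determining $\E(MP)$, and $f=y^2$ gives $\E(P^2)$ in terms of $\E(MP)$, $\E(M)$ and $\E(P)$, whence $\Var(P)=\E(P^2)-\E(P)^2$.

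The genuinely laborious step is the final algebraic simplification for $\Var(P)$: the successive substitutions produce a fairly heavy rational expression, and one must recombine the pieces so that the factor $\Lambda+\gamma_\iM+\gamma_\iP$ appears in the numerator and the three-term decomposition~\eqref{PaulVar} emerges. I would keep this under control by substituting $k_+=\delta_+\Lambda$ and $k_-=(1-\delta_+)\Lambda$ from the outset: this isolates the $(1-\delta_+)$ prefactor of the gene-switching contribution early, and the remaining terms $1+k_\iP/(\gamma_\iM+\gamma_\iP)$ are then recognized as the value obtained for a constitutive gene ($\delta_+=1$), i.e. the portion of $\Var(P)/\E(P)$ not carrying the $(1-\delta_+)$ factor. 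All the difficulty lies here; the derivation of the moment equations themselves is entirely mechanical.
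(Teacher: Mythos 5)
Your proposal is correct and is essentially the paper's own proof: the paper likewise writes the stationary identity $\E[Q(f)(I,M,P)]=0$ for the monomials $i,x,y$ and then $x^2,y^2,xy,ix,iy$ (justified by the second-moment bound of Proposition~\ref{IMPErgo}), obtaining exactly the linear equations you list, and solves for $\E(IM)$, $\E(IP)$, $\E(M^2)$, $\E(MP)$, $\E(P^2)$ in the same triangular order. The only additions on your side are the explicit truncation/localization remark and the bookkeeping advice for the final algebra, both of which are consistent with what the paper leaves implicit.
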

The ratio of the variance and of the mean given for $M$ and $P$ is called the  {\em Fano factor},   it can be seen as  a measure of the dispersion of the probability distribution. In the biological literature it is interpreted as a measure of deviation from the Poisson distribution. Recall that the Fano factor of a Poisson distribution is $1$.
\begin{proof}
 The proofs in the literature are generally based on the differential equations satisfied by the generating function of the distribution of the three-dimensional process. See Leoncini~\cite{Leoncini}.  A simple approach, avoiding manipulations of generating functions,  is presented, it is based on some natural flow equations at equilibrium. 

The equilibrium equations for the invariant distribution $\pi$ of the Markov process $(I(t),M(t),P(t))$ can be expressed as 
\[
\int_{{\cal S}_0}  Q(f)(i,x,y)\diff \pi(i,x,y)=\E[Q(f)(I,M,P)]=0,
\]
for any function on ${\cal S}_0$ such that $Q(f)$ is integrable with respect to $\pi$.  If, for $z{\in}{\cal S}_0$,  by taking $f$ the indicator function of $z$, this gives the usual balance equations.

We will use the class of functions $f(i,x,y){=}i^ax^by^c$, with $i{\in}\{0,1\}$, $a$, $b{\in}\{0,1,2\}$ to get an appropriate system of linear equations to get these moments. Proposition~\ref{IMPErgo} shows that $Q(f)$ is integrable for any function $f$ of this class. 

The equations
  \begin{align*}
    k_+\E(1{-}I){-}k_-E(I)=0, &&f(i,x,y)=i,\\
    k_\iM\E(I){-}\gamma_\iM\E(M)=0, && f(i,x,y)=x,\\
    k_\iP\E(M){-}\gamma_\iP\E(P)=0, && f(i,x,y)=y
  \end{align*}
  give immediately the formulas for the first moments.

A second system of linear  equations
    \begin{align*}
    k_\iM\E((2M{+}1)I) {+}\gamma_\iM\E(M(1{-}2M))=0,&& f(i,x,y)=x^2,\\
    k_\iP\E(M(2P{+}1)){+}\gamma_\iP\E(P(1{-}2P))=0,&& f(i,x,y)=y^2,\\
    k_\iM\E(IP){-}\gamma_\iM\E(MP){+}k_\iP\E(M^2){-}\gamma_\iP\E(MP)=0,&& f(i,x,y)=xy,\\
    k_+\E((1{-}I)M){-}k_-\E(IM){+}k_\iM\E(I){-}\gamma_\iM\E(IM)=0,&& f(i,x,y)=ix,\\
    k_+\E((1{-}I)P){-}k_-\E(IP){+}k_\iP\E(IM){-}\gamma_\iP\E(IP)=0,&& f(i,x,y)=iy,
    \end{align*}

\noindent
for the variables $E(M^2)$, $\E(P^2)$, $\E(MP)$, $\E(IM)$ and $\E(IP)$ gives the second moments of $M$ and $P$. The proposition is proved.   
\end{proof}
As it will be seen in a more general framework, when the gene is always active, i.e.\ $k_-{=}0$, the distribution of $M$ is Poisson. This explains in particular the identity $\Var(M){=}\E(M)$ of Proposition~\ref{PaulProp} in this case.  See Proposition~\ref{theoMemp} of Section~\ref{mRNAsec}. Note that it is not the case for $P$ since
\[
\frac{\Var(P)}{\E(P)}{>}1,
\]
by Proposition~\ref{PaulProp}. This is usually interpreted in the biological literature as the fact that the distribution of $P$ is more variable than a Poisson distribution. 

The above results, the first and second moments of the number of proteins at equilibrium,  are essentially the main mathematical results of the current biological literature in this domain. As an example of application of the methods presented in this paper, we state several results which give some additional insight on the properties of the equilibrium distribution of the number of proteins. They are simple applications of general results which are proved in Section~\ref{c-model} and~\ref{Sec-AppP}. 

An explicit representation of the generating function of $P$ is in fact available. 
\begin{prop}\label{CorGenP}
If  the gene is always active, the generating function of the equilibrium distribution of the number of proteins, for $z{\in}(0,1)$, is given by 
  \begin{multline}\label{GenPM}
  \E\left(z^{P}\right)
  =\exp\left({-}k_{\iM} \int_{\R_+}\left[1{-}e^{{-}\left( 1{-}z\right){k_{\iP}}\left(1{-}e^{{-}\gamma_\iP u}\right)/{\gamma_\iP}}\right] e^{-\gamma_\iM u}\diff u \right.\\\left. 
  {-}k_\iM\int_{\R_+^2}\left[1{-}e^{{-}\left( 1{-}z\right){k_{\iP}}e^{{-}\gamma_\iP u}\left(1{-}e^{{-}\gamma_\iP w}\right)/{\gamma_\iP}}\right] \gamma_{\iM}e^{-\gamma_\iM w} \diff u\diff w\right).
\end{multline}
\end{prop}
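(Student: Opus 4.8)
The plan is to read \eqref{GenPM} off the Poisson representation of the equilibrium furnished by Theorem~\ref{theoCVMP} (equivalently, to specialize the general generating-function formula of Proposition~\ref{GenP} to the case with no elongation phases and with exponential lifetimes, $L_{\iM}$ of parameter $\gamma_{\iM}$ and $L_{\iP}$ of parameter $\gamma_{\iP}$). When the gene is always active, $k_-{=}0$, so that at equilibrium the instants at which mRNAs are created form a Poisson process on $\R$ of rate $k_{\iM}$; an mRNA born at time $-t$, $t{>}0$, carries an independent lifetime $\ell$ distributed as $L_{\iM}$, emits proteins on the interval $[-t,-t{+}\ell]$ according to an independent Poisson process of rate $k_{\iP}$, and each emitted protein receives an independent lifetime distributed as $L_{\iP}$. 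Then $P$, the number of proteins alive at time $0$, is the sum over all mRNAs of the numbers of their offspring still alive at $0$, and these summands are conditionally independent given the configuration of mRNA births because distinct mRNAs carry independent marks.

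First I would compute, for one mRNA born at $-t$ with lifetime $\ell$, the conditional distribution of its number of proteins alive at $0$. A protein emitted at time $\tau\le 0$ is still alive at $0$ with probability $\P(L_{\iP}\ge -\tau)=e^{\gamma_{\iP}\tau}$, and proteins emitted after time $0$ do not yet exist; restricting the emission Poisson process to $[-t,\min(0,-t{+}\ell)]$ and thinning by these survival probabilities, this number is Poisson with parameter $\lambda(t,\ell)=k_{\iP}\int_{[-t,\,\min(0,-t+\ell)]}e^{\gamma_{\iP}\tau}\,\diff\tau$, i.e.
\[
\lambda(t,\ell)=
\begin{cases}
\dfrac{k_{\iP}}{\gamma_{\iP}}\bigl(1-e^{-\gamma_{\iP} t}\bigr), & \ell\ge t,\\[2mm]
\dfrac{k_{\iP}}{\gamma_{\iP}}\,e^{-\gamma_{\iP}(t-\ell)}\bigl(1-e^{-\gamma_{\iP}\ell}\bigr), & \ell< t,
\end{cases}
\]
the two regimes corresponding respectively to an mRNA still alive at $0$ (of age $t$) and to an mRNA that died at time $-t{+}\ell{<}0$ (of total lifetime $\ell$). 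Hence the conditional generating function of this offspring count is $\exp\bigl(-(1-z)\lambda(t,\ell)\bigr)$ for $z\in(0,1)$.

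Then I would apply the exponential (Campbell) formula for the marked Poisson process of mRNA births recalled in the appendix: conditioning each point on its marks and using $\int_0^\infty\gamma_{\iM}e^{-\gamma_{\iM}\ell}\,\diff\ell=1$,
\[
\E\bigl(z^{P}\bigr)=\exp\!\left(-k_{\iM}\int_0^\infty\!\!\int_0^\infty\bigl(1-e^{-(1-z)\lambda(t,\ell)}\bigr)\,\gamma_{\iM}e^{-\gamma_{\iM}\ell}\,\diff\ell\,\diff t\right).
\]
It remains to split this double integral at $\ell=t$. On $\{\ell\ge t\}$ the integrand does not depend on $\ell$ and $\int_t^\infty\gamma_{\iM}e^{-\gamma_{\iM}\ell}\,\diff\ell=e^{-\gamma_{\iM}t}$, which gives the first integral of \eqref{GenPM} after setting $u=t$; on $\{\ell<t\}$ the change of variables $w=\ell$, $u=t-\ell\in(0,\infty)$, of Jacobian $1$, turns $\gamma_{\iM}e^{-\gamma_{\iM}\ell}\,\diff\ell\,\diff t$ into $\gamma_{\iM}e^{-\gamma_{\iM}w}\,\diff w\,\diff u$ and $\lambda(t,\ell)$ into $k_{\iP}e^{-\gamma_{\iP}u}(1-e^{-\gamma_{\iP}w})/\gamma_{\iP}$, which gives the second integral of \eqref{GenPM}. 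Finiteness of all the integrals involved (so that the manipulations are licit and $\E(z^P)>0$ on $(0,1)$) follows from the bound $\lambda(t,\ell)\le k_{\iP}/\gamma_{\iP}$, the inequality $1-e^{-x}\le x$, and the integrability of $e^{-\gamma_{\iM}t}$ and of $e^{-\gamma_{\iP}u}$. The only substantial ingredient is the equilibrium Poisson representation of Theorem~\ref{theoCVMP}; once it is granted, the proof reduces to identifying the offspring intensity $\lambda(t,\ell)$ and to one change of variables, and I expect the bookkeeping distinguishing the ``still alive'' mRNAs from the ``already dead'' ones to be the only delicate point.
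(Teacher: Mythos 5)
Your proof is correct and follows essentially the same route as the paper: the paper obtains \eqref{GenPM} by specializing Proposition~\ref{GenP} (itself proved via the Laplace transform of the marked Poisson process of Theorem~\ref{theoCVMP}) to null elongation times and exponential lifetimes, and your direct computation of the per-mRNA surviving-offspring intensity $\lambda(t,\ell)$ reproduces exactly the quantity $k_{\iP}\E(L_{\iP})\P(F_{L_{\iP}}{+}E_{\iP}{\in}(u{-}w,u))$ appearing there. The split at $\ell{=}t$ and the change of variables $u{=}t{-}\ell$, $w{=}\ell$ are precisely the bookkeeping needed to pass from the form in Proposition~\ref{GenP} to the two integrals of \eqref{GenPM}.
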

Bokes et al.~\cite{Bokes} gives, via an analytic approach with the master equation,  an alternative representation of the generating function in terms of an hypergeometric function.

The classical  relation~\eqref{PaulVar} for the variance shows that when the gene is always active, i.e.\ $\delta_+{=}1$,  and  the average lifetime $1/\gamma_\iP$ of proteins gets large, then
\[
\frac{\Var{P}}{\E(P)}{=}\E\left(\left(\frac{P-\E(P)}{\sqrt{\E(P)}} \right)^2\right)\sim 1{+}\frac{k_\iP}{\gamma_\iM}.
\]
The following proposition gives a more precise result, when lifetimes of proteins goes to infinity. It is a consequence of a general convergence theorem, Theorem~\ref{CTL} of Section~\ref{Sec-AppP}.
\begin{prop}
If  the gene is always active, for the convergence in distribution, the relation
    \[
  \lim_{\gamma_\iP{\to}0} \frac{P{-}\E(P)}{\sqrt{\E(P)}}={\cal N}\left(\sigma\right),
  \]
holds,  where ${\cal N}\left(\sigma\right)$ is a centered Gaussian random variable whose standard deviation is 
  \[
  \sigma =  \sqrt{1{+}{k_\iP}/{\gamma_\iM}}.
  \]
\end{prop}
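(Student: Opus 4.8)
\emph{Proof plan.}
Since the gene is always active, $\delta_+{=}1$, so Proposition~\ref{PaulProp} gives $m{\steq{def}}\E(P){=}k_\iM k_\iP/(\gamma_\iM\gamma_\iP)$, which tends to $+\infty$ as $\gamma_\iP{\to}0$, and $\Var(P)/\E(P){=}1{+}k_\iP/(\gamma_\iM{+}\gamma_\iP){\to}\sigma^2{=}1{+}k_\iP/\gamma_\iM$. Put $Y_{\gamma_\iP}{=}(P{-}m)/\sqrt m$. The plan is to compute, for every $\theta{\in}\R$, the limit of the characteristic function
\[
\E\left(e^{i\theta Y_{\gamma_\iP}}\right)=e^{-i\theta\sqrt m}\,\E\left(z^{P}\right),\qquad z=\exp\left(i\theta/\sqrt m\right),
\]
using the explicit formula~\eqref{GenPM} of Proposition~\ref{CorGenP}, to check that it converges to $\exp({-}\theta^2\sigma^2/2)$, and to conclude by L\'evy's continuity theorem. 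To evaluate~\eqref{GenPM} at $z$ with $|z|{=}1$ one first notes that both sides of~\eqref{GenPM} extend continuously to $\{|z|{\le}1\}$: for such $z$ one has $\Re(1{-}z){\ge}0$, hence the integrands are bounded in modulus by $|1{-}z|$ times the obvious integrable positive factor and dominated convergence applies on both sides. (Alternatively one may use the Laplace transforms $\E(e^{-\lambda Y_{\gamma_\iP}})$ for $\lambda$ near $0$ — finite there because $P$ has a finite exponential moment at equilibrium, see the discussion following Proposition~\ref{IMPErgo} — the computation being identical with $1{-}z$ in place of the relevant small parameter.)

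For the computation, write $\log\E(z^P){=}{-}k_\iM\bigl(A_{\gamma_\iP}(z){+}B_{\gamma_\iP}(z)\bigr)$, where $A_{\gamma_\iP}$ and $B_{\gamma_\iP}$ are respectively the first and the second integral in~\eqref{GenPM}. Rescaling the variable $u$ by $\gamma_\iP$ in each integral turns $(1{-}e^{-\gamma_\iP u})/\gamma_\iP$ into a quantity increasing to $u$ and produces a factor $1/\gamma_\iP$. Using $|1{-}e^{-w}|{\le}|w|$ for $\Re w{\ge}0$ one gets $|k_\iM A_{\gamma_\iP}(z)|{\le}k_\iM k_\iP|1{-}z|/\gamma_\iM^2{=}O(\sqrt{\gamma_\iP}){\to}0$, so the first integral is negligible. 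For the second, expand $1{-}e^{-x}{=}x{-}x^2/2{+}R(x)$ with $|R(x)|{\le}|x|^3/6$ for $\Re x{\ge}0$: the part linear in $x$ integrates to $k_\iM k_\iP(1{-}z)/\bigl(\gamma_\iP(\gamma_\iM{+}\gamma_\iP)\bigr)$, which, using $1{-}z{=}{-}i\theta/\sqrt m{+}\theta^2/(2m){+}O(m^{-3/2})$ and $k_\iM k_\iP/\gamma_\iP{=}m\gamma_\iM$, equals ${-}i\theta\sqrt m{+}\theta^2/2{+}o(1)$; the part quadratic in $x$ equals $\theta^2 k_\iP/(2\gamma_\iM){+}o(1)$, using $(1{-}z)^2{=}{-}\theta^2/m{+}O(m^{-3/2})$ and $\gamma_\iP^{-2}\!\int_{\R_+}(1{-}e^{-\gamma_\iP w})^2\gamma_\iM e^{-\gamma_\iM w}\,\diff w{\uparrow}2/\gamma_\iM^2$; the remainder $R$ contributes $O(m^{-1/2}){=}o(1)$. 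Altogether
\[
\log\E\left(e^{i\theta Y_{\gamma_\iP}}\right)={-}i\theta\sqrt m{-}k_\iM A_{\gamma_\iP}(z){-}k_\iM B_{\gamma_\iP}(z)={-}\frac{\theta^2}{2}\left(1{+}\frac{k_\iP}{\gamma_\iM}\right){+}o(1),
\]
so $\E(e^{i\theta Y_{\gamma_\iP}})$ converges to $\exp({-}\theta^2\sigma^2/2)$, the characteristic function of ${\cal N}(\sigma)$, which proves the claim. Heuristically, $\log\E(z^P){=}{-}k_\iM\Phi(z)$ has a compound-Poisson-type structure, so every cumulant of $P$ beyond the mean is of order $m$; those of order ${\ge}3$ are killed by the $\sqrt m$-normalisation, which is the probabilistic reason for the Gaussian limit — one could also organise the proof around cumulants, adding a moment estimate to justify the convergence in law.

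The only genuinely delicate point is the uniformity of these estimates: $A_{\gamma_\iP}$ and $B_{\gamma_\iP}$ diverge like $1/\gamma_\iP$ while the expansion parameter $1{-}z{\asymp}\gamma_\iP^{1/2}$ shrinks, and the inner exponent $x{=}(1{-}z)k_\iP e^{-\gamma_\iP u}(1{-}e^{-\gamma_\iP w})/\gamma_\iP$ is of order $\gamma_\iP^{-1/2}$ times a bounded factor; one must check that, after integrating against the $1/\gamma_\iP$-weighted measures, the cubic remainder $R$, the replacement of $(1{-}e^{-\gamma_\iP u})/\gamma_\iP$ by $u$, and the truncation of $1{-}\exp(i\theta/\sqrt m)$ to its first two terms each contribute only $o(1)$. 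This is a routine but careful dominated-convergence and bounded-remainder argument, which goes through verbatim in the complex case thanks to the inequalities $|1{-}e^{-w}|{\le}|w|$ and $|1{-}e^{-w}{-}w|{\le}|w|^2/2$ valid for $\Re w{\ge}0$. Finally, the statement is also the special case $\delta_+{=}1$ of the general central limit theorem of Section~\ref{Sec-AppP} (Theorem~\ref{CTL}).
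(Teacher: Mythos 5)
Your argument is correct, and it is worth noting how it sits relative to the paper: the paper disposes of this proposition in one line, as the special case $E_\iP{\equiv}0$, $L_\iM$ exponential of the general central limit theorem (Theorem~\ref{CTL}), whose proof works with the real Laplace transform $\E(\exp({-}\sqrt{\gamma_\iP}\xi(P{-}\E(P))))$ obtained from Proposition~\ref{GenP} and controls the cubic remainder of the expansion of $1{-}e^{-x}$ exactly as you do. Your route is a self-contained specialisation of that same second-order-expansion technique to the explicit Markovian formula~\eqref{GenPM}, but carried out on the characteristic function at $z{=}e^{i\theta/\sqrt m}$; the price is the boundary-extension step (analytic continuation of~\eqref{GenPM} from $(0,1)$ to the open disc and continuity up to $|z|{=}1$, or the two-sided Laplace transform near $0$ using the finite exponential moment), which you correctly identify and which the paper avoids by staying on the real axis, where convergence of Laplace transforms on an interval around the origin already yields convergence in distribution. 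Your bookkeeping checks out: the first integral of~\eqref{GenPM} contributes $O(|1{-}z|/\gamma_\iM^2){=}O(\sqrt{\gamma_\iP})$, the linear term of the second integral is $(1{-}z)k_\iM k_\iP/(\gamma_\iP(\gamma_\iM{+}\gamma_\iP)){=}{-}i\theta\sqrt m{+}\theta^2/2{+}o(1)$, the quadratic term yields $\theta^2 k_\iP/(2\gamma_\iM){+}o(1)$, and the cubic remainder is $O(m^{-1/2})$, giving the limit $\exp({-}\theta^2(1{+}k_\iP/\gamma_\iM)/2)$ as required. The only gain of the paper's organisation is generality (arbitrary $L_\iM$ with finite third moment and arbitrary elongation distributions, visible in the extra term $k_\iP\int\P(F_{L_\iM}{\ge}|E_{\iP,1}{-}E_{\iP,2}{+}x|)\diff x$ of the limiting variance); the gain of yours is that the Markovian case is handled in a few lines from a completely explicit formula.
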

This proposition gives a Gaussian approximation result for the distribution of $P$ as  $\E(P){+}\sqrt{\E(P)}{\cal N}(\sigma)$. The result will hold for protein types with a large number of copies in the cell. 

The following proposition establishes a convergence in distribution when the death rate of mRNAs is going to infinity with the constraint that the transcription rate $k_\iP$ is sufficiently large  so that each mRNA produces a fixed average number of proteins.
\begin{prop}\label{NBD}
If  the gene is always active,  then,  for $a{>}0$, the relation 
  \[
  \lim_{\substack{\gamma_\iM\to {+}\infty\\k_\iP/\gamma_\iM=a}} P=N(a),
  \]
holds for the convergence in distribution,  where $N(a)$ is random variable whose  distribution is given by, for $n{\in}\N$, 
\[
\P(N(a){=}n)=\frac{1}{(a{+}1)^{k_\iM/\gamma_\iP}} \frac{\Gamma(k_\iM/\gamma_\iP{+}n)}{n!\,\Gamma(k_\iM/\gamma_\iP)}\left(\frac{a}{1{+}a}\right)^n.
\]
where $\Gamma$ is the classical Gamma function.
\end{prop}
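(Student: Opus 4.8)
The plan is to push the scaling constraint $k_{\iP}=a\gamma_{\iM}$ through the explicit formula for the generating function of $P$. Since the gene is always active, Proposition~\ref{CorGenP} gives $\E(z^{P})$ in the form \eqref{GenPM}, and I would evaluate the limit $\gamma_{\iM}\to+\infty$ (with $k_{\iM}$, $\gamma_{\iP}$, $z\in(0,1)$ fixed) directly in the two integrals of that exponent. The right change of variables is $u=s/\gamma_{\iM}$ in the first integral, so that $e^{-\gamma_{\iM}u}\,\diff u=e^{-s}\,\diff s/\gamma_{\iM}$, and $w=t/\gamma_{\iM}$ in the second, so that $\gamma_{\iM}e^{-\gamma_{\iM}w}\,\diff w=e^{-t}\,\diff t$; with $k_{\iP}=a\gamma_{\iM}$ one has, moreover, $k_{\iP}(1-e^{-\gamma_{\iP}s/\gamma_{\iM}})=a\gamma_{\iM}(1-e^{-\gamma_{\iP}s/\gamma_{\iM}})\to a\gamma_{\iP}s$ as $\gamma_{\iM}\to+\infty$, and similarly for the $t$-term.

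After these substitutions, the first integral in \eqref{GenPM} carries an extra factor $1/\gamma_{\iM}$ in front of a quantity bounded by $\int_{\R_+}e^{-s}\,\diff s=1$ (the bracket $1-e^{-\cdot}$ is at most $1$), hence the whole first term of the exponent tends to $0$ and can simply be discarded. For the second integral I would invoke dominated convergence: its integrand converges pointwise to $\bigl[1-e^{-(1-z)a\,t\,e^{-\gamma_{\iP}u}}\bigr]e^{-t}$, and, using $1-e^{-x}\le\min(1,x)$ together with $\gamma_{\iM}(1-e^{-\gamma_{\iP}t/\gamma_{\iM}})\le\gamma_{\iP}t$, it is dominated, uniformly in $\gamma_{\iM}$, by $\min\bigl(1,(1-z)a\,t\,e^{-\gamma_{\iP}u}\bigr)e^{-t}$, which is integrable on $\R_+^2$ (integrating out $u$ leaves a factor of order $1+(\log((1-z)at))^{+}$, integrable against $e^{-t}\,\diff t$). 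The limiting double integral is then elementary: integrating in $t$ gives $\int_{\R_+}\tfrac{(1-z)a\,e^{-\gamma_{\iP}u}}{1+(1-z)a\,e^{-\gamma_{\iP}u}}\,\diff u$, and the substitution $v=e^{-\gamma_{\iP}u}$ turns this into $\tfrac1{\gamma_{\iP}}\log\bigl(1+a(1-z)\bigr)$. Collecting the two terms,
\[
\lim_{\substack{\gamma_{\iM}\to+\infty\\ k_{\iP}/\gamma_{\iM}=a}}\E\bigl(z^{P}\bigr)=\exp\!\Bigl(-\tfrac{k_{\iM}}{\gamma_{\iP}}\log\bigl(1+a(1-z)\bigr)\Bigr)=\bigl(1+a(1-z)\bigr)^{-k_{\iM}/\gamma_{\iP}},\qquad z\in(0,1).
\]

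Finally I would identify this limit with the generating function of $N(a)$. Writing $r=k_{\iM}/\gamma_{\iP}$ and using $\sum_{n\ge0}\tfrac{\Gamma(r+n)}{n!\,\Gamma(r)}x^{n}=(1-x)^{-r}$ for $|x|<1$ with $x=az/(1+a)$, one gets $\sum_{n\ge0}\P(N(a){=}n)\,z^{n}=(1+a)^{-r}\bigl(1-az/(1+a)\bigr)^{-r}=\bigl(1+a(1-z)\bigr)^{-r}$, which matches the display above. Since the limit is a genuine probability generating function, continuous at $z=1$ with value $1$, the continuity theorem for generating functions of $\N$-valued random variables (pointwise convergence on $(0,1)$ extends to $[0,1)$ by analyticity) then yields $P\Rightarrow N(a)$. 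The one step that requires real care is the dominated-convergence argument for the double integral: one must produce a $\gamma_{\iM}$-free integrable majorant, and the delicate part there is the $u$-integration, which is controlled only through the $e^{-\gamma_{\iP}u}$ factor sitting inside the $\min(1,\cdot)$ bound — everything else is routine. (As a check, $N(a)$ has mean $a\,k_{\iM}/\gamma_{\iP}$, consistent with $\E(P)=k_{\iM}k_{\iP}/(\gamma_{\iM}\gamma_{\iP})$ of Proposition~\ref{PaulProp} when $\delta_{+}=1$ and $k_{\iP}=a\gamma_{\iM}$.)
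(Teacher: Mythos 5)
Your proof is correct, and it reaches the paper's intermediate formula $\E(z^{N(a)})=(1+a(1-z))^{-k_\iM/\gamma_\iP}$ by essentially the same mechanism: a dominated-convergence passage to the limit in an explicit generating function, followed by the negative-binomial identification via the binomial series. The organizational difference is that the paper simply cites Proposition~\ref{Long2Prop} (which performs this limit for the general model with arbitrary elongation times, starting from Proposition~\ref{GenP}) and then sets $E_\iP\equiv 0$, whereas you take the limit directly in the Markovian formula~\eqref{GenPM} of Proposition~\ref{CorGenP}. Your route is self-contained and has the merit of making explicit the two points the paper leaves implicit: that the first integral of~\eqref{GenPM} is $O(1/\gamma_\iM)$ and can be discarded, and the construction of a $\gamma_\iM$-free integrable majorant $\min\bigl(1,(1-z)a\,t\,e^{-\gamma_\iP u}\bigr)e^{-t}$ for the second (the paper's proof of Proposition~\ref{Long2Prop} instead controls the analogous term by a Cauchy--Schwarz estimate as in Proposition~\ref{Long1Prop}). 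What you give up is generality: the paper's detour through Proposition~\ref{Long2Prop} yields the Neyman-type limit for arbitrary $E_\iP$ at no extra cost, with the negative binomial as the special case $E_\iP\equiv 0$. The final identification via $\sum_{n}\frac{\Gamma(r+n)}{n!\,\Gamma(r)}x^n=(1-x)^{-r}$ is exactly Relations~\eqref{NBgen} and~\eqref{NBdist} of the appendix, and your appeal to the continuity theorem for generating functions is sound since the limit is continuous at $z=1$ with value $1$.
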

The distribution of $N(a)$ is the {\em negative binomial} distribution with parameters $k_\iM/\gamma_\iP$ and $a$. See Chapter~5 of Johnson et al.~\cite{Johnson}. This distribution is frequently mentioned in the approximations of the literature, Bokes et al.~\cite{Bokes}, Friedman et al.~\cite{Friedman} and Shahrezaei and Swain~\cite{Swain2} for example. It is also sometimes used to fit measurements of concentration of proteins. Its advantage compared to a simple Poisson distribution may be due to the fact that it is more flexible with its two parameters.
\begin{proof}
This is a straightforward application of Proposition~\ref{Long2Prop} of Section~\ref{Sec-AppP}.  The elongation time of proteins is null, i.e. $E_\iP{\equiv}0$, so that the generating function of the limit $N(a)$ is given
\[
\E\left(z^{N(a)}\right)=\exp\left({-}k_\iM\int_{0}^{+\infty}\frac{\left( 1{-}z\right)ae^{{-}\gamma_\iP u}}{1{+}\left( 1{-}z\right)a e^{{-}\gamma_\iP u}}\diff u\right)
=\left(\frac{1}{1{+}a{-}az}\right)^{\kappa},
\]
with  $\kappa{=}{k_\iM}/{\gamma_\iP}$.  Relations~\eqref{NBgen} and~\eqref{NBdist} of the appendix give the representation of this distribution.
\end{proof}
\section{A  General Stochastic Model}\label{c-model}
We introduce  in this section the  stochastic processes which will be studied for the analysis of the time evolution of the number of proteins.

The distributions of lifetimes of mRNAs and proteins are general, in particular they are not assumed to be exponentially distributed as in the Markovian model of Section~\ref{MarkSec}. The general distributions of the duration of elongations of mRNAs and proteins are also incorporated in the  stochastic model. Once the polymerase/ribosome is bound to the gene/mRNA,  the elementary components, nucleotides for mRNAs and amino-acids for proteins, are progressively added to build these components.  As it can be seen in Figure~\ref{FigSize} in the appendix, the number of these elements can vary from ${\sim}20$ elements up to several thousands depending on the gene considered. This, of course, incurs  delays which cannot be really neglected. This presentation is an extension of the framework of Fromion et al.~\cite{Fromion}. See also Leoncini~\cite{Leoncini}.

As it will be seen, a representation of the {\em equilibrium} of the protein production process in terms of  marked Poisson point processes is established.

Contrary to a Markovian analysis, the resolution of the linear system of balance equations for the equilibrium, is not necessary in this case.  More general assumptions on the distributions of some of the steps of the process can therefore be considered as mentioned before. It should be noted  however, that even in the Markovian context of the classical three-step model,  this representation simplifies much the analysis of the equilibrium.  For example, asymptotic results concerning the {\em equilibrium distribution} of the number of proteins are simpler to obtain than the approach using analytic tools with hypergeometric functions. See  Bokes et al.~\cite{Bokes} and Section~\ref{AppLongSec} for example. 

\subsection{Gene activation} \label{GenActSec}
The  state of the gene for the type of protein is either active or inactive. It is activated at rate $k_+$ and inactivated at rate $k_-$.  Time duration of these states are assumed to be exponentially distributed. 

The process of activation of the gene  is a Markov process $(I(t), t\in\R)$ with values in $\{0,1\}$ and whose $Q$-matrix $R_I$ is given by $r_I(0,1){=}k_+$ and $r_I(1,0){=}k_-$.  Without loss of generality, it will be assumed to be stationary. For this reason $(I(t))$ is defined on the whole real line,  in some sense, the activation/inactivation process has started at $t{=}{-}\infty$. As it will be seen, this is a convenient formulation to describe properly the equilibrium of the protein production process.

This process can be represented as a marked point process $((t_i,X_i), i{\in}\Z)$ where $(t_i)$ is the increasing sequence of instants of change of state of the gene, with the convention that $t_0{\leq} 0{<}t_1$. The marks $(X_i)$ in the set $\{0,1\}$ indicate the state of the gene at these instants. 
In particular, for $i{\in}\Z$,  conditionally  on the event $\{X_i{=}0\}$, (resp., on the event  $\{X_i{=}1\}$), the random variable $t_{i+1}{-}t_i$ is exponentially distributed with rate $k_+$, (resp., with rate $k_-$).  Because of our assumption on stationarity of the activation/deactivation process, the sequence of activation instants of the gene, $(t_i\ind{X_i{=}1})$,  is a stationary renewal  point process. See Section~11.5 of Robert~\cite{Robert}. 
We denote
\[
\delta_+{=}\frac{k_+}{\Lambda} \text{ and } \Lambda{\steq{def}}k_+{+}k_-,
\]
in particular $\P(I(t){=}1){=}\delta_+$, for all $t{\in}\R$.

\subsection{Transcription and Translation: a Fundamental Poisson  Process}
When a gene is active,  a {\em polymerase} can be bound to it according to a Poisson process with rate ${{k_{\iM}}}{>}0$. The {\em transcription phase} can start. An mRNA is then being built up by a process of aggregation of a specific sequence of nucleotides which are present in the cytoplasm. This is {\em the elongation phase} of an mRNA. Its duration has a distribution ${E_{\iM}}(\diff x)$ on $\R_+$. Similarly a {\em ribosome} is bound to a given mRNA according to a Poisson process with parameter ${k_{\iP}}$. The {\em translation phase} starts. A chain of {\em amino-acids} is created as the ribosome progresses on the mRNA. This is the {\em elongation phase} of a protein. Its duration has a distribution ${E_{\iP}}(\diff x)$ on $\R_+$. Both mRNAs and proteins have a finite lifetime. The distribution of the lifetime of an mRNA (resp., a protein) is  ${L_{\iM}}(\diff y)$ (resp.,  ${L_{\iP}}(\diff y)$) on $\R_+$. Throughout the paper, it is assumed that all these distribution have a finite first moment. 

\bigskip
\noindent
{\bf The Fundamental Poisson process.} These two steps are represented by  a marked  Poisson point process. See Section~\ref{PoisSec} of the Appendix.
\begin{equation}\label{PoisP}
  {\cal P} \steq{def} (u_n,E_{\iM,n},L_{\iM,n},{\cal N}_{\iP,n}, n{\in}\Z),
\end{equation}
on the state space
\[
{\cal S}\steq{def}\R{\times}{\R_+^2}{\times}{\cal M}_p(\R{\times}{\R_+^2}),
\]
where ${\cal M}_p(\R{\times}{\R_+^2})$ is the space of Radon point measures on $\R{\times}{\R_+^2}$, see Dawson~\cite{Dawson} for example, and 
\begin{enumerate}
\item $(u_n)$ is a Poisson process on $\R$ with parameter ${k_{\iM}}$, this is the sequence of possible instants when a polymerase can be bound to the gene. For $n{\in}\Z$, $u_n$ is indeed an instant of binding only if the gene is active at time $u_n$. The sequence $(u_n, n{\in}\Z)$ is assumed to be non-decreasing and indexed with the convention $u_0{\le}0{<}u_1$.
\item   $(E_{\iM,n}, n{\in}\Z)$ is the sequence of the duration of the elongation for these mRNAs, it is an i.i.d.  sequence whose common distribution is ${E_{\iM}}$. 
\item   $(L_{\iM,n})$ is the i.i.d. sequence of associated lifetimes of these mRNAs, its common law is ${L_{\iM}}$. 
\item $({\cal N}_{\iP,p})$ is an i.i.d. sequence with the same distribution as ${\cal N}_\iP$, a marked Poisson point process on $\R{\times}\R_+^2$ with intensity ${k_{\iP}}\diff x{\otimes}E_\iP{\otimes}{L_{\iP}}$.

  For $n{\in}\Z$, ${\cal N}_{\iP,n}{=}(x_{\iP,j}^n,E_{\iP,j}^n, L_{\iP,j}^n,j{\in}\Z)$ is the process associated with the protein production for the mRNA with index $n$. 
  \begin{enumerate}
  \item $(x_{\iP,j}^n,j{\in}\Z)$ is a Poisson process on $\R$ with parameter ${k_{\iP}}$, this is the sequence of possible instants when a ribosome can be bound to the mRNA with index  $n$. Only the instants occurring during the lifetime of this mRNA matter. 
    \item $(E_{\iP,j}^n)$ is the sequence of the duration of the elongation of proteins by this mRNA, it is an i.i.d.  sequence whose common distribution is $E_{\iP}$. 
    \item   $(L_{\iP,j}^n)$ is the i.i.d. sequence of associated lifetimes of associated proteins, its common law is ${L_{\iP}}$.
  \end{enumerate}
\end{enumerate}
The coordinates of  ${\cal P}(\diff u,\diff v, \diff w,\diff m)$ are associated to  the potential instants $u$ of the binding of a polymerase on the gene, the variable $v$  is the elongation time of the mRNA and the variable $w$ is its lifetime. The last component $m$ is the marked point process  associated to the protein production  process of this mRNA. The coordinates of $m(\diff x, \diff y, \diff z)$ are, $x$  is associated to binding instants  of ribosomes on this mRNA, $y$ is the elongation time of the corresponding protein  and $z$ its lifetime.

The intensity measure of  ${\cal P}$ is
\begin{equation}\label{nup}
\nu_{\cal P}{\steq{def}}{k_{\iM}}\diff u{\otimes}E_{\iM}{\otimes}{L_{\iM}}{\otimes}Q_\iP,
\end{equation}
where $Q_\iP$, $Q_{\iP}(\diff m){=}\P({\cal N}_{\iP}{\in}\diff m)$,  is the distribution on ${\cal S}$ of the random variable ${\cal N}_{\iP}$, a Poisson point process on $\R{\times}\R_+^2$ with intensity ${k_{\iP}}\diff x{\otimes}E_\iP{\otimes}{L_{\iP}}$.

Equivalently, if $F$ is a non{-}negative Borelian function on ${\cal S}$, then
\begin{align*}
\int_{\cal S}F(u,v,w,m)&\nu_{\cal P}(\diff u, \diff v, \diff w, \diff m)\\
&=\int_{\cal S}F(u,v,w,m) {k_{\iM}}\diff u E_{\iM}(\diff v){L_{\iM}}(\diff w) Q_\iP(\diff m)\\
&=\int_{\R}\E( F(u,E_1,L_1,{\cal N}_\iP)) {k_{\iM}}\diff u,
\end{align*}
where, with a slight abuse of notations, $E_1$, $L_1$ and ${\cal N}_{\iP}$ are independent random variables with respective distributions $E_{\iM}(\diff v)$, $L_{\iM}(\diff w)$ and $Q_\iP(\diff m)$.


\bigskip
\begin{figure}[t]
\setlength{\unitlength}{1744sp}
  \begin{picture}(10794,5040)(1516,-4594)
{\thicklines
\put(2395,-1328){\circle{1026}}
\put(2395,-4058){\circle{1026}}
\put(10171,-1411){\circle{1026}}
\put(11774,-1423){\circle{1026}}
\put(5446,-1366){\circle{1026}}
\put(7049,-1378){\circle{1026}}
}
\put(1891,-3526){\vector( 0, 1){1620}}
\put(7786,-1411){\vector( 1, 0){1710}}
\put(10756,-1411){\line( 1, 0){405}}
\put(11926,-2176){\vector( 0,-1){1620}}
\put(2881,-1861){\vector( 0,-1){1620}}
\put(3061,-1366){\vector( 1, 0){1710}}
\put(6031,-1366){\line( 1, 0){405}}
\put(7201,-2131){\vector( 0,-1){1620}}

\put(2106,-4111){${\scriptstyle I{=}0}$}%
\put(2106,-1421){${\scriptstyle I{=}1}$}%
\put(3646,-1151){$\scriptstyle {k_{\iM}}$}%
\put(8281,-1151){$\scriptstyle {k_{\iP}} M$}%
\put(1966,299){\tiny{State}}%
\put(1966,29){\tiny{of gene}}%
\put(6881,-1461){$\scriptstyle M$}%
\put(11656,-1501){$\scriptstyle P$}%
\put(5816,-2716){{\tiny Decay of}}%
\put(5816,-3016){{\tiny mRNAs}}%
\put(7516,-2716){$\scriptstyle {L_{\iM}}(\diff w)$}%
\put(10541,-2671){{\tiny Decay of}}%
\put(10541,-2971){{\tiny Proteins}}%
\put(12241,-2671){$\scriptstyle {L_{\iP}}(\diff z)$}%
\put(6066,-1681){$\scriptstyle {+}1$}%
\put(10846,-1771){$\scriptstyle {+}1$}%
\put(7111,-4176){$\scriptstyle {-}1$}%
\put(11791,-4131){$\scriptstyle {-}1$}%
\put(5721,-706){$\scriptstyle{{E_{\iM}}(\diff v)}$}%
\put(10406,-706){$\scriptstyle {E_{\iP}}(\diff y)$}%
\put(5471,299){\tiny{Elongation}}%
\put(5471, 29){\tiny{of mRNAs} }%
\put(3196,299){\tiny{Binding of}}%
\put(3196,29){\tiny{polymerases}}%
\put(3196,-251){\tiny{to gene}}%
\put(10286,299){\tiny{Elongation}}%
\put(10286, 29){\tiny{of proteins} }%
\put(7696,299){\tiny{Binding of}}%
\put(7696,29){\tiny{ribosomes}}%
\put(7696,-251){\tiny{to mRNAs}}%
\put(1431,-2716){$\scriptstyle k_+$}%
\put(3031,-2716){$\scriptstyle k_-$}%
\end{picture}%
\caption{A General Model for Protein Production.}\label{FigExt}
\end{figure}
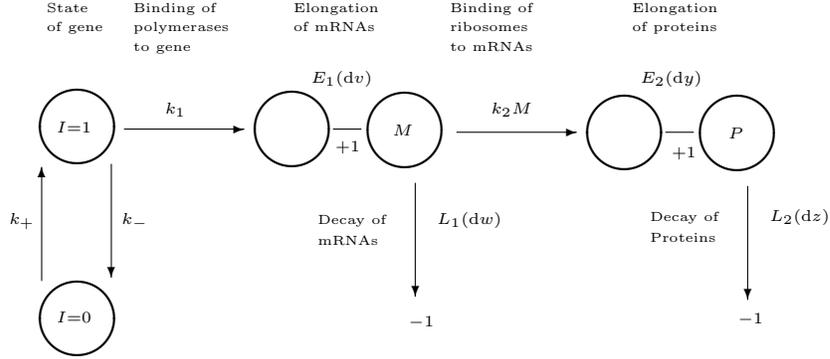

\medskip
\noindent
{\sc Examples.}
If the gene is always active and the system starts empty at time $0$, the formula
\[
\sum_{n{\ge}1}  \ind{u_n{+}E_{\iM,n}\leq t}=\int_{\R_+{\times}\R_+^2} \ind{u{+}v\leq t} {\cal P}(\diff u,\diff v, \diff w)
\]
gives the total number of mRNAs created up to time $t$.
The following notation has been used
\[
\int_{\cal S} f(u,v,w) {\cal P}(\diff u,\diff v,\diff w)=\int_{{\cal S}} f(u,v,w) {\cal P}(\diff u,\diff v, \diff w,\diff m),
\]
if $f$ is non-negative Borelian function on $\R{\times}\R_+^2$, i.e.\ ${\cal P}(\diff u,\diff v, \diff w)$ is the marginal distribution of ${\cal P}$ with respect to the three first coordinates, i.e.\ it stands for ${\cal P}(\diff u, \diff v, \diff w, {\cal M}_p(\R{\times}\R_+^2))$. 

Similarly, still with a permanently active gene, 
\[
\sum_{n{\ge}1}  \ind{u_n{+}E_{\iM,n}\leq t <  u_n{+}E_{\iM,n}{+}L_{\iM,n}}=\int_{\R_+\times\R_+^2} \ind{u{+}v\leq t< u{+}v{+}w} {\cal P}(\diff u,\diff v, \diff w)
\]
is the number of mRNAs present at time $t{\ge}0$. 

If we include the gene dynamics into the formula, the number of messengers present at time $t$ is
\[
\sum_{n{\ge}1}  I(u_n)\ind{u_n{+}E_{\iM,n}\leq t< u_n{+}E_{\iM,n}{+}L_{\iM,n}}=\int_{\R_+\times\R_+^2} I(u)\ind{u{+}v\leq t< u{+}v{+}w} {\cal P}(\diff u,\diff v, \diff w)
\]
If the mRNA with index $n$ is created at time $v{=}u_n{+}E_{\iP,n}$ and if its lifetime is $w$  then
\[
{\cal N}_{\iP,n}([v,v{+}w){\times}\R_+^2)= \int_{[v,v{+}w)\times\R_+^2} \,{\cal N}_{\iP,n}(\diff x,\diff y,\diff z) 
\]
is the total number of proteins created by such an mRNA during its lifetime.

\bigskip

\noindent 
{\bf Comments on the Model}\\
As a mathematical model, the above representation simplifies several aspects of the protein production process but the key steps of protein production are included. Furthermore, its main parameters have a clear biological interpretation.
\begin{enumerate}
\item[]
\item {\em Expression Rates}. The parameter ${k_{\iM}}$ is the {\em affinity} of polymerases, or the {\em transcription rate} for the gene considered. The larger this rate is, the more likely polymerases will bind to this gene rather than to some other genes with a lower affinity. The same remark applies for ${k_{\iP}}$, the affinity of ribosomes, the {\em translation rate}, for the corresponding mRNAs. 

It should be stressed that if mRNAs and  proteins are specific to a gene,  polymerases and  ribosomes are not. They can be used to create any type of mRNA and any type protein respectively. Polymerases and ribosomes can be seen as a kind of resources of the cell. The genes are, in some sense, competing to have access to polymerases.  This approach with the parameters $k_\iM$ and $k_\iP$ considers that a specific gene receives a fixed portion of resources of the cell in terms of polymerases and ribosomes. In particular, it does not include the competition for access to resources between the different genes. Alternative stochastic models have to be considered to analyze these aspects. 
\item[]
\item {\em Elongation Times}. An mRNA being a sequence of $N$ nucleotides, the elongation time ${E_{\iM}}$ of an mRNA is  represented as a sum of $N$ independent random variables, each of them corresponding to the duration of time required to get each of the nucleotides within the cytoplasm. A similar situation holds  for $E_\iP$ for the number of amino-acids of the protein. See Figure~\ref{FigSize} in the appendix. 
\item[]
\item\label{LFTPD} {\em Lifetimes}. The lifetimes of mRNAs are generally smaller than the lifetimes of proteins, mRNAs can be indeed degraded by other RNAs in the cytoplasm after a couple of minutes. This is one of the mechanisms used by the cell to regulate the protein production process. The situation is quite different for proteins, their lifetime may exceed the duration of time for cell division, of the order of 40mn for bacterium. At cell division, macro-molecules are assumed to be allocated at random into one of the two daughter cells.  In this case, if one follows a specific path of cells in the associated binary tree of successive cell divisions,  proteins ``vanish'' simply because they have been allocated in the daughter cell. This is referred to as the {\em dilution} phenomenon.
\item[]
  \item {\em Multiple Copies of the Gene}. In a favorable environment, cells grow and divide very quickly. In particular a copy of the DNA is always on the way in the cell, and, therefore, another copy of the gene is present at some stage. In principle this may double the transcription rate of the gene when the corresponding part of the DNA has been duplicated. The rate is  reduced at the  division time.  This aspect has been omitted in our model, mainly for the sake of simplicity. The main change would be that the process $(I(t))$ should have at least three values, depending on the number of copies of the gene which are active at a given time. See Paulsson~\cite{Paulsson} in a Markovian context. 
\end{enumerate}
The classical Markovian model of Section~\ref{MarkSec} corresponds to the case when the elongation times of mRNAs and proteins are null and that their respective lifetimes are exponentially distributed. 

\medskip
\noindent
{\bf How to Handle Functionals of ${\cal P}$.}\\
We will use repeatedly several possible representations in the calculations of expected values of functionals of ${\cal P}$. We give a quick sketch of the general approach. 
Let $f$ be a non{-}negative Borelian function on $\R{\times}\R_+^2{\times}\R{\times}\R_+^2$ and let, for $(u,v,w,m){\in}{\cal S}$,
\[
F(u,v,w,m)\steq{def}\int_{\R{\times}\R_+^2} f(u,v,w,x,y,z)m(\diff x, \diff y, \diff z).
\]
When the gene is always active, most of random variables of interest can be expressed under the form
\[
X{=}\croc{{\cal P},F}{\steq{def}}\int_{\cal S}F(u,v,w,m){\cal P}(\diff u, \diff v, \diff w, \diff m).
\]
As it will be seen, the variable activity of the gene adds some technical complications.
\begin{enumerate}
\item Calculation of the mean. Relation~\eqref{PoisMean} of the appendix gives that
\begin{align*}
  \E(X)&=\int_{\cal S}F(u,v,w,m)\nu_{\cal P}(\diff u, \diff v, \diff w, \diff m)\\
   &{=}\int_{\cal S}F(u,v,w,m){k_{\iM}}\diff u E_{\iM}(\diff v)L_{\iM}(\diff w)Q_\iP(\diff m){=}k_{\iM}\int_{\R} \E(F(u,E_1,L_1,{\cal N}_{\iP}))\diff u,
\end{align*}
with the same slight abuse of notations as before. For $(u,v,w){\in}\R_+^3$, since the Poisson process ${\cal N}_{\iP}$ has intensity measure ${k_{\iP}}\diff x{\otimes}E_\iP{\otimes}{L_{\iP}}$ on $\R{\times}\R_+^2$,
\begin{multline*}
  \E(F(u,v,w,{\cal N}_{\iP}))=\E\left(\int_{\R{\times}\R_+^2} f(u,v,w,x,y,z){\cal N}_{\iP}(\diff x, \diff y, \diff z)\right)\\
  =\int_{\R{\times}\R_+^2} f(u,v,w,x,y,z){k_{\iP}}\diff xE_\iP(\diff y)L_{\iP}(\diff z)
    =k_{\iP}\int_{\R} \E(f(u,v,w,x,E_\iP,L_\iP))\diff x,
\end{multline*}
hence
\[
  \E(X)=k_{\iM}k_{\iP}\int_{\R^2} \E(f(u,E_1,L_1,x,E_2,L_2))\diff u \diff x.
  \]
For  calculation of variances, the approach is similar, via Relation~\eqref{PoisVar} of the appendix and an additional trick. See the proof of Relation~\eqref{VarM} for example. 
\item[]
\item Exponential Moments of $X$. For more precise results on the distribution of $X$, one has to express $\E(\exp({-} \xi X))$, for some $\xi{\ge}0$, the {\em Laplace transform} of $X$ at $\xi$. Proposition~\ref{Pois-lapois}  of the appendix gives the relation
\begin{align*}
  \E\left(e^{{-}\xi \croc{{\cal P},F}}\right)&=\exp\left({-}\int_{\cal S} \left(1{-}e^{{-}\xi F(u,v,w,m)}\right)\nu_{\cal P}(\diff u, \diff v, \diff w, \diff m)\right)\\
&=\exp\left({-}\xi \int_{\cal S} \left(1{-}e^{{-}\xi F(u,v,w,m)}\right){k_{\iM}}\diff u E_{\iM}(\diff v)L_{\iM}(\diff w)Q_\iP(\diff m)\right)\\
&=\exp\left({-}k_\iM\int_{\R{\times}\R_+^2} \left(1{-}\E\left(e^{{-}\xi F(u,v,w,{\cal N}_{\iP})}\right)\right) \diff uE_{\iM}(\diff v)L_{\iM}(\diff w)\right).
\end{align*}
By using again Proposition~\ref{Pois-lapois} for the Poisson process ${\cal N}_\iP$, with the same arguments as before, we get, for $(u,v,w){\in}\R{\times}\R_+^2$,
\begin{align*}
\E\left(e^{{-}\xi F(u,v,w,{\cal N}_{\iP})}\right)&= \exp\left({-}\int_{\R}\left(1{-}\E\left(e^{{-}\xi f(u,v,w,x,y,z)}\right)\right)\,{k_{\iP}}\diff x E_\iP(\diff y)L_\iP(\diff z)\right)\\
&=\exp\left({-}{k_{\iP}}\int_{\R}\left(1{-}\E\left(e^{{-}\xi f(u,v,w,x,E_2,L_2)}\right)\right)\,\diff x\right).
\end{align*}
This expression for  $\E(\exp({-}X))$ though explicit is not simple to handle. It is nevertheless usable to get several limit results in Section~\ref{Sec-AppP}. The trick of using independent random variables $E_a$, $L_a$, $a{\in}\{\iM,\iP\}$ or their distributions $E_a(\diff s)$, $L_a(\diff t)$ is used throughout this paper. Its main advantages are of simplifying the calculations {\em and } the expressions of the formulas obtained.
\end{enumerate}

We state a simple result on some invariance properties of the Poisson process ${\cal P}$. For  $m{\in}{\cal M}_p(\R{\times}\R_+^2)$ and $f$ is a Borelian function on $\R{\times}\R_+^2$, the measure $\check{m}$ is defined by 
\[
\croc{\check{m},f}{=}\int_{\R{\times}\R_+^2} f(x,y,z) \check{m}(\diff x, \diff y,\diff z)=\int_{\R{\times}\R_+^2} f({-}x,y,z) {m}(\diff x, \diff y,\diff z).
\]
\begin{prop}\label{PoisSym}
If ${\cal P}$ is the marked Poisson point process on ${\cal S}$ defined by Relation~\eqref{PoisP} and $F$ is some non{-}negative Borelian function on $\{0,1\}{\times}{\cal S}$ then
  \[
  \int_{\cal S}  F(I(u), u,v,w,m){\cal P}(\diff u, \diff v, \diff w, \diff m)\steq{dist.}  \int_{\cal S} F(I({-}u),{-}u,v,w,\check{m}){\cal P}(\diff u, \diff v, \diff w, \diff m)
  \]
\end{prop}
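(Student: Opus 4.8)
The plan is to produce an explicit measurable transformation of the state space $\mathcal{S}$ which leaves the intensity $\nu_{\mathcal{P}}$ invariant and which, applied to $\mathcal{P}$, converts the left-hand integral into the right-hand one. Define $\Phi\colon\mathcal{S}\to\mathcal{S}$ by $\Phi(u,v,w,m)=(-u,v,w,\check m)$; this is measurable, $m\mapsto\check m$ being continuous for the vague topology. Two facts are needed: first, that $\Phi$ pushes $\nu_{\mathcal{P}}$ forward onto itself, so that $\Phi(\mathcal{P})\steq{dist.}\mathcal{P}$; second, that inserting $\Phi$ inside the stochastic integral against $\mathcal{P}$ reproduces exactly the integrand $F(I(-u),-u,v,w,\check m)$. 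The crucial point is that the sign change $u\mapsto-u$ carried by $\Phi$ cancels the $-u$ appearing in the argument $I(-u)$ on the right-hand side, so that no reversibility of the gene process $(I(t))$ is required.

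For the first fact, recall from~\eqref{nup} that $\nu_{\mathcal{P}}=k_{\iM}\diff u\otimes E_{\iM}\otimes L_{\iM}\otimes Q_{\iP}$. The elongation and lifetime coordinates $v,w$ are untouched by $\Phi$, and Lebesgue measure on $\R$ is invariant under $u\mapsto-u$; so it remains only to check that $Q_{\iP}$, the law of the Poisson process $\mathcal{N}_{\iP}$ on $\R\times\R_+^2$ with intensity $k_{\iP}\diff x\otimes E_{\iP}\otimes L_{\iP}$, is invariant under $m\mapsto\check m$. This I would obtain from Proposition~\ref{Pois-lapois}: for a non-negative Borel function $f$ on $\R\times\R_+^2$, the Laplace functional of $\check{\mathcal{N}}_{\iP}$ at $f$ equals the Laplace functional of $\mathcal{N}_{\iP}$ at $(x,y,z)\mapsto f(-x,y,z)$, that is
\[
\E\bigl(e^{-\croc{\check{\mathcal{N}}_{\iP},f}}\bigr)=\exp\left(-\int_{\R\times\R_+^2}\bigl(1-e^{-f(-x,y,z)}\bigr)\,k_{\iP}\diff x\,E_{\iP}(\diff y)\,L_{\iP}(\diff z)\right),
\]
and the change of variable $x\mapsto-x$ in the integral identifies this with the Laplace functional of $\mathcal{N}_{\iP}$ at $f$. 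Hence $\check{\mathcal{N}}_{\iP}\steq{dist.}\mathcal{N}_{\iP}$, so $Q_{\iP}$ is $\Phi$-invariant and $\Phi_*\nu_{\mathcal{P}}=\nu_{\mathcal{P}}$; by the mapping theorem for Poisson point processes, $\Phi(\mathcal{P})$ is Poisson with intensity $\nu_{\mathcal{P}}$, i.e. $\Phi(\mathcal{P})\steq{dist.}\mathcal{P}$. Since $\mathcal{P}$ is independent of the gene process $(I(t))$, so is $\Phi(\mathcal{P})$, and therefore the pairs $(I,\Phi(\mathcal{P}))$ and $(I,\mathcal{P})$ have the same law.

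Finally I would combine the two steps. Setting $G(u,v,w,m)\steq{def}F(I(u),u,v,w,m)$, the elementary change-of-variable identity for the push-forward of a point measure gives
\[
\int_{\mathcal{S}}G(u,v,w,m)\,\Phi(\mathcal{P})(\diff u,\diff v,\diff w,\diff m)=\int_{\mathcal{S}}G(\Phi(u,v,w,m))\,\mathcal{P}(\diff u,\diff v,\diff w,\diff m),
\]
and $G(\Phi(u,v,w,m))=F(I(-u),-u,v,w,\check m)$, so that the right-hand side of the proposition is $\croc{\Phi(\mathcal{P}),G}$, while the left-hand side is $\croc{\mathcal{P},G}$. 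Applying the equality in law of $(I,\Phi(\mathcal{P}))$ and $(I,\mathcal{P})$ to the functional $(\,i,\,n\,)\mapsto\croc{n,F(i(\cdot),\cdot)}$ yields $\croc{\Phi(\mathcal{P}),G}\steq{dist.}\croc{\mathcal{P},G}$, which is the assertion. I expect the only slightly delicate point to be this last bookkeeping on the joint law: the integrand evaluates the independent process $I$ at the random atoms of the point process, so one must keep track of the pair $(I,\mathcal{P})$ rather than $\mathcal{P}$ alone; conditioning on a fixed realization of $I$, invoking $\Phi(\mathcal{P})\steq{dist.}\mathcal{P}$ there, and then integrating over the law of $I$ makes the argument rigorous.
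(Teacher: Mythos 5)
Your proof is correct and rests on the same underlying symmetry as the paper's: the law of the marked Poisson process ${\cal P}$ is invariant under the reflection $\Phi(u,v,w,m)=({-}u,v,w,\check m)$ because its intensity $\nu_{\cal P}$ is, and this, combined with the independence of $(I(t))$ and ${\cal P}$, transfers to the pair $(I,{\cal P})$. The one genuine point of divergence is that the paper lists the reversibility of $(I(t))$ as one of its three ingredients, whereas your bookkeeping shows it is not needed: the process $I$ is never time-reversed in the identity, it is merely evaluated at the reflected atoms ${-}u_n$, which is exactly what the functional $\Psi(i,n)=\croc{n,F(i(\cdot),\cdot)}$ applied to the pair $(I,\Phi({\cal P}))$ produces. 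Your version is therefore marginally more general --- it would apply verbatim to any stationary but not necessarily reversible activity process, although for the two-state chain at hand reversibility is automatic --- and the explicit verification via the Laplace functional that $Q_\iP$ is invariant under $m\mapsto\check m$, together with the final conditioning on a fixed realization of $I$, supplies details that the paper's three-bullet proof leaves implicit.
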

\begin{proof}
  This is a simple consequence of
  \begin{itemize}
  \item the reversibility of $(I(t))$. It is easily checked that $(I({-}t))$ is a Markov process with the same $Q$-matrix, and with same distribution at $t{=}0$ as $(I(t))$. See Kelly~\cite{Kelly} for example. 
  \item If $(t_n)$ is a Poisson process on $\R$ with rate $\lambda{>}0$, then $({-}t_n)$ is  a Poisson process with the same rate. 
\item The independence of $(I(t))$ and ${\cal P}$.
  \end{itemize}
  Consequently,  we have the identity
  \[
  (I(u_n), u_n,L_{\iM,n},{\cal N}_{\iP,n})\steq{dist} (I({-}u_n), {-}u_n,L_{\iM,n},\check{\cal N}_{\iP,n}),
  \]
and, therefore, our proposition. 
\end{proof}
\subsection{Convergence to Equilibrium}\label{ConvSec}
As before we denote by $M(t)$, (resp., $P(t)$) the number of mRNAs, (resp., of proteins)  at time $t{\ge}0$. It is assumed that the initial state is as follows: there are
\begin{itemize}
\item $M(0)$ mRNAS  with respective remaining lifetimes $L^0_{\iM,k}$, $k{=}1$,\ldots,$M(0)$;
\item  $P(0)$ proteins  with respective remaining lifetimes $L^0_{\iP,k}$, $k{=}1$,\ldots,$P(0)$. The marked Poisson process associated to the creation of proteins by the $k$th mRNA present at time $0$ is denoted by ${\cal N}_{{\iP},k}^{0}$.
\end{itemize}
For simplicity it is assumed that, initially, there are no mRNAs or proteins in their elongation phase. The proof of convergence in distribution does not change if the initial state includes components in their elongation phase.

The following theorem is the key result concerning the equilibrium distribution of the number of mRNAs and proteins.
\begin{enumerate}
\item It shows the convergence to equilibrium without having a Markovian framework which is, usually, the classical approach to prove such a convergence. As a consequence,  the distribution of elongation times, lifetimes of mRNAs and proteins can be general. 
\item The equilibrium of the number of mRNAs and of proteins are expressed in terms of  functionals of the marked Poisson process ${\cal P}$. The closed form expressions of averages and variances are obtained with this representation.

\end{enumerate}
\begin{theorem}\label{propEq}[Number of mRNAs and Proteins at Equilibrium]\label{theoCVMP}
As $t$ goes to infinity, the random variable $(M(t),P(t))$ converges in distribution to $(M,P)$ defined by 
\begin{equation}\label{EqEqM}
  M=
  \left(\int_{\cal S} I(u)
  \ind{v{\le}u{<}v{+}w}{\cal P}(\diff u,\diff v, \diff w)\right)
\end{equation}
and 
\begin{equation}\label{EqEqP}
  P=  \left(\displaystyle\int_{{\cal S}} I(u)
 \left(\int_{\R{\times}\R_+^2}\ind{\substack{x{+}v{\le}u{<}x{+}v{+}w\\y{\le} x{<}y{+}z}}m(\diff x,\diff y,\diff z)\right)
{\cal P}(\diff u,\diff v, \diff w,\diff m)\right),
\end{equation}
where ${\cal P}$ is the marked Poisson point process on ${\cal S}$ defined by Relation~\eqref{PoisP}. 
\end{theorem}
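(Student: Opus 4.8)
The plan is to establish this convergence by a coupling-from-the-past argument, carried out on a single probability space that carries the marked Poisson process $\mathcal{P}$ of~\eqref{PoisP} and the stationary gene process $(I(t),t{\in}\R)$, the two being independent. The structural fact that makes the whole approach work is that the model has no feedback: once the atoms of $\mathcal{P}$ and the trajectory of $(I(t))$ are fixed, distinct mRNAs, and distinct proteins, evolve without interacting, so that the number of mRNAs or of proteins present at a given time is an \emph{additive} functional of these data.

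First I would reduce to an empty initial condition. The $M(0)$ initial mRNAs and $P(0)$ initial proteins have a.s.\ finite residual lifetimes $L^0_{\iM,k}$, $L^0_{\iP,k}$, and during its finite residual life each initial mRNA triggers, through the point process ${\cal N}^0_{\iP,k}$, only a Poisson---hence a.s.\ finite---number of proteins, each with a finite lifetime. Hence there is an a.s.\ finite random time $\tau$ beyond which none of these ``old'' mRNAs and proteins, nor any protein produced by an old mRNA, is present; by the absence of feedback the mRNAs and proteins created after time $0$ from the atoms of $\mathcal{P}$ with $u_n{>}0$ and an active gene are unaffected by the old population, so $(M(t),P(t)){=}(M^{\mathrm e}(t),P^{\mathrm e}(t))$ for $t{\ge}\tau$, where $(M^{\mathrm e},P^{\mathrm e})$ is the process started empty at time $0$. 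Since $\P(\tau{>}t){\to}0$, it is enough to prove that $(M^{\mathrm e}(t),P^{\mathrm e}(t))$ converges in distribution to $(M,P)$.

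Next, for $T{>}0$ let $(M^{[-T]}(t),P^{[-T]}(t))_{t\ge -T}$ be the process started empty at $-T$ and driven by the restriction of $\mathcal{P}$ to $\{u>-T\}$ together with $(I(t))$. By additivity, $M^{[-T]}(0)=\int_{\cal S}I(u)\ind{-T<u\le0,\ u{+}v\le0<u{+}v{+}w}\mathcal{P}(\diff u,\diff v,\diff w)$, and $P^{[-T]}(0)$ is the analogous double integral, the inner indicator selecting the ribosome bindings that occur within the life span of their mRNA and the proteins still present at $0$. Raising $T$ only adds polymerase binding instants in $({-}T,0]$ and removes nothing, so $T\mapsto(M^{[-T]}(0),P^{[-T]}(0))$ is coordinatewise non-decreasing; its pointwise limit $(\widetilde M,\widetilde P)$ is obtained by dropping the constraint $u{>}-T$ in both integrals. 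From the computation-of-the-mean recipe of Section~\ref{c-model} and the standing assumption that $E_\iM,L_\iM,E_\iP,L_\iP$ have finite first moments, $\E(\widetilde M)=\delta_+k_{\iM}\E(L_{\iM})<{+}\infty$ and $\E(\widetilde P)=\delta_+k_{\iM}k_{\iP}\E(L_{\iM})\E(L_{\iP})<{+}\infty$; hence $\widetilde M,\widetilde P$ are a.s.\ finite and the monotone convergence $(M^{[-T]}(0),P^{[-T]}(0))\to(\widetilde M,\widetilde P)$ holds almost surely, hence in distribution.

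It remains to transfer the convergence and to identify the limit. Translation invariance of the Poisson process $(u_n)$ and stationarity of $(I(t))$ give $(M^{\mathrm e}(t),P^{\mathrm e}(t))\steq{dist.}(M^{[-t]}(0),P^{[-t]}(0))$ for every $t{\ge}0$, so $(M^{\mathrm e}(t),P^{\mathrm e}(t))$ converges in distribution to $(\widetilde M,\widetilde P)$, and with the reduction of the second paragraph $(M(t),P(t))\Rightarrow(\widetilde M,\widetilde P)$. Finally, applying Proposition~\ref{PoisSym}, i.e.\ the change $u\mapsto -u$, $m\mapsto\check m$, to the non-negative functional defining $(\widetilde M,\widetilde P)$ turns the constraint $u\le0$, $u{+}v\le0<u{+}v{+}w$ into $v\le u<v{+}w$ (and, for the protein, the time-reversed constraints into $x{+}v\le u<x{+}v{+}w$ and $y\le x<y{+}z$), at the price of replacing $I(u)$ by $I({-}u)$; since $(I({-}u))$ has the same law as $(I(u))$ by reversibility of the gene process and is independent of $\mathcal{P}$, the resulting variables have exactly the distributions~\eqref{EqEqM}--\eqref{EqEqP}. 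The step I expect to be most delicate is not any single estimate but the bookkeeping of these last two paragraphs---writing the finite-horizon counts as honest functionals of the truncated Poisson data (which rests on the no-feedback structure) and matching their monotone limit to the precise formulas~\eqref{EqEqM}--\eqref{EqEqP} through Proposition~\ref{PoisSym}---together with the washing-out argument of the second paragraph, which, while elementary, is the ingredient that here replaces the Markov/regeneration reasoning that is unavailable in this setting.
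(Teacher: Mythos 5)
Your proposal is correct and follows essentially the same route as the paper's proof: reduce to the empty initial condition (the paper does this by noting the terms contributed by the initial population converge a.s.\ to $0$, you by a washing-out time $\tau$, which is the same idea), shift the empty-start process to time $-t$ by stationarity so that the count at time $0$ becomes monotone in the horizon, pass to the a.s.\ monotone limit, and identify that limit with~\eqref{EqEqM}--\eqref{EqEqP} via the reversal in Proposition~\ref{PoisSym}. The only addition is your explicit finiteness check of the limit via its first moment, which the paper leaves implicit.
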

The proof of the convergence relies on {\em coupling from the past arguments}. The idea consists  in starting the process at time ${-}t$ and to study its state at time $0$, which will have the same distribution as the original process at time $t$. If the process has convenient properties, of monotonicity for example, it may happen that the state at time $0$ of the shifted process converges {\em almost surely} as $t$ goes to infinity. This gives then the convergence in distribution of $(I(t),M(t),P(t))$ when $t$ goes to infinity. This method applies in our case. One of the earliest works using this method seems to be Loynes~\cite{Loynes} (1962). Its use has been popularized later by Propp and Wilson~\cite{Propp} to study the Ising Model.  See Levin at al.~\cite{Levin} for a survey.
\begin{proof}
We first express the variables $M(t)$ and $P(t)$  in terms of the point process ${\cal P}$. 

The variable $M(t)$ is the sum of the number of  initial  mRNAs still alive at time $t$ and of the number of  mRNAs born before time $t$ and still alive at time $t$. This gives the following formula,
\[
 M(t) = \sum_{k=1}^{M(0)} \ind{L_{\iP,k}^0{>}t} +\int_{\cal S} I(u)\ind{0{<}u{+}v{\le}t{<}u{+}v{+}w}{\cal P}(\diff u,\diff v, \diff w).
 \]
 Similarly, $P(t)$ is the sum of three quantities corresponding to the number of 
 \begin{enumerate}
 \item the number of proteins present at time $0$ and still alive at time $t$,
   \[
   \sum_{k=1}^{P(0)} \ind{L_{\iP,k}^0{>}t};
   \]
 \item the number of proteins created before time $t$ by one of the mRNAs present at time $0$  and still alive at time $t$,
   \[
   \sum_{k=1}^{M(0)} \int_{\R{\times}\R_+^2} I(u)\ind{0{<}u{<}L^0_{\iM,k},u{+}v\leq t {<} u{+}v{+}w}{\cal N}_{\iP,k}^{0}(\diff u, \diff v,\diff w);
   \]
 \item proteins created before time $t$ by new mRNAs and still alive at time $t$,
\[
   P_\iM(t){\steq{def}} \int_{{\cal S}} I(u)\ind{u{\in}(0,t)}
   \left(\int_{\R{\times}\R_+^2} \ind{\substack{u{+}v{\le}x{<}u{+}v{+}w\\x{+}y{\le} t{<}x{+}y{+}z}}m(\diff x,\diff y,\diff z)\right){\cal P}(\diff u,\diff v, \diff w,\diff m).
\]
 \end{enumerate}
For simplicity of presentation, we will prove the convergence in distribution of $(P(t))$. The convergence in distribution  of $(M(t),P(t))$ is  similar. 
Clearly, the two first terms converge almost surely to $0$ when $t$ goes to infinity. We have thus only to take care of $P_\iM(t)$.

The stationarity of the process ${\cal P}$ with respect to translation by ${-}t$, Proposition~\ref{PoisSym},  the identity $(I(s{-}t)){\steq{dist}}(I(s))$, and the fact  that $(I(s))$ is independent of ${\cal P}$ give that $P_\iM(t)$ has the same distribution as $P_2(t)$ with
\[
  P_2(t){\steq{def}}\int_{\cal S} I(u)\ind{u{\in}(-t,0)}
  \left(\int_{\R{\times}\R_+^2} \ind{\substack{u{+}v{\le}x{<}u{+}v{+}w\\x{+}y{\le} 0{<}x{+}y{+}z}}m(\diff x,\diff y,\diff z)\right){\cal P}(\diff u,\diff v, \diff w,\diff m).
\]
   The quantity $P_2(t)$ can be seen the number of proteins at time $0$ when the process starts empty at time ${-}t$, i.e. without mRNAs or proteins.  This is a non-decreasing function of $t$ converging almost surely to $P_2(\infty)$ defined by 
   \[
\int_{\cal S} I(u)\ind{u{<}0}
  \left(\int_{\R{\times}\R_+^2} \ind{\substack{u{+}v{\le}x{<}u{+}v{+}w\\x{+}y{\le} 0{<}x{+}y{+}z}}m(\diff x,\diff y,\diff z)\right){\cal P}(\diff u,\diff v, \diff w,\diff m).
\]
Proposition~\ref{PoisSym} gives that the quantity $P_2(\infty)$ has the same distribution as 
\[
\int I(u)\ind{u{>}0}
  \left(\int \ind{\substack{x{+}v{\le}u{<}x{+}v{+}w\\y{\le} x{<}y{+}z}}m(\diff x,\diff y,\diff z)\right){\cal P}(\diff u,\diff v, \diff w,\diff m).
\]
it is easy to see that this last term is the second coordinate of~\eqref{EqEqP}. The theorem is proved.
\end{proof}
\section{Transcription} \label{mRNAsec}
Relation~\eqref{EqEqM} gives that the distribution of the number of mRNAs at equilibrium is given by the law of the random variable $M$ defined by 
\begin{equation}\label{EqEqM2}
M=\int_{\R{\times}\R_+^2} 
I(u) G_{\iM}(u,v,w) {\cal P}(\diff u,\diff v, \diff w),
\end{equation}
with
\begin{equation}\label{EqMG1}
G_{\iM}(u,v,w){\steq{def}}\ind{v{\le}u{<}v{+}w}.
\end{equation}

\medskip
The process of activation/deactivation of the gene complicates significantly the derivation of the mathematical expressions for the mean and variance of $M$. Formulas are much more simple when the gene is always active. We will need some technical results on this process. We denote by ${\cal F}^I$ the $\sigma$-field generated by the stochastic process $(I(t))$,
\[
{\cal F}^I=\sigma\left\langle I(t), t{\in}\R\right\rangle.
\]
A representation of the distribution  of $M$  in terms of a marked Poisson point process  conditioned on the $\sigma$-field ${\cal F}^I$ is the main tool used. For the calculation of the variance an additional work has then to be done to ``remove'' this conditioning. It turns out that the correlation structure of the process $(I(t))$  plays a role at this stage.  
\begin{lemma}[Correlation Function of $(I(t))$]\label{CorI}
For $t{\ge}0$, $\E(I(t)){=}\delta_+$ and 
  \[
  \P(I(t){=}1|I(0){=}1)=\delta_++(1{-}\delta_+)e^{{-}\Lambda t},
  \]
  where $\Lambda{=}k_+{+}k_-$ and $\delta_+{=}k_+/\Lambda$.
\end{lemma}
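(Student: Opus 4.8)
The plan is to reduce everything to the explicit transition semigroup of the two-state continuous-time Markov process $(I(t))$ introduced in Section~\ref{GenActSec}. The first identity is immediate: since $(I(t))$ is assumed stationary with $\P(I(t){=}1){=}\delta_+$ for all $t{\in}\R$, one has $\E(I(t)){=}\P(I(t){=}1){=}\delta_+$, with no computation needed.

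For the conditional probability, I would set $h(t){\steq{def}}\P(I(t){=}1\mid I(0){=}1)$ for $t{\ge}0$ and write down the Kolmogorov forward equation for the chain started at state $1$. Since from state $1$ the process jumps to $0$ at rate $k_-$ and from state $0$ it jumps to $1$ at rate $k_+$, this reads
\[
h'(t)={-}k_-\,h(t)+k_+\bigl(1{-}h(t)\bigr)=k_+{-}\Lambda\,h(t),\qquad h(0){=}1,
\]
with $\Lambda{=}k_+{+}k_-$. This scalar linear ODE has a unique solution, and one checks directly that $h(t){=}\delta_+{+}(1{-}\delta_+)e^{{-}\Lambda t}$ is it: the constant $\delta_+{=}k_+/\Lambda$ annihilates the right-hand side, $e^{{-}\Lambda t}$ solves the associated homogeneous equation, and the value at $t{=}0$, namely $\delta_+{+}(1{-}\delta_+){=}1$, fixes the remaining constant. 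That is exactly the announced formula. Equivalently, one can diagonalize the generator $R_I$ — whose matrix in the ordered basis $\{0,1\}$ has diagonal entries $-k_+,-k_-$ and off-diagonal entries $k_+,k_-$, with eigenvalues $0$ and $-\Lambda$ — so that every entry of $\exp(tR_I)$ is an affine combination of $1$ and $e^{{-}\Lambda t}$; matching the values at $t{=}0$ (the identity) and at $t{=}{+}\infty$ (the rank-one projection onto the invariant law $(1{-}\delta_+,\delta_+)$) yields the same expression.

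There is no genuine obstacle here; the calculation is entirely routine, and the only points that require care are the sign convention for the $Q$-matrix $R_I$ and the correct reading of $\delta_+$ and $\Lambda$. As a closing remark, by the reversibility of $(I(t))$ already used in the proof of Proposition~\ref{PoisSym}, the quantity $h(t)$ also equals $\P(I(0){=}1\mid I(t){=}1)$, so that $\Cov(I(0),I(t))=\delta_+\,h(t){-}\delta_+^2=\delta_+(1{-}\delta_+)e^{{-}\Lambda t}$, which is what justifies the title of the lemma and is the form in which it will be used in the variance computations of Section~\ref{mRNAsec}.
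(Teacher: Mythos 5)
Your proof is correct and follows the same route as the paper: stationarity gives $\E(I(t)){=}\delta_+$ directly, and the conditional probability is obtained from the Kolmogorov forward equation $p'(t)=k_+(1{-}p(t)){-}k_-p(t)$ with $p(0){=}1$, which is exactly the ODE the paper writes down and solves. The extra material (the semigroup diagonalization and the covariance remark) is consistent with how the lemma is used later but is not needed for the statement itself.
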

\begin{proof}
  Let $p(t){=}\P(I(t){=}1|I(0){=}1)$, then Kolmogorov's forward equations give the ODE
  \[
  p'(t)=k_+(1{-}p(t)){-}k_-p(t)
  \]
  which is easily solved. 
\end{proof}
Define ${\cal P}^I$ the marked point process by, for a non{-}negative Borelian function on  the space ${\cal S}{\steq{def}}\R{\times}\R_+^2{\times}{\cal M}_p(\R{\times}\R_+^2)$,
\begin{equation}\label{PI}
\croc{{\cal P}^I,f}= \int_{\cal S}  I(u)f(u,v,m){\cal P}(\diff u,\diff v, \diff m).
\end{equation}
The following proposition gives the  intuitive, but important, result that, conditionally on ${\cal F}^I$, the point process ${\cal P}^I$ is a marked Poisson point process with intensity measure
\begin{equation}\label{nupI}
\nu_{\cal P}^I(\diff u, \diff v, \diff m) =  I(u) {k_{\iM}}\diff u\,{L_{\iM}}(\diff v) Q_{\iP}(\diff m). 
\end{equation}
See Relation~\eqref{nup} for the unconditional case. 
\begin{prop}\label{CondI}
For any non{-}negative Borelian function $f$ on the state space ${\cal S}$, the relation
\begin{multline*}
  \left.\E\left(\exp\left({-}\croc{{\cal P}^I,f}\right)\right| {\cal F}^I\right)
  =\exp\left({-}\int_{\cal S}  I(u)\left(1{-}e^{{-}f(u,v,w,m)}\right)\nu_P(\diff u,\diff v,\diff w,\diff m)\right)\\
  =\exp\left({-}{k_{\iM}}\int_{\cal S}  I(u)\left(1{-}e^{-f(u,v,w,m)}\right) \,\diff u E_\iM(\diff v) {L_{\iM}}(\diff w)Q_{\iP}(\diff m)\right)
\end{multline*}
holds almost surely.
\end{prop}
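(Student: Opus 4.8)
The plan is to exploit the independence of $(I(t))$ and ${\cal P}$ together with the thinning stability of Poisson point processes. Heuristically, ${\cal P}^I$ is the point process obtained from ${\cal P}$ by keeping only those atoms $(u_n,E_{\iM,n},L_{\iM,n},{\cal N}_{\iP,n})$ for which $I(u_n){=}1$; once ${\cal F}^I$ is fixed this is a \emph{deterministic} thinning of ${\cal P}$ by the Borel set $\{u{\in}\R:I(u){=}1\}$, and since ${\cal P}$ is independent of $(I(t))$ it still has, conditionally on ${\cal F}^I$, the law of a Poisson point process with intensity $\nu_{\cal P}$. Restricting a Poisson point process to a measurable subset of its state space yields again a Poisson point process with the restricted intensity, which is exactly the assertion that, given ${\cal F}^I$, ${\cal P}^I$ is Poisson with intensity $\nu_{\cal P}^I$ of~\eqref{nupI}. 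It remains to turn this into the stated Laplace functional identity.

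To do this I would first write $\croc{{\cal P}^I,f}=\croc{{\cal P},\tilde f}$ with $\tilde f(u,v,w,m){=}I(u)f(u,v,w,m)$, which for a fixed realization of $(I(t))$ is a non{-}negative Borel function on ${\cal S}$. Applying the Laplace functional formula for Poisson point processes, Proposition~\ref{Pois-lapois} of the appendix, conditionally on ${\cal F}^I$ gives
\[
\E\left(\left.e^{-\croc{{\cal P}^I,f}}\,\right|{\cal F}^I\right)=\exp\left(-\int_{\cal S}\left(1-e^{-I(u)f(u,v,w,m)}\right)\nu_{\cal P}(\diff u,\diff v,\diff w,\diff m)\right)
\]
almost surely. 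Because $I(u){\in}\{0,1\}$ one has the pointwise identity $1-e^{-I(u)f(u,v,w,m)}=I(u)\bigl(1-e^{-f(u,v,w,m)}\bigr)$, which transforms the right{-}hand side into the first expression of the proposition; expanding $\nu_{\cal P}$ according to~\eqref{nup} produces the second one.

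The only delicate point is the meaning of ``applying Proposition~\ref{Pois-lapois} conditionally''. I would justify it by testing against an arbitrary bounded non{-}negative ${\cal F}^I$-measurable random variable $Z$: using the independence of $(I(t))$ and ${\cal P}$ and Fubini's theorem on the product space, $\E\bigl(Ze^{-\croc{{\cal P},\tilde f}}\bigr)$ equals the integral, over the law of $(I(t))$, of $Z$ times the \emph{unconditional} Laplace functional of ${\cal P}$ evaluated at the then{-}frozen function $\tilde f$, i.e.\ $\E\bigl(Z\exp(-\int_{\cal S}(1-e^{-I(u)f})\,\diff\nu_{\cal P})\bigr)$, and this identifies the conditional expectation. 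A monotone{-}class and monotone{-}convergence argument reduces the verification in the frozen variables to bounded $f$ of product form, where it is elementary, and the characterization of a Poisson point process by its Laplace functional upgrades the identity to the structural statement about ${\cal P}^I$ given ${\cal F}^I$. I do not anticipate a genuine obstacle here: the whole content is the independence of ${\cal P}$ and $(I(t))$ combined with the restriction/thinning stability of Poisson processes.
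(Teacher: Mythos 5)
Your proof is correct, and it reaches the conclusion by a more direct route than the paper. Both arguments rest on the same two ingredients, the independence of $(I(t))$ and ${\cal P}$ and the Laplace functional of a marked Poisson process (Proposition~\ref{Pois-lapois}), but they implement the conditioning differently. You invoke the substitution (``freezing'') lemma in one shot: writing $\croc{{\cal P}^I,f}=\croc{{\cal P},\tilde f}$ with $\tilde f=I\cdot f$, testing against an arbitrary bounded ${\cal F}^I$-measurable $Z$, and using Fubini on the product space to replace ${\cal P}$ by its unconditional Laplace functional evaluated at the frozen $\tilde f$; the identity $1-e^{-I(u)f}=I(u)(1-e^{-f})$ for $I(u)\in\{0,1\}$ then yields the stated formula. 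The paper instead approximates ${\cal F}^I$ by the finitely generated $\sigma$-fields ${\cal F}^I_N=\sigma\croc{(t_i,X_i),|i|\le N}$, restricts to $f$ with compact support in $u$ so that $\croc{{\cal P}^I,f}$ becomes a finite sum over the intervals $[t_i,t_{i+1})$, verifies the identity conditionally on ${\cal F}^I_N$ by the same independence-plus-Fubini device applied to the finite vector $((t_i,X_i),|i|\le N)$, and then passes to the limit twice: via L\'evy's upward martingale convergence theorem as $N\to\infty$, and via monotone convergence to remove the compact-support restriction. Your route buys economy, bypassing both limiting procedures at the cost of applying the freezing argument to the full path-valued random element $(I(t))$ rather than to finite vectors; the only points you should make explicit are the joint measurability of $(\iota,{\cal P})\mapsto\croc{{\cal P},\iota\cdot f}$ needed for Fubini (clear, since the functional is a countable sum over atoms and $\iota$ is piecewise constant) and the extension of Proposition~\ref{Pois-lapois} from continuous to general non-negative Borel integrands, which you correctly flag and which the paper itself uses implicitly throughout. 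Your opening observation that ${\cal P}^I$ is, given ${\cal F}^I$, a deterministic thinning of ${\cal P}$ and hence Poisson with the restricted intensity~\eqref{nupI} is exactly the structural content the proposition is meant to capture, though it is not needed once the Laplace functional identity is established.
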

\begin{proof}
  For the sake of rigor,  despite of its intuitive content, a  proof is given. The (formal) difficulty is of expressing rigorously the conditioning with respect to ${\cal F}^I$.   We use the notations of Section~\ref{GenActSec} where  the process $(I(t))$ is defined by the doubly infinite sequence $(t_i,X_i)$. Denote, for $N{\ge}1$,
  \[
  {\cal F}^I_N\steq{def} \sigma\croc{(t_i,X_i), {-}N{\le}i{\le}N},
  \]
  note that, since the distribution of the  finite marginals of the sequence $((t_i,X_i), i{\in}\Z)$ determine its distribution, we have
  \[
  \lim_{n\to{+}\infty}  \uparrow {\cal F}^I_N=  {\cal F}^I.
  \]
  Let $f$ a non{-}negative Borelian function on ${\cal S}$ with compact support in the following sense: there exists some $K{>0}$ such that  $f(z){=}0$ if $z{=}(u,v,w,m){\in}{\cal S}$ when $u{\not\in}[-K,K]$.   On the event $\{t_{{-}N}{\le}{-}K, t_{N}{\ge}K\}$,
  \[
 \croc{{\cal P}^I,f}=\sum_{i=-N}^N \ind{X_i{=}1}\int_{\cal S} \ind{u{\in}[t_i,t_{i+1})} f(u,v,w,m)\,{\cal P}(\diff u, \diff v, \diff w, \diff m).
 \]
 By independence of $(t_i,X_i)$ and ${\cal P}$ and by using Proposition~\ref{Pois-lapois},  it is easily checked that, almost surely on this event,
\begin{multline*}
\left.\E\left(\exp\left({-}\croc{{\cal P}^I,f}\right)\right| {\cal F}^I_N\right)\\
=\exp\left({-}\sum_{i{=}{-}N}^N \ind{X_i{=}1}\int_{\cal S}  \ind{u{\in}[t_i,t_{i+1})}\left(1{-}e^{{-}f(u,v,w,m)}\right)\nu_P(\diff u,\diff v,\diff w,\diff m)\right).
\end{multline*}
Indeed, one has to multiply both sides of this relation by some non{-}negative Borelian function of $((t_i,X_i),{-}N{\le}i{\le}N)$, take the expected value of these expressions. By using the independence of $((t_i,X_i),{-}N{\le}i{\le}N)$ and ${\cal P}$, and the expression of the Laplace transform of a Poisson process, see Proposition~\ref{Pois-lapois}, it is easy to check the equality of these two terms.  Consequently, for $N$  sufficiently large, almost surely, 
\[
\E\left(\exp\left({-}\croc{{\cal P}^I,f}\right)| {\cal F}^I_N\right)=
\exp\left({-}\int_{\cal S}  I(u)\left(1{-}e^{{-}f(u,v,w,m)}\right)\nu_P(\diff u,\diff v,\diff w,\diff m)\right).
\]
A classical result from martingale theory gives the almost sure convergence
\[
\lim_{N{\to}{+}\infty} \left.\E\left(\exp\left({-}\croc{{\cal P}^I,f}\right)\right| {\cal F}^I_N\right)=\left.\E\left(\exp\left({-}\croc{{\cal P}^I,f}\right)\right| {\cal F}^I\right).
\]
See  Williams~\cite{Williams}. Hence the proposition holds for these class of functions $f$ with compact support  in the sense defined above.  We conclude with  the fact that any positive Borelian function can be expressed as a (monotone) limit of such functions. 
\end{proof}

\subsection{Mean and Variance of the Number of mRNAs at Equilibrium}

\begin{prop}[The first moments of the number of mRNAs]\label{propM}
If $M$ is the number of mRNAs at equilibrium, then 
\begin{equation}\label{EM}
\E(M)= \frac{k_+}{\Lambda}{k_{\iM}}\E({L_{\iM}}),
\end{equation}
and
\begin{equation}\label{VarM}
\frac{\Var(M)}{\E(M)}=1{+}\E(M)\frac{k_-}{k_+}\E\left(e^{-\Lambda |F_{{L_{\iM}},1}{-}F_{{L_{\iM}},2}{+}E_{\iM,1}{-} E_{\iM,2}|}\right).
\end{equation}
where $\Lambda{=}k_+{+}k_-$, and $E_{\iM,1}$ and $E_{\iM,2}$ are independent random variables with distribution ${E_{\iM}}$, and  $F_{{L_{\iM}},1}$ and $F_{{L_{\iM}},2}$ are independent random variables with density $\P({L_{\iM}}{\ge}u)/\E({L_{\iM}})$ on $\R_+$.
\end{prop}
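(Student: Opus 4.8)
The plan is to extract both moments directly from the representation \eqref{EqEqM2}, namely $M=\int I(u)G_\iM(u,v,w)\,{\cal P}(\diff u,\diff v,\diff w)$ with $G_\iM$ the indicator \eqref{EqMG1}, by conditioning on the $\sigma$-field ${\cal F}^I$. The crucial input is Proposition~\ref{CondI}: conditionally on ${\cal F}^I$, the point process ${\cal P}^I$ is a marked Poisson point process with intensity $k_\iM I(u)\,\diff u\,E_\iM(\diff v)\,L_\iM(\diff w)\,Q_\iP(\diff m)$, so that the Campbell first- and second-moment formulas \eqref{PoisMean}--\eqref{PoisVar} apply conditionally; the conditioning is then removed using the correlation function of $(I(t))$ from Lemma~\ref{CorI}.

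For the mean: applying \eqref{PoisMean} conditionally and then taking expectations, the independence of $(I(t))$ and ${\cal P}$ gives $\E(I(u))=\delta_+$, hence
\[
\E(M)=\delta_+k_\iM\int_{\R\times\R_+^2}\ind{v\le u<v+w}\,\diff u\,E_\iM(\diff v)\,L_\iM(\diff w)=\delta_+k_\iM\int_{\R_+^2}w\,E_\iM(\diff v)\,L_\iM(\diff w)=\delta_+k_\iM\E(L_\iM),
\]
using $\int_\R\ind{v\le u<v+w}\,\diff u=w$; this is \eqref{EM}, and in particular $\E(M)<\infty$.

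For the variance I would use $\Var(M)=\E\bigl(\Var(M\mid{\cal F}^I)\bigr)+\Var\bigl(\E(M\mid{\cal F}^I)\bigr)$. Since $G_\iM$ is an indicator, $G_\iM^2=G_\iM$, so \eqref{PoisVar} applied conditionally gives $\Var(M\mid{\cal F}^I)=\E(M\mid{\cal F}^I)=k_\iM\int I(u)\ind{v\le u<v+w}\,\diff u\,E_\iM(\diff v)\,L_\iM(\diff w)$; taking expectations, $\E(\Var(M\mid{\cal F}^I))=\E(M)$, which produces the leading $1$ in \eqref{VarM}. For the remaining term, squaring the integral defining $\E(M\mid{\cal F}^I)$ and taking expectations brings in the two-point function, which by Lemma~\ref{CorI} together with the stationarity and reversibility of $(I(t))$ equals $\E(I(u_1)I(u_2))=\delta_+^2+\delta_+(1-\delta_+)e^{-\Lambda|u_1-u_2|}$; the $\delta_+^2$-part reassembles $\E(M)^2$ and cancels in the variance, leaving
\[
\Var\bigl(\E(M\mid{\cal F}^I)\bigr)=\delta_+(1-\delta_+)k_\iM^2\!\int\! e^{-\Lambda|u_1-u_2|}\ind{v_1\le u_1<v_1+w_1}\ind{v_2\le u_2<v_2+w_2}\,\diff u_1\diff u_2\,E_\iM(\diff v_1)E_\iM(\diff v_2)L_\iM(\diff w_1)L_\iM(\diff w_2).
\]

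The step I expect to be the main obstacle is turning this last integral into the form in \eqref{VarM}. The substitution $u_i=v_i+s_i$ with $s_i\in[0,w_i)$ reduces the inner integration to $\int_0^{w_1}\!\int_0^{w_2}e^{-\Lambda|(v_1-v_2)+(s_1-s_2)|}\,\diff s_1\diff s_2$; integrating $w_1,w_2$ against $L_\iM$ and invoking the identity $\int_0^{\infty}g(s)\P(L_\iM\ge s)\,\diff s=\E(L_\iM)\,\E(g(F_{L_\iM}))$ (the defining property of $F_{L_\iM}$, cf.\ \eqref{eqFa}) replaces $s_1,s_2$ by independent copies $F_{L_\iM,1},F_{L_\iM,2}$ at the cost of a factor $\E(L_\iM)^2$; integrating $v_1,v_2$ against $E_\iM$ replaces them by independent copies $E_{\iM,1},E_{\iM,2}$. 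Collecting constants via $1-\delta_+=k_-/\Lambda$, $\delta_+=k_+/\Lambda$ and $\E(M)=\delta_+k_\iM\E(L_\iM)$ rewrites the prefactor $\delta_+(1-\delta_+)k_\iM^2\E(L_\iM)^2$ as $(k_-/k_+)\,\E(M)^2$, which divided by $\E(M)$ yields exactly \eqref{VarM}. The only delicate points are keeping the various copies genuinely independent and justifying the Fubini exchanges, both routine since every distribution involved has finite first moment.
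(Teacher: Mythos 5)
Your proposal is correct and follows essentially the same route as the paper: conditioning on ${\cal F}^I$ via Proposition~\ref{CondI}, applying the Campbell formulas \eqref{PoisMean}--\eqref{PoisVar} conditionally (with $G_\iM^2=G_\iM$ yielding the leading $1$), and removing the conditioning with the two-point function of Lemma~\ref{CorI}; your law-of-total-variance decomposition is exactly the paper's identity $\Var(M)=\Delta^I(M)+\E(M)$, and your change of variables producing the $F_{L_\iM,i}$ and $E_{\iM,i}$ copies matches the paper's final computation.
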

Recall that the distribution of the elongation time of an mRNA is ${E_{\iM}}$ and the duration of its lifetime is ${L_{\iM}}$.
Note that when the lifetimes $L_\iM$  of an mRNA is exponentially distributed with  parameters $\gamma_\iM$, then
$F_{L_\iM}{\steq{dist}}L_\iM$, and if the elongation time of an mRNA is null, i.e.\ $E_\iM{\equiv}0$, the above relation gives 
\[
\frac{\Var(M)}{\E(M)}{-}1=\E(M)\frac{k_-}{k_+}\E\left(e^{-\Lambda |L_{\iM,1}{-}L_{\iM,2}|}\right)=\frac{k_-}{\Lambda}\frac{k_\iM}{\gamma_\iM}\E\left(e^{-\Lambda L_{\iM}}\right)
=\frac{k_-}{\Lambda}\frac{k_\iM}{\Lambda{+}\gamma_\iM},
\]
since $|L_{\iM,1}{-} L_{\iM,2}|{\steq{dist}}L_\iM$, due to the exponential assumption. This is the relation for the variance of $M$ in Proposition~\ref{PaulProp}.
\begin{proof}
Proposition~\ref{CondI} gives that, almost surely,
\begin{align}
\E\left(M{\mid}{\cal F}^I\right) &=
\int_{\cal S} I(u) \ind{v{\le}u{<}v{+}w}{\nu}_{\cal P}^I(\diff u,\diff v, \diff w,\diff m)\label{uaux1}\\
&=\int_{\R{\times}\R_+^2} I(u)\ind{v{\le}u{<}v{+}w} {k_{\iM}}\diff u {E_{\iM}}(\diff v){L_{\iM}}(\diff w)\notag\\
&={k_{\iM}}\int_{\R} I(u)\P({E_{\iM}}{\le}u{<}{E_{\iM}}{+}{L_{\iM}})\diff u,\notag
\end{align}
by integrating this identity, we get from Lemma~\ref{CorI},
\begin{equation}\label{uaux2}
\E(M)=\delta_+\int_{\R{\times}\R_+^2} \ind{v{\le}u{<}v{+}w} {k_{\iM}}\diff u {E_{\iM}}(\diff v){L_{\iM}}(\diff w)=\delta_+{k_{\iM}}\E({L_{\iM}}).
\end{equation}
From Representation~\eqref{EqEqM2} of $M$ and Relation~\eqref{PoisVar} of Corollary~\ref{PoisMom}, we obtain
\[
\E\left(M^2{\mid} {\cal F}^I\right)=\left(\E\left(M{\mid}{\cal F}^I\right)\right)^2
 +\int_{\R{\times}\R_+^2} I(u)\ind{v{\le}u{<}v{+}w} {k_{\iM}}\diff u {E_{\iM}}(\diff v){L_{\iM}}(\diff w),
 \]
and, after integration,
 \begin{equation}\label{uaux3}
\E\left(M^2\right)=\E\left(\E\left(M{\mid}{\cal F}^I\right)^2\right) +\E(M).
 \end{equation}
 Define
 \[
 \Delta^I(M)\steq{def}  \E\left(\E(M{\mid}{\cal F}^I)^2\right)-\E(M)^2,
 \]
 so that $\Var(M){=}\Delta^I(M){+}\E(M)$. Relations~\eqref{uaux1} and~\eqref{uaux2} give
\[
\Delta^I(M)=  \E\left(\left(\E(M{\mid}{\cal F}^I) -\E(M)\right)^2\right) =
   \E\left[\left({k_{\iM}}\int_{\R} (I(u){-}\delta_+)\P({E_{\iM}}{\le}u{<}{E_{\iM}}{+}{L_{\iM}})\diff u \right)^2\right].
\]
We can rewrite the square of the integral as a double integral in the following way,
\begin{multline*}
  \left(\int_{\R} (I(u){-}\delta_+)\P({E_{\iM}}{\le}u{<}{E_{\iM}}{+}{L_{\iM}})\diff u \right)^2\\=
  \int_{\R^2} (I(u){-}\delta_+)(I(u'){-}\delta_+)\P({E_{\iM,1}}{\le}u{<}{E_{\iM,1}}{+}{L_{\iM,1}})\P({E_{\iM,2}}{\le}u'{<}{E_{\iM,2}}{+}{L_{\iM,2}})\diff u\diff u',
\end{multline*}
where the variables $E_{\iM,i}$, $i{=}1$, $2$, and $L_{\iM,i}$, $i{=}1$, $2$, are independent. The quantity $\Delta^I(M)$ can be expressed as 
\begin{multline*}
\Delta^I(M)={k_{\iM}}^2\int_{\R^2} \E\left[\rule{0mm}{4mm}(I(u){-}\delta_+)(I(u'){-}\delta_+)\right]\\{\times}\P(E_{\iM,1}{\le}u{<}E_{\iM,1}{+}{L_{\iM}}, E_{\iM,2}{\le}u'{<}E_{\iM,2}{+}L_{\iM,2})\diff u \diff u'.
\end{multline*}
From Lemma~\ref{CorI} we get, for $u$, $u'{\in}\R_+$,
\begin{equation}\label{CorII}
\E[(I(u){-}\delta_+)(I(u'){-}\delta_+)]=\E(I(u)I(u')){-}\delta_+^2=\delta_+(1{-}\delta_+)e^{-\Lambda |u{-}u'|},
\end{equation}
hence the quantity $\Delta^I(M)/({k_{\iM}}^2 \delta_+(1{-}\delta_+))$ is given by 
\begin{align*}
  \int_{\R^2} e^{-\Lambda |u{-}u'|}&\P(E_{\iM,1}{<}u{<}E_{\iM,1}{+}L_{\iM,1}, E_{\iM,2}{\le}u'{<}E_{\iM,2}{+}L_{\iM,2})\diff u \diff u'\\
  &=\E\left(\int_{\R_+^2} e^{-\Lambda |u{+}E_{\iM,1}{-}(u'{+} E_{\iM,2})|}\P(u{\le}L_{\iM,1},u'{<}L_{\iM,2})\diff u \diff u'\right)\\
  &=\E({L_{\iM}}^2)\E\left(e^{-\Lambda |F_{{L_{\iM}},1}{+}E_{\iM,1}{-}(F_{{L_{\iM}},2}{+} E_{\iM,2})|}\right).
\end{align*}
If we gather these results into Relation~\eqref{uaux3}, we obtain the identity
\[
\frac{\Var(M)}{\E(M)}=1{+}\E(M)\frac{1{-}\delta_+}{\delta_+}\E\left(e^{-\Lambda |F_{{L_{\iM}},1}{+}E_{\iM,1}{-}(F_{{L_{\iM}},2}{+} E_{\iM,2})|}\right),
\]
which is the desired formula. 
\end{proof}
We now turn to a more detailed analysis of the invariant distribution of $(M(t))$ in the case when the gene is always active. 
\subsection{The Equilibrium Distribution of the State of mRNAs}
It is now assumed that the process $(I(t))$ is constant and equal to $1$. In particular $\delta_+{=}1$. We introduce a random measure that describes precisely the state of the mRNAs,
\[
\Lambda_{\iM}(t){\steq{def}}\sum_{k=1}^{M(t)} \delta_{R_i(t)}\text{ and }
\]
where $R_i(t)$ is the residual lifetime of the $i$th mRNA present at time $t$. The measure $\Lambda_{\iM}(t)$ is the {\em empirical distribution} associated to the residual lifetimes. 
\begin{prop}[Convergence to Equilibrium of the number of  mRNAs]\label{mRNAprop}
When the gene is always active, the process $(\Lambda_{\iM}(t))$ converges in distribution to a random measure $(\Lambda_{\iM}^*)$ defined by, for $f$ a continuous function with compact support on $\R_+$,
  \[
  \croc{\Lambda_{\iM}^*,f}=\int_{\cal S} f(v{+}w{-}u)\ind{v{\le}u{<}v{+}w} {\cal P}(\diff u,\diff v, \diff w).
 \]
\end{prop}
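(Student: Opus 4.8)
The plan is to follow the same coupling-from-the-past strategy used in the proof of Theorem~\ref{theoCVMP}, now carrying along the full empirical measure of residual lifetimes rather than just its total mass. First I would write $\croc{\Lambda_\iM(t),f}$ explicitly in terms of the Poisson process ${\cal P}$ and the initial condition: since the gene is always active, an mRNA with index $n$ is present at time $t$ iff $0<u_n{+}E_{\iM,n}\le t<u_n{+}E_{\iM,n}{+}L_{\iM,n}$, and in that case its residual lifetime at time $t$ is $u_n{+}E_{\iM,n}{+}L_{\iM,n}-t$. Hence
\[
\croc{\Lambda_\iM(t),f}=\sum_{k=1}^{M(0)}f(L^0_{\iM,k}{-}t)\ind{L^0_{\iM,k}>t}
+\int_{\cal S} f(u{+}v{+}w{-}t)\ind{0<u{+}v\le t<u{+}v{+}w}{\cal P}(\diff u,\diff v,\diff w),
\]
for $f$ continuous with compact support. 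The first (initial-condition) term involves only finitely many mRNAs, each with $L^0_{\iM,k}-t\to-\infty$, so for $t$ large enough it contributes $0$ to $\croc{\Lambda_\iM(t),f}$ almost surely (taking $f\ge 0$ with support in $\R_+$, say, and then a general $f$ by linearity); thus it can be discarded.

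Next I would apply the translation invariance of ${\cal P}$ (Proposition~\ref{PoisSym}, or just the fact that a Poisson process on $\R$ is invariant under the shift $u\mapsto u-t$, combined with independence of the i.i.d. marks from the shift). This shows that the second term above has the same distribution as
\[
\int_{\cal S} f(u{+}v{+}w)\ind{-t<u{+}v\le 0<u{+}v{+}w}{\cal P}(\diff u,\diff v,\diff w),
\]
which is exactly $\croc{\Lambda_\iM(0),f}$ for the process started empty at time $-t$; this integral is monotone in $t$ when $f\ge 0$, so it converges almost surely as $t\to\infty$ to
\[
\int_{\cal S} f(u{+}v{+}w)\ind{u{+}v\le 0<u{+}v{+}w}{\cal P}(\diff u,\diff v,\diff w).
\]
A final application of the reflection $u\mapsto -u$ (again Proposition~\ref{PoisSym}) rewrites this as $\int_{\cal S} f(v{+}w{-}u)\ind{v\le u<v{+}w}{\cal P}(\diff u,\diff v,\diff w)=\croc{\Lambda_\iM^*,f}$, giving convergence in distribution of $\croc{\Lambda_\iM(t),f}$ to $\croc{\Lambda_\iM^*,f}$ for each fixed $f$.

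The remaining point — and the only genuinely non-trivial one — is to upgrade convergence of $\croc{\Lambda_\iM(t),f}$ for each fixed continuous compactly supported $f$ to convergence in distribution of the \emph{random measures} $\Lambda_\iM(t)\Rightarrow\Lambda_\iM^*$ in ${\cal M}_p(\R_+)$ with the vague topology. The clean way is to observe that the coupling constructed above is \emph{simultaneous} in $f$: on a single probability space the shifted measure $\Lambda_\iM(0)$ (started empty at $-t$) increases to the well-defined limiting point measure $\Lambda_\iM^*$, so $\croc{\Lambda_\iM(0),f}\to\croc{\Lambda_\iM^*,f}$ almost surely and simultaneously for all $f\ge 0$ (and the relevant limit measure is a.s. Radon, since $\E(\Lambda_\iM^*([0,a]))<\infty$ by the mean computation of Proposition~\ref{propM} applied with $L_\iM$ truncated, using $\E(L_\iM)<\infty$). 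Almost sure vague convergence of the shifted measures, together with equality in distribution of the shifted and original measures, then yields convergence in distribution of $\Lambda_\iM(t)$; one may alternatively invoke the standard criterion that convergence of $\E(e^{-\croc{\Lambda_\iM(t),f}})$ to $\E(e^{-\croc{\Lambda_\iM^*,f}})$ for all nonnegative continuous compactly supported $f$ characterizes convergence in distribution of point processes. The main obstacle is thus purely topological bookkeeping — making sure the monotone coupling is used at the level of measures, not one test function at a time — rather than any new probabilistic estimate.
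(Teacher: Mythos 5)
Your proposal is correct and follows essentially the same route as the paper: write $\croc{\Lambda_\iM(t),f}$ as a functional of ${\cal P}$, shift by $-t$ using Proposition~\ref{PoisSym}, pass to the monotone almost-sure limit, and reflect $u\mapsto -u$; the paper simply delegates the reduction from random-measure convergence to test-function convergence to Theorem~3.26 of Dawson~\cite{Dawson}, where you argue it directly via the simultaneous monotone coupling and the Laplace-functional criterion. The only cosmetic discrepancy is your indicator $\ind{0<u+v\le t}$ in place of the paper's $\ind{0<u,\,u+v\le t}$ for the prelimit process, which does not affect the limit.
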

\begin{proof}
  The proof is similar to the proof of Theorem~\ref{theoCVMP}.  Theorem~3.26 of Dawson~\cite{Dawson} is used, it is enough to show that for the convergence in distribution
  \[
  \lim_{t{\to}+\infty}   \croc{\Lambda_{\iM}(t),f}=  \croc{\Lambda_{\iM}^*,f},
  \]
for every non{-}negative continuous function $f$ with compact support on $\R_+$.
 By definition of the vector $(R_i(t))$,
\begin{align*}
  \croc{\Lambda_{\iM}(t),f}&=\sum_{n} f(u_n{+}E_{\iM,n}{+}L_{\iM,n}{-}t) \ind{0{<}u_n, u_n{+}E_{\iM,n}{\le}t{<} u_n{+}E_{\iM,n}{+}L_{\iM,n}}\\
&=\int_{\cal S} f(u{+}v{+}w{-}t)\ind{0{<}u, u{+}v{\le}t{<}u{+}v{+}w} {\cal P}(\diff u,\diff v, \diff w)\\
  &{\steq{dist}}\int_{\cal S} f(u{+}v{+}w)\ind{{-}t{<}u,u{+}v{\le}0{<}u{+}v{+}w} {\cal P}(\diff u,\diff v, \diff w),
\end{align*}
by invariance of the Poisson point process with respect to translation by ${-}t$, see Proposition~\ref{PoisSym}. We conclude that $(\Lambda_{\iM}(t))$ converges in distribution to $\Lambda$ defined by
\begin{multline*}
  \croc{\Lambda,f}{=}\int_{\cal S} f(u{+}v{+}w)\ind{u{+}v{\le}0{<}u{+}v{+}w} {\cal P}(\diff u,\diff v, \diff w)\\
 {\steq{dist}}\int_{\cal S} f(v{+}w{-}u)\ind{{-}u{+}v{\le}0{<}{-}u{+}v{+}w} {\cal P}(\diff u,\diff v, \diff w)=\croc{\Lambda_{\iM}^*,f},
\end{multline*}
by invariance with respect to the mapping $(u,v,w){\mapsto}(-u,v,w)$, again by using  Proposition~\ref{PoisSym}. The proposition is proved. 
\end{proof}
This proposition states that at equilibrium the number of mRNAs is Poisson with parameter ${k_{\iM}}\E({L_{\iM}})$ and the residual lifetimes of the mRNAs are independent and distributed as $F_{{\sigma}_{\iM}}$. 
\begin{prop}[Equilibrium of the state of mRNAs]\label{theoMemp}
When the gene is always active,
  \begin{enumerate}
    \item the distribution of the variable $M$, the number of mRNAs at equilibrium, is a Poisson distribution with parameter $\E(M)$ given by 
  \[
  \E(M)={k_{\iM}}\E({L_{\iM}}).
  \]
\item The residual lifetimes of the mRNAs are i.i.d. with common distribution
  \[
  F_{{L_{\iM}}}(\diff x)=\frac{\P({L_{\iM}}{>}x)}{\E({L_{\iM}})}\diff x.
  \]
  \end{enumerate}
\end{prop}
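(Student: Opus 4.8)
The plan is to obtain both assertions as an essentially immediate consequence of Proposition~\ref{mRNAprop}, by regarding the limiting random measure $\Lambda_{\iM}^*$ as the image of a Poisson point process under a restriction followed by a measurable map, and then invoking the restriction and mapping theorems for Poisson processes recalled in Section~\ref{PoisSec} of the Appendix. Since the gene is always active, the marginal of ${\cal P}$ on its first three coordinates is a Poisson point process on $\R{\times}\R_+^2$ with intensity ${k_{\iM}}\diff u{\otimes}{E_{\iM}}(\diff v){\otimes}{L_{\iM}}(\diff w)$, so there is no conditioning on ${\cal F}^I$ to remove and the argument is purely about operations on this Poisson process.

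First I would restrict this Poisson process to the Borel set $\{(u,v,w): v{\le}u{<}v{+}w\}$. By the restriction theorem this is again a Poisson point process, and its total intensity is
\[
{k_{\iM}}\int_{\R{\times}\R_+^2} \ind{v{\le}u{<}v{+}w}\,\diff u\, {E_{\iM}}(\diff v)\, {L_{\iM}}(\diff w)={k_{\iM}}\,\E\left(\int_{\R}\ind{{E_{\iM}}{\le}u{<}{E_{\iM}}{+}{L_{\iM}}}\,\diff u\right)={k_{\iM}}\E({L_{\iM}}),
\]
which is finite by the standing integrability assumption; in particular the restricted process has almost surely finitely many atoms, so $\Lambda_{\iM}^*$ is a well-defined locally finite point measure. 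Its total mass is $M$, and being the total mass of a Poisson process with the above intensity it is Poisson with parameter ${k_{\iM}}\E({L_{\iM}})$, which is assertion~(1) with $\E(M){=}{k_{\iM}}\E({L_{\iM}})$.

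Next I would apply the mapping $(u,v,w)\mapsto v{+}w{-}u$, the residual lifetime, which is exactly the quantity appearing in $\croc{\Lambda_{\iM}^*,f}$. By the mapping theorem $\Lambda_{\iM}^*$ is a Poisson random measure on $\R_+$ whose intensity $\mu$ is the pushforward of ${k_{\iM}}\ind{v{\le}u{<}v{+}w}\,\diff u\, {E_{\iM}}(\diff v)\, {L_{\iM}}(\diff w)$ under this map. The only computation to carry out is the identification of $\mu$: the change of variable $s{=}v{+}w{-}u$ turns the constraint $v{\le}u{<}v{+}w$ into $0{<}s{\le}w$ and integrates out $v$, giving $\mu(\diff s){=}{k_{\iM}}\,\P({L_{\iM}}{\ge}s)\,\diff s$ on $\R_+$, a finite measure of total mass ${k_{\iM}}\E({L_{\iM}}){=}\E(M)$. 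A Poisson random measure with finite intensity $\mu$ has the law of $\sum_{k=1}^{N}\delta_{Y_k}$ with $N$ Poisson of parameter $\mu(\R_+){=}\E(M)$ and $(Y_k)$ i.i.d.\ of law $\mu/\mu(\R_+)$, independent of $N$; reading off $\mu/\mu(\R_+)$ yields the density $\P({L_{\iM}}{\ge}x)/\E({L_{\iM}})$, i.e.\ the distribution $F_{{L_{\iM}}}(\diff x)$, which is assertion~(2) (and also records that the residual lifetimes are independent of $M$).

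I do not expect a genuine obstacle: the proposition is bookkeeping with Proposition~\ref{mRNAprop} and standard Poisson process operations. The one point deserving a line of care is checking that the restriction set carries finite total intensity, so that the restricted and mapped objects are honest locally finite point measures and the mixed-Poisson description applies; this is immediate from $\E({L_{\iM}}){<}{+}\infty$. An alternative, slightly longer, route would be to compute $\E(\exp(-\croc{\Lambda_{\iM}^*,f}))$ directly from the Poisson Laplace functional formula of Proposition~\ref{Pois-lapois} and recognize it as the Laplace functional of the claimed mixed-Poisson measure.
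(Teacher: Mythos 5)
Your proof is correct, but it follows a genuinely different route from the paper's. The paper proves both assertions by computing the Laplace functional $\E(\exp({-}\croc{\Lambda_{\iM}^*,f}))$ directly from Proposition~\ref{Pois-lapois} with the intensity $\nu_{\cal P}$, simplifying the inner integral to ${k_{\iM}}\E(L_{\iM})(1{-}\E(e^{-f(F_{L_{\iM}})}))$, and then recognizing the resulting exponential both as the generating function of a Poisson random variable (taking $f{\equiv}{-}\log z$) and, after expanding the exponential as a series, as the Laplace functional of $\sum_{i=1}^{M}\delta_{F_{L_{\iM},i}}$ with $(F_{L_{\iM},i})$ i.i.d.\ independent of $M$. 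You instead identify $\Lambda_{\iM}^*$ structurally, as the image of the Poisson process ${\cal P}(\diff u,\diff v,\diff w)$ under restriction to $\{v{\le}u{<}v{+}w\}$ followed by the map $(u,v,w)\mapsto v{+}w{-}u$, compute the pushforward intensity ${k_{\iM}}\P(L_{\iM}{\ge}s)\diff s$ (your change of variable is correct), and invoke the mixed{-}binomial representation of a Poisson random measure with finite intensity. Your route is arguably cleaner conceptually, makes the independence of the residual lifetimes from $M$ transparent, and explicitly checks finiteness of the restricted intensity, a point the paper leaves implicit. Two minor caveats: the restriction and mapping theorems are \emph{not} actually recalled in Section~\ref{PoisSec} of the Appendix (only the Laplace functional characterization of Proposition~\ref{Pois-lapois} and the moment formulas are), so you would either need to cite Kingman directly or derive them from Proposition~\ref{Pois-lapois}, which essentially reproduces the paper's computation; and you should note, as you implicitly do, that the fourth coordinate $m$ of ${\cal P}$ marginalizes away harmlessly since $Q_\iP$ is a probability measure.
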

\begin{proof}
We  calculate the Laplace transform of $\Lambda_{\iM}^*$, by using Proposition~\ref{Pois-lapois}  of the appendix and the fact that $\nu_{\cal P}$ defined by Relation~\eqref{nup} is the intensity measure of the Poisson process ${\cal P}$, we have
\begin{align*}
  \E\left(e^{{-}\croc{\Lambda_{\iM}^*,f}}\right)&= \exp\left[-\int_{{\cal S}} \left(1-e^{-f(v{+}w{-}u)}\right)\,\nu_{\cal P} (\diff u, \diff v,\diff w)\right]\\
  &= \exp\left[{-}{k_{\iM}}\int_{\cal S} \left(1{-}e^{-f(v{+}w{-}u)}\right)\ind{v{\le}u{<}v{+}w}\,\diff u {E_{\iM}}(\diff v){L_{\iM}}(\diff w)\right]\\
  &= \exp\left[{-}{k_{\iM}}\E({L_{\iM}})\left(1{-}\E\left(e^{-f(F_{{L_{\iM}}})}\right)\right)\right]
 \end{align*}
By taking $f\equiv -\log(z)$ for some fixed $z{\in}(0,1)$, we obtain that
\[
\E\left(z^M\right)=\exp\left[{-}{k_{\iM}}\E({L_{\iM}})\left(1{-}z\right)\right],
\]
the distribution of $M$ is Poisson with parameter $\E(M){=}{k_{\iM}}\E({L_{\iM}})$. Additionally, since
\[
\E\left(e^{{-}\croc{\Lambda_{\iM}^*,f}}\right)= \sum_{n=0}^{+\infty} \frac{\E(M)^n}{n!} e^{{-}\E(M)}\left(\E\left(e^{-f(F_{{L_{\iM}}})}\right)\right)^n.
\]
we have that $\Lambda_{\iM}^*$ has the same Laplace transform, and therefore the same distribution,  as the random measure 
\[
\sum_{i=1}^{M} \delta_{F_{{L_{\iM}},i}},
\]
where $(F_{{L_{\iM}},i})$ is an i.i.d. sequence of random variables with distribution $F_{{L_{\iM}}}$ independent of $M$. The proposition is proved. 
\end{proof}
\section{Translation} \label{TranslSec}
Theorem~\ref{propEq} shows that the distribution of the number of proteins at equilibrium has the same distribution as the random variable
\begin{equation}\label{EqP2}
  P=\int_{{\cal S}} I(u)G_{\iP}(u,v,w,m) {\cal P}(\diff u,\diff v, \diff w, \diff m),
\end{equation}
recall that ${\cal S}{=}\R{\times}\R_+^2{\times}{\cal M}_p(\R{\times}\R_+^2)$, and 
\begin{equation}\label{EqPG2}
G_{\iP}(u,v,w,m)\steq{def} \int_{\R{\times}\R_+^2}\ind{\substack{x{+}v{\le}u{<}x{+}v{+}w\\y{\le} x{<}y{+}z}}m(\diff x,\diff y,\diff z)
\end{equation}
where ${\cal P}$ is the marked Poisson point process on ${\cal S}$ defined by Relation~\eqref{PoisP}.

\subsection{Mean and Variance of the Number of Proteins at Equilibrium}\ \\
We have an explicit representation of the variance of the number of proteins at equilibrium given by the following proposition. This is an extension of the results of Fromion et al.~\cite{Fromion}, see also Leoncini~\cite{Leoncini}.
\begin{prop}[The two first moments of the number of proteins]\label{FTMP}
 If $P$ is the number of proteins at equilibrium, then 
  \begin{equation}\label{meanP}
\E(P)=\frac{k_+}{\Lambda} {k_{\iM}}\E({L_{\iM}}){k_{\iP}}\E({L_{\iP}}),  
  \end{equation}
  and
  \begin{multline}\label{varP}
    \frac{\Var(P)}{\E(P)}=1{+}  {k_{\iP}}\E({L_{\iP}}) \P\left(\rule{0mm}{4mm}F_{{L_{\iM}}}{\ge}|E_{\iP,1}{-}E_{\iP,2}{+}F_{{L_{\iP}},1}{-}F_{{L_{\iP}},2}|\right)\\
    + \frac{k_-}{k_+}\E(P)\E\left(e^{-\Lambda |F_{{{L_{\iM}}},1}{-}F_{{{L_{\iM}}},2}{+}F_{{{L_{\iP}}},1}{-}F_{{{L_{\iP}}},2}{+}E_{\iM,1}{-}E_{\iM,2}{+}E_{\iP,1}{-}E_{\iP,2}|}\right),
  \end{multline}
  where $\Lambda{=}k_+{+}k_-$ and,  for $a{=}\iM$ and $\iP$,  the random variables $E_{a,1}$ and $E_{a,2}$ are independent  with  distribution $E_{a}$,  and  $F_{L_a,1}$ and $F_{L_a,2}$ are independent random variables with density $\P(L_a{\ge}u)/\E(L_a)$ on $\R_+$.
\end{prop}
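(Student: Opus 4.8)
The plan is to reproduce, at one additional level of integration, the argument used for Proposition~\ref{propM}. First I would condition on the $\sigma$-field ${\cal F}^I$ generated by $(I(t))$ so that, by Proposition~\ref{CondI}, the variable $P$ of Representation~\eqref{EqP2} becomes the functional $\croc{{\cal P}^I,G_{\iP}}$ of the conditionally Poisson point process ${\cal P}^I$ of Relation~\eqref{PI}, whose intensity $\nu_{\cal P}^I$ has $m$-component equal to the law $Q_{\iP}$ of the Poisson point process ${\cal N}_{\iP}$. Since, by~\eqref{EqPG2}, $G_{\iP}(u,v,w,m){=}\croc{m,g_{u,v,w}}$ with the indicator kernel $g_{u,v,w}(x,y,z){=}\ind{x{+}v{\le}u{<}x{+}v{+}w}\ind{y{\le}x{<}y{+}z}$, I will apply the Poisson mean and second moment formulas (Corollary~\ref{PoisMom}) twice: once for ${\cal P}^I$, conditionally on ${\cal F}^I$, and once for ${\cal N}_{\iP}$ inside the mark. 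It is convenient to set
\[
\phi(u,v,w)\steq{def}\E\left(\croc{{\cal N}_{\iP},g_{u,v,w}}\right)={k_{\iP}}\int_{\R}\ind{x{+}v{\le}u{<}x{+}v{+}w}\,\P(E_{\iP}{\le}x{<}E_{\iP}{+}L_{\iP})\,\diff x,
\]
which, by Relation~\eqref{eqFa}, also equals ${k_{\iP}}\E({L_{\iP}})\,\P(E_{\iP}{+}F_{{L_{\iP}}}{+}v{\le}u{<}E_{\iP}{+}F_{{L_{\iP}}}{+}v{+}w)$.

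For the mean, the Poisson mean formula gives $\E(P{\mid}{\cal F}^I){=}{k_{\iM}}\int_{\R\times\R_+^2}I(u)\phi(u,v,w)\diff u\,{E_{\iM}}(\diff v)\,{L_{\iM}}(\diff w)$; integrating this identity, using $\E(I(u)){=}\delta_+$ from Lemma~\ref{CorI}, then integrating over $u$ to produce the window length $w$, over $w$ against ${L_{\iM}}$ to produce $\E({L_{\iM}})$, and over the ribosome variable to produce $\E({L_{\iP}})$, one obtains~\eqref{meanP}.

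For the variance I would decompose $\Var(P){=}\Delta^I(P){+}\E(\Var(P{\mid}{\cal F}^I))$ with $\Delta^I(P)\steq{def}\Var(\E(P{\mid}{\cal F}^I))$, exactly as for $\Delta^I(M)$ in the proof of Proposition~\ref{propM}. For the second term, the Poisson second moment formula gives $\Var(P{\mid}{\cal F}^I){=}\int G_{\iP}^2\,\diff\nu_{\cal P}^I$, and, since $g_{u,v,w}$ is an indicator, $\int\croc{m,g_{u,v,w}}^2Q_{\iP}(\diff m){=}\phi(u,v,w){+}\phi(u,v,w)^2$; hence
\[
\E\left(\Var(P{\mid}{\cal F}^I)\right)=\delta_+{k_{\iM}}\int_{\R\times\R_+^2}\left[\phi(u,v,w){+}\phi(u,v,w)^2\right]\diff u\,{E_{\iM}}(\diff v)\,{L_{\iM}}(\diff w).
\]
The $\phi$-part equals $\E(P)$, which accounts for the leading $1$ in~\eqref{varP}. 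For the $\phi^2$-part I would expand it as a double integral over two independent copies of the ribosome variables, integrate over $u$ — the overlap of two windows of common length $w$ has length $(w{-}|E_{\iP,1}{-}E_{\iP,2}{+}F_{{L_{\iP}},1}{-}F_{{L_{\iP}},2}|)^+$ — and then integrate over the mRNA lifetime $w$ against ${L_{\iM}}$, turning $\E((L_{\iM}{-}a)^+)$ into $\E(L_{\iM})\P(F_{{L_{\iM}}}{\ge}a)$ via~\eqref{eqFa}; this yields $\E(P){k_{\iP}}\E({L_{\iP}})\P(F_{{L_{\iM}}}{\ge}|E_{\iP,1}{-}E_{\iP,2}{+}F_{{L_{\iP}},1}{-}F_{{L_{\iP}},2}|)$. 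For $\Delta^I(P)$, I write $\E(P{\mid}{\cal F}^I){-}\E(P){=}{k_{\iM}}\int(I(u){-}\delta_+)\phi(u,v,w)\diff u\,{E_{\iM}}(\diff v)\,{L_{\iM}}(\diff w)$, square it, take expectations and use $\E[(I(u){-}\delta_+)(I(u'){-}\delta_+)]{=}\delta_+(1{-}\delta_+)e^{-\Lambda|u{-}u'|}$ from Lemma~\ref{CorI}; unfolding $\phi$ and then evaluating the $u,u'$ integral after the substitutions $u{=}c_1{+}s$, $u'{=}c_2{+}s'$ and~\eqref{eqFa} applied to the two mRNA lifetime windows gives $\E({L_{\iM}})^2\E(e^{-\Lambda|\cdots|})$, so that $\Delta^I(P){=}\tfrac{k_-}{k_+}\E(P)^2\E(e^{-\Lambda|F_{{L_{\iM}},1}{-}F_{{L_{\iM}},2}{+}F_{{L_{\iP}},1}{-}F_{{L_{\iP}},2}{+}E_{\iM,1}{-}E_{\iM,2}{+}E_{\iP,1}{-}E_{\iP,2}|})$, using $\delta_+k_-/k_+{=}1{-}\delta_+$. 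Summing the three contributions and dividing by $\E(P)$ yields~\eqref{varP}.

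The main obstacle is the bookkeeping in these nested integrals: the process ${\cal P}$ carries in each of its points an entire Poisson point process ${\cal N}_{\iP}$, so the second moment splits into an outer diagonal/off-diagonal part and, within the outer diagonal, an inner diagonal/off-diagonal part, producing respectively the terms $\E(P)$, $\E(P){k_{\iP}}\E({L_{\iP}})\P(\cdots)$ and $\Delta^I(P)$; and in each piece one must carry out the changes of variables that collapse the products of indicator windows $\ind{a{\le}x{<}a{+}L_\bullet}$ into the absolute-value expressions of the statement through repeated use of~\eqref{eqFa}. Finiteness of all the integrals involved, hence of $\E(P^2)$ — which also supplies the missing second-moment part in the proof of Proposition~\ref{IMPErgo} — follows from the assumed finiteness of the first moments of ${E_{\iM}}$, ${E_{\iP}}$, ${L_{\iM}}$, ${L_{\iP}}$ together with the exponential decay of the correlation of $(I(t))$ established in Lemma~\ref{CorI}.
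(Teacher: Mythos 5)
Your proposal is correct and follows essentially the same route as the paper's proof in the appendix: conditioning on ${\cal F}^I$ via Proposition~\ref{CondI}, your $\phi(u,v,w)$ is exactly the paper's $\kappa(u,v,w)$, the split into $\E(P)$, the inner off-diagonal term $\int\phi^2$, and $\Delta^I(P)$ matches the paper's decomposition, and the window-overlap computations with Relation~\eqref{eqFa} and the correlation formula of Lemma~\ref{CorI} are carried out identically. The only cosmetic difference is that you invoke the Poisson second-moment formula for the indicator kernel where the paper observes that $G_{\iP}(u,v,w,{\cal N}_{\iP})$ is Poisson with parameter $\kappa(u,v,w)$ --- the same fact.
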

The proof is given in  Section~\ref{ProofFTMP} of Appendix. It follows the same arguments as in the proof of Proposition~\ref{propM} but with a more technical framework due to the mark $m$, in the  space ${\cal M}_p(\R{\times}\R_+^2)$, of the Poisson process ${\cal P}$.

If the average of $P$ does not depend on the distributions of elongation times of mRNAs and  of proteins, the second moment does depend on these distributions, and also on the distributions of lifetimes. Note however that if $\Lambda$ is large, i.e. at least one of the states of the gene is changing rapidly, Relation~\eqref{varP} shows that  the dependence of the distribution of the elongation time of an mRNA on the variance is small in this case.

When the elongation times are null, i.e.\ $E_\iM{\equiv}0$ and $E_\iP{\equiv}0$ and the lifetimes $L_\iM$ and $L_\iP$ are exponentially distributed with respective parameters $\gamma_\iM$ and $\gamma_\iP$, in this case, for $a{\in}\{\iM,\iP\}$, the variable $F_{L_a}$ is also exponential since $F_{L_a}{\steq{dist}}L_a$, a simple calculation gives the  classical formula~\eqref{PaulVar} of the Markovian model. 

Outside the  Fano parameter, the biological literature defines the {\em noise } associated to the production of proteins as the variance of $P/\E(P)$, the quantity $\Var(P)/\E(P)^2$. The above formulas and Relation~\eqref{EM} give that it can be represented as
  \begin{multline*}
    \frac{\Var(P)}{\E(P)^2}=\frac{1}{\E(P)} {+} \frac{1}{\E(M)}\P\left(\rule{0mm}{4mm}F_{{L_{\iM}}}{\ge}|E_{\iP,1}{-}E_{\iP,2}{+}F_{{L_{\iP}},1}{-}F_{{L_{\iP}},2}|\right)\\
    + \frac{k_-}{k_+}\E\left(e^{-\Lambda |F_{{{L_{\iM}}},1}{-}F_{{{L_{\iM}}},2}{+}F_{{{L_{\iP}}},1}{-}F_{{{L_{\iP}}},2}{+}E_{\iM,1}{-}E_{\iM,2}{+}E_{\iP,1}{-}E_{\iP,2}|}\right).
  \end{multline*}

  \subsection{The Equilibrium Distribution of the Number of Proteins}\ \\ 

  \noindent
When the gene is always active,  an explicit expression of the generating function of $P$ can be obtained. As it will be seen, its form depends on the whole distribution of the lifetimes and elongation times. Nevertheless, by using appropriate scalings, it can be used to get limit results for its distribution and therefore some insight. This is the purpose of  Section~\ref{Sec-AppP}.

Note that for classical Markovian models, without elongation times in particular,   Relation~(26) of Bokes et al.~\cite{Bokes} gives,  via an analytic approach,  an explicit expression of the joint  generating function of $M$ in terms of an hypergeometric function.

\begin{prop}[Generating Function of $P$]\label{GenP}
If the gene is always active, i.e.\  the process $(I(t))$ is constant and equal to $1$, and  $P$ is the random variable defined by Relation~\eqref{EqEqP}, the distribution of $P$ is given by, for $z{\in}[0,1]$,
\[
  \E\left(z^{P}\right)
=\exp\left({-}\int_{\R_+^2}\left[\rule{0mm}{4mm}1{-}e^{{-}\left( 1{-}z\right)k_{\iP}\E(L_{\iP})\P\left(F_{L_{\iP}}{+}E_{\iP}{\in}(u{-}w,u)\right)}\right] k_{\iM}\diff u L_{\iM}(\diff w)\right),
\]
where $F_{L_{\iP}}$ and $L_{\iP}$  are two independent random variables and the distribution of $F_{L_{\iP}}$ has density $\P(L_{\iP}{\ge}u)/\E(L_{\iP})$ on $\R_+$.
\end{prop}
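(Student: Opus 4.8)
The plan is to compute the generating function $\E(z^P)$ directly from the Poisson representation \eqref{EqP2}--\eqref{EqPG2}, specialized to the case $I\equiv 1$. Writing $z^P = e^{-\xi \croc{{\cal P},G_{\iP}}}$ with $e^{-\xi}=z$, i.e. $\xi = -\log z$, I would apply Proposition~\ref{Pois-lapois} (the Laplace transform formula for the marked Poisson process ${\cal P}$ with intensity $\nu_{\cal P}$ from \eqref{nup}), which gives
\[
\E\left(z^P\right)=\exp\left({-}k_\iM\int_{\R{\times}\R_+^2}\left(1{-}\E\left(z^{G_{\iP}(u,v,w,{\cal N}_{\iP})}\right)\right)\diff u\, E_\iM(\diff v) L_\iM(\diff w)\right).
\]
So the first step reduces everything to evaluating, for fixed $(u,v,w)$, the inner expectation $\E(z^{G_{\iP}(u,v,w,{\cal N}_{\iP})})$, where ${\cal N}_{\iP}$ is a Poisson process on $\R{\times}\R_+^2$ with intensity $k_\iP\diff x{\otimes}E_\iP{\otimes}L_\iP$ and $G_{\iP}(u,v,w,m)=\int \ind{x+v\le u<x+v+w,\; y\le x<y+z}\,m(\diff x,\diff y,\diff z)$.

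The second step is to recognize that $G_{\iP}(u,v,w,{\cal N}_{\iP})$ is itself a Poisson-thinning count: it counts the points $(x,y,z)$ of ${\cal N}_{\iP}$ falling in the region $\{x\in(u-v-w,u-v],\; y\le x<y+z\}$. Since ${\cal N}_{\iP}$ is Poisson, this count is a Poisson random variable whose parameter is the $\nu$-mass of that region, namely
\[
\lambda(u,v,w)=k_\iP\int_{u-v-w}^{u-v}\P(E_\iP\le x<E_\iP+L_\iP)\,\diff x
= k_\iP\,\E(L_\iP)\,\P\big(F_{L_\iP}+E_\iP\in(u-v-w,u-v]\big),
\]
using Relation~\eqref{eqFa}/the $F_X$-convention exactly as in the proof of Proposition~\ref{theoMemp}. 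Hence $\E(z^{G_{\iP}(u,v,w,{\cal N}_{\iP})})=\exp(-(1-z)\lambda(u,v,w))$, which is the standard generating function of a Poisson variable. Plugging this back in, and then performing the change of variables $u\mapsto u+v$ (so that the $v$-integral over $E_\iM(\diff v)$ disappears, since the integrand no longer depends on $v$ after the shift — exactly the ``$E_\iM$ drops out of the generating function'' phenomenon already visible in \eqref{meanP}), the double integral collapses to the claimed expression over $\R_+^2$ with $\P(F_{L_\iP}+E_\iP\in(u-w,u))$.

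The only real subtlety — the step I'd expect to be the main obstacle — is handling the conditioning/independence structure cleanly when I assert that $G_{\iP}(u,v,w,{\cal N}_{\iP})$, a functional of the \emph{mark} ${\cal N}_{\iP}$ of ${\cal P}$, can be treated by applying the Poisson Laplace-transform formula twice (once for ${\cal P}$, once inside for ${\cal N}_{\iP}$), as sketched in the ``How to Handle Functionals of ${\cal P}$'' subsection. Since $I\equiv 1$ here there is no ${\cal F}^I$-conditioning to worry about, so this is really just a matter of invoking the intensity $\nu_{\cal P}={k_\iM}\diff u\otimes E_\iM\otimes L_\iM\otimes Q_\iP$ and Fubini, together with the fact that ${\cal N}_{\iP}$ under $Q_\iP$ is an ordinary Poisson point process, so the iterated computation is legitimate. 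Once that bookkeeping is in place, the remaining manipulations — the Poisson-count identification, the use of \eqref{eqFa} to introduce $F_{L_\iP}$, and the shift in $u$ — are routine, and one recovers the stated formula. (Setting $E_\iP\equiv 0$ and $L_\iM, L_\iP$ exponential then specializes to Proposition~\ref{CorGenP}, which is a useful consistency check.)
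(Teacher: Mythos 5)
Your proposal is correct and follows essentially the same route as the paper's proof: apply the Poisson Laplace-transform formula (Proposition~\ref{Pois-lapois}) to ${\cal P}$, identify $G_{\iP}(u,v,w,{\cal N}_{\iP})$ as a Poisson count with parameter $k_\iP\E(L_\iP)\P(F_{L_\iP}{+}E_\iP{+}v{\le}u{<}F_{L_\iP}{+}E_\iP{+}v{+}w)$ via Relation~\eqref{eqFa}, and shift $u$ to eliminate the $E_\iM(\diff v)$ integral. The only cosmetic difference is that the paper phrases the inner step as a second application of Proposition~\ref{Pois-lapois} rather than as a Poisson-thinning count, which is the same computation.
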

\begin{proof}
Relations~\eqref{EqP2} and~\eqref{EqPG2} defining $P$ and~\eqref{Pois-lapois}  of Appendix for the Laplace transform of Poisson point processes give
  \begin{align*}
    \E\left(\exp\left(-\xi P\right)\right)
    &=\E\left(\exp\left(-\xi \int_{\cal S}G_{\iP}(u,v,w,m){\cal P}(\diff u, \diff v,\diff w,\diff m)\right)\right)\\
  & =\exp\left({-}\int_{\cal S} \left[1{-}e^{-\xi G_{\iP}(u,v,w,m) }\right] \nu_{\cal P}(\diff u,\diff v,\diff w,\diff m)\right)\\
&=   \int_{\cal S} \left (1{-}\E\left( e^{-\xi G_{\iP}(u,v,w,{\cal N}_{\iP}) }\right)\right) k_{\iM}\diff u E_{\iM}(\diff v)L_{\iM}(\diff w),
  \end{align*}
  where, as before,  ${\cal N}_{\iP}$ is a Poisson process whose distribution is $Q_\iP$.  For $(u,v,w){\in}\R{\times}\R_+^2$, by using again Relation~\eqref{Pois-lapois} for the Laplace transform of the Poisson process ${\cal N}_\iP$,
\begin{align}
  \E\left(e^{-\xi G_{\iP}(u,v,w,{\cal N}_{\iP}) }\right)\label{Dimaux}\\
&\hspace{-1cm}=\E\left(\exp\left({-}\xi\int_{\R{\times}\R_+^2}\ind{\substack{x{+}v{\le}u{<}x{+}v{+}w \\      y{\le} x{<}y{+}z}}{\cal N}_{\iP}(\diff x, \diff y,\diff z)\right)\right)\notag \\ 
  &\hspace{-1cm} =\exp\left({-}\left( 1{-}e^{-\xi}\right)\int_{\R{\times}\R_+^2}\ind{\substack{x{+}v{\le}u{<}x{+}v{+}w\\y{\le} x{<}y{+}z}}k_{\iP}\diff xE_{\iP}(\diff y)L_{\iP}(\diff z)\right)\notag \\
  &\hspace{-1cm}=\exp\left({-}\left( 1{-}e^{-\xi}\right)\int_{\R_+}\P\left(\substack{x{+}E_{\iP}{+}v{<}u{<}x{+}E_{\iP}{+}v{+}w\\0{\le} x{<}L_{\iP}}\right)k_{\iP}\diff x\right)\notag \\
& \hspace{-1cm} =\exp\left({-}\left( 1{-}e^{-\xi}\right)k_{\iP}\E(L_{\iP})\P\left(F_{L_{\iP}}{+}E_{\iP}{+}v{\le}u{<}F_{L_{\iP}}{+}E_{\iP}{+}v{+}w\right)\right),\notag 
\end{align}
with the same arguments as in the proof of Relation~\eqref{VarM}. Hence,
  \begin{multline*}
    \int_{\cal S} \left[1{-}e^{-\xi G_{\iP}(u,v,w,m) }\right] \nu_{\cal P}(\diff u,\diff v,\diff w,\diff m)\\
    =\int_{\cal S}\left[1{-}\exp\left({-}\left( 1{-}e^{-\xi}\right)k_{\iP}\E(L_{\iP})\P\left(u{-}v{-}w{\le}F_{L_{\iP}}{+}E_{\iP}{<}u{-}v\right)\right)\right] \\ \hfill  k_{\iM}\diff u E_{\iM}(\diff v) L_{\iM}(\diff w)\\
       =\int_{\cal S}\left[1{-}\exp\left({-}\left( 1{-}e^{-\xi}\right)k_{\iP}\E(L_{\iP})\P\left(u{-}w{\le}F_{L_{\iP}}{+}E_{\iP}{<}u\right)\right)\right] k_{\iM}\diff u  L_{\iM}(\diff w),
  \end{multline*}
which gives the desired formula for the generating function. 
\end{proof}
The explicit expression of the generating function of $P$  for Markovian models   is detailed  in Proposition~\ref{CorGenP} of Section~\ref{MarkSec}.  

\subsection{The Joint Distribution of the Numbers of mRNAs and Proteins}\ \\
This topic is investigated in few  references of the literature such as  Bokes et al.~\cite{Bokes} and Taniguchi et al.~\cite{Taniguchi}. Its motivation lies in the fact that it could be useful if some information on the number of mRNAs could be extracted from  knowledge on the number of proteins. 

We begin to study the covariance function of $M$ and $P$. Relations~\eqref{EqEqM2} and~\eqref{EqP2} give the representations
\begin{equation}\label{eqMPdef}
  \begin{cases}
M\displaystyle =\int_{\cal S} I(u) G_{\iM}(u,v,w)\, {\cal P}(\diff u,\diff v, \diff w),\\\ \\
P\displaystyle =\int_{{\cal S}} I(u)G_{\iP}(u,v,w,m)\, {\cal P}(\diff u,\diff v, \diff w, \diff m),
  \end{cases}
\end{equation}
  where the functions $G_\iM$ and $G_\iP$ are defined by Relations~\eqref{EqMG1} and~\eqref{EqPG2}.
By using Proposition~\ref{CondI} as before on the conditional distribution of $(M,P)$ with respect to ${\cal F}^I$,  the $\sigma$-field associated to the state of the gene. See Section~\ref{mRNAsec}. Relation~\eqref{PoisVar} for the covariance of functionals of Poisson processes gives the formula
\begin{multline*}
C^I\steq{def}  \E\left(MP|{\cal F}^I\right){-}\E\left(M|{\cal F}^I\right)\E\left(P|{\cal F}^I\right)\\
  =\int_{\cal S} I(u)G_{\iM}(u,v,w,m)G_{\iP}(u,v,w,m)\nu_{\cal P}(\diff u,\diff v, \diff w,\diff m).
\end{multline*}
By integrating this identity and Relations~\eqref{EqMG1} and ~\eqref{EqPG2}  defining $G_\iM$ and $G_\iP$, we obtain that $\E(C^I)$ is equal to 
\begin{align*}
\int_{\cal S} \E(I(u))\ind{v{\le}u{<}v{+}w}&\int_{\R{\times}\R_+^2}\ind{\substack{x{+}v{\le}u{<}x{+}v{+}w \\y{\le} x{<}y{+}z}}m(\diff x,\diff y,\diff z)\nu_{\cal P}(\diff u,\diff v, \diff w,\diff m)\\
=\delta_+\int_{\cal S} \ind{v{\le}u{<}v{+}w}&\E\left(\int_{\R{\times}\R_+^2}\ind{\substack{x{+}v{\le}u{<}x{+}v{+}w \\y{\le} x{<}y{+}z}}{\cal N}_\iP(\diff x,\diff y,\diff z)\right)k_\iM \diff u E_\iM(\diff v)L_\iM(\diff w)\\
&= \delta_+{k_{\iM}}{k_{\iP}}\E\left(\int\ind{0{\le}u{<}{L_{\iM}}} \int_{\R^2}\ind{\substack{x{\le}u{<}x{+}{L_{\iM}}\\{E_{\iP}}{\le} x{<}{E_{\iP}}{+}{L_{\iP}}}}\diff x\diff u\right)\\
&=\delta_+{k_{\iM}}{k_{\iP}}\E(L_\iP)\E\left(\left(L_{\iM}{-}F_{L_{\iP}}-E_{\iP}\right)^+\right),
\end{align*}
hence, with Relation~\eqref{eqFa},
\[
\E(C^I)= \E(P)\P\left(F_{L_{\iM}}{\ge}F_{L_{\iP}}{+}E_{\iP}\right).
\]
Concerning the product of conditional expectations in the definition of $C^I$, the following identities have established in the proof of Propositions~\ref{propM} and~\ref{FTMP},
\begin{align*}
\E\left(M{\mid} {\cal F}^I\right) &={k_{\iM}}\int_{\R} I(u)\P({E_{\iM}}{<}u{<}{E_{\iM}}{+}{L_{\iM}})\diff u,\\
\E\left(P{\mid}{\cal F}^I\right) &= {k_{\iM}}{k_{\iP}}\int_{\R}I(u)\E\left(\int_{\R}\ind{\substack{x{+}{E_{\iM}}{\le}u{<}x{+}{E_{\iM}}{+}{L_{\iM}}\\{E_{\iP}}{\le} x{<}{E_{\iP}}{+}{L_{\iP}}}}\diff x\right)\diff u.
\end{align*}
Using the same argument as before to express a product of integrals as a double integral, this gives the identity
\begin{multline*}
\E\left(\E\left(M|{\cal F}^I\right)\E\left(P|{\cal F}^I\right)\right)/(k_{\iM}^2k_{\iP})\\=
        \int_{\R^3} \E\left(I(u)I(u')\right)\E\left(\ind{E_{\iM,1}{\le}u{<}E_{\iM,1}{+}L_{\iM,1}}
         \ind{\substack{x{+}{E_{\iM,2}}{\le}u'{<}x{+}{E_{\iM,2}}{+}{L_{\iM,2}}\\{E_{\iP}}{\le} x{<}{E_{\iP}}{+}{L_{\iP}}}}\right)\diff x\diff u'\diff u.
\end{multline*}
By using Corollary~\ref{CorI} again, $\E[I(u)I(u')]{=}\delta_+(\delta_+{+}(1{-}\delta_+)\exp(\Lambda |u{-}u'|)$, for $u$, $u'{\ge}0$, the last expression is the sum of two terms,
\[
         \delta_+^2\int_{\R^3}\E\left(\ind{0{\le}u{<}L_{\iM,1}}
         \ind{\substack{0{\le}u'{<}{L_{\iM,2}}\\0{\le} x{<}{L_{\iP}}}}\right)\diff x\diff u'\diff u
=         \delta_+^2\E(L_\iM)^2\E(L_\iP),
\]
and
\[
         \delta_+(1{-}\delta_+)\int_{\R^3} e^{{-}\Lambda|u{-}u'|} \E\left(\ind{E_{\iM,1}{\le}u{<}E_{\iM,1}{+}L_{\iM,1}}
         \ind{\substack{x{+}{E_{\iM,2}}{\le}u'{<}x{+}{E_{\iM,2}}{+}{L_{\iM,2}}\\{E_{\iP}}{\le} x{<}{E_{\iP}}{+}{L_{\iP}}}}\right)\diff x\diff u'\diff u,
\]
which is, after the same manipulations as in the proof of Proposition~\ref{FTMP}, 
\[
         \delta_+(1{-}\delta_+)\E(L_\iM)^2\E(L_\iP)\E\left(e^{{-}\Lambda| F_{L_{\iM},2}{-}F_{L_{\iM},1}{+}F_{L_{\iP}}{+}E_{\iP}{+}{E_{\iM,2}{-}E_{\iM,1}})|}\right).
         \]
We have therefore proved the following proposition. 
\begin{prop}[Covariance of $M$ and $P$]
If $M$ and $P$ are the random variables defined by Relation~\eqref{EqEqM} and~\eqref{EqEqP}, then the covariance of $M$ and $P$ is given by 
\begin{multline}\label{CovMP}
\frac{\Cov(M,P)}{\E(P)} = \P\left(F_{L_{\iM}}{\ge}F_{L_{\iP}}{+}E_{\iP}\right){+}\E(M)
\\+\frac{k_-}{k_+}\E(M)\E\left(e^{{-}\Lambda| F_{L_{\iM},2}{-}F_{L_{\iM},1}{+}{E_{\iM,2}{-}E_{\iM,1}}{+}F_{L_{\iP}}{+}E_{\iP}|}\right),
\end{multline}
  for  $E_{\iM,1}$ and $E_{\iM,2}$, are independent random variables with distribution $E_\iM$,  and   $F_{L_\iM,1}$ and $F_{L_\iM,2}$ are independent random variables with density $\P(L_\iM{\ge}u)/\E(L_\iM)$ on $\R_+$, $F_{L_{\iP}}$ is  a random variable with density $\P(L_\iP{\ge}u)/\E(L_\iP)$ on $\R_+$.
\end{prop}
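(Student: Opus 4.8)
\textbf{The plan} is to compute $\Cov(M,P)$ by conditioning on the $\sigma$-field ${\cal F}^I$ generated by the gene process and applying the law of total covariance,
\[
\Cov(M,P)=\E\bigl(\Cov(M,P\mid{\cal F}^I)\bigr)+\Cov\bigl(\E(M\mid{\cal F}^I),\E(P\mid{\cal F}^I)\bigr).
\]
By Proposition~\ref{CondI}, conditionally on ${\cal F}^I$ the restriction ${\cal P}^I$ of ${\cal P}$ to the instants where the gene is active is a marked Poisson point process with intensity $\nu_{\cal P}^I$, so the mean and covariance formulas for functionals of Poisson processes, Relations~\eqref{PoisMean} and~\eqref{PoisVar}, can be applied to the representation~\eqref{eqMPdef} of the pair $(M,P)$, with the functions $G_\iM$ and $G_\iP$ of~\eqref{EqMG1} and~\eqref{EqPG2}.

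\textbf{The conditional-covariance term.} Relation~\eqref{PoisVar} gives that $\Cov(M,P\mid{\cal F}^I)$ equals $\int_{\cal S} I(u)\,G_{\iM}(u,v,w)\,G_{\iP}(u,v,w,m)\,\nu_{\cal P}(\diff u,\diff v,\diff w,\diff m)$; taking the expectation replaces $I(u)$ by $\delta_+$ (Lemma~\ref{CorI}), and integrating the mark $m$ against $Q_\iP$ (the intensity of ${\cal N}_\iP$ being $k_\iP\,\diff x\otimes E_\iP\otimes L_\iP$) leaves a deterministic integral in $u$ and $x$. A translation by the elongation times and the size-biasing substitution turn it into $\delta_+ k_\iM k_\iP\E(L_\iP)\,\E\bigl((L_\iM-F_{L_\iP}-E_\iP)^+\bigr)$, and Relation~\eqref{eqFa} rewrites this last expectation, producing the contribution $\E(P)\,\P(F_{L_\iM}\ge F_{L_\iP}+E_\iP)$.

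\textbf{The covariance of the conditional means.} Here I use the explicit expressions of $\E(M\mid{\cal F}^I)$ and $\E(P\mid{\cal F}^I)$ already obtained in the proofs of Propositions~\ref{propM} and~\ref{FTMP}; both are integrals of $I(u)$ against deterministic kernels. Expanding the product of the two integrals as a triple integral in $u,u',x$ and inserting the correlation identity $\E[(I(u)-\delta_+)(I(u')-\delta_+)]=\delta_+(1-\delta_+)e^{-\Lambda|u-u'|}$ from Lemma~\ref{CorI} (Relation~\eqref{CorII}), the same sequence of manipulations as in Proposition~\ref{FTMP} — shifting by the $E_{\iM,i}$ and $E_\iP$, then size-biasing the lifetimes — collapses it to $\delta_+(1-\delta_+)k_\iM^2 k_\iP\E(L_\iM)^2\E(L_\iP)\,\E\bigl(e^{-\Lambda|F_{L_\iM,2}-F_{L_\iM,1}+E_{\iM,2}-E_{\iM,1}+F_{L_\iP}+E_\iP|}\bigr)$ once the deterministic part $\delta_+^2 k_\iM^2 k_\iP\E(L_\iM)^2\E(L_\iP)=\E(M)\E(P)$ is subtracted off.

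\textbf{Finishing.} Adding the two contributions, dividing by $\E(P)=\delta_+ k_\iM k_\iP\E(L_\iM)\E(L_\iP)$, and simplifying with $\E(M)=\delta_+ k_\iM\E(L_\iM)$ from Relation~\eqref{EM} together with $(1-\delta_+)/\delta_+=k_-/k_+$ yields Relation~\eqref{CovMP}. The only delicate point is the bookkeeping in the second term — tracking which indicator constraints survive after integrating out $x$, performing the shifts by $E_{\iM,i}$ and $E_\iP$, and identifying the correct size-biased variables $F_{L_\iM,i}$ and $F_{L_\iP}$ inside the exponent; this is precisely the hardest step of the proof of Proposition~\ref{FTMP}, so no genuinely new obstacle arises, only a careful computation.
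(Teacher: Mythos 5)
Your proposal is correct and follows essentially the same route as the paper: the law of total covariance over ${\cal F}^I$, the conditional Poisson covariance formula~\eqref{PoisVar} applied to the representation~\eqref{eqMPdef} for the first term, and the correlation function of $(I(t))$ (Relation~\eqref{CorII}) for the covariance of the conditional means. One remark: the computation you describe (and the paper's own derivation) actually produces $\Cov(M,P)/\E(P)=\P\left(F_{L_{\iM}}{\ge}F_{L_{\iP}}{+}E_{\iP}\right)+\frac{k_-}{k_+}\E(M)\E\left(e^{-\Lambda|F_{L_\iM,2}-F_{L_\iM,1}+E_{\iM,2}-E_{\iM,1}+F_{L_\iP}+E_\iP|}\right)$, since the $\delta_+^2$ piece cancels exactly against $\E(M)\E(P)$ --- so the isolated additive term $\E(M)$ in the displayed Relation~\eqref{CovMP} does not come out of either derivation and appears to be a typo in the statement rather than a gap in your argument.
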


The lifetime $L_\iM$ of the mRNAs is usually much smaller than the lifetime $L_\iP$ of proteins, this implies that the same property also holds for the variables $F_{L_{\iM}}$ and  $F_{L_{\iP}}$. If, additionally, the average number of mRNAs $\E(M)$ is small,  the right-hand-side of Relation~\eqref{CovMP} is small, so that $\Cov(M,P)$ should be close to $0$. This property has already been noticed in some measurements, see Taniguchi et al.~\cite{Taniguchi} for example.  We conclude this section with an explicit representation of the generating function of the random variable $(M,P)$ when the gene is always active. 
\begin{prop}[Joint Distribution of $M$ and $P$]
When the gene is always active, \\  for $z_\iM$ and $z_\iP{\in}(0,1)$,
\begin{multline}
\E\left(z_\iM^Mz_\iP^P\right)\\ = \exp\left[\rule{0mm}{6mm} {-}k_\iM\E(L_\iM) \int_{\R_+} \left(1{-} z_\iM\exp\left[{-}\left( 1{-}z_\iP\right)k_{\iP}\E(L_{\iP})\P\left(F_{L_{\iP}}{+}E_{\iP}{\le}u\right)\rule{0mm}{5mm}\right]\right)F_{L_\iM}(\diff u) \right.\\\left. 
+k_\iM \int_{\R_+^2} \left(1{-} \exp\left[{-}\left( 1{-}z_\iP\right)k_{\iP}\E(L_{\iP})\P\left(F_{L_{\iP}}{+}E_{\iP}{\in}(u, u{+}w) \right)\rule{0mm}{5mm}\right]\right) \diff u  L_\iM(\diff w)\rule{0mm}{6mm}\right],
\end{multline}
where $F_{L_{\iM}}$, $F_{L_{\iP}}$ and $E_{\iP}$  are independent random variables and, for $a{\in}\{\iM,\iP\}$, the random variable $F_{L_{a}}$ has the density $\P(L_{a}{\ge}u)/\E(L_{a})$ on $\R_+$.
\end{prop}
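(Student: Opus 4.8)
The plan is to read off $\E(z_\iM^M z_\iP^P)$ directly from the Poisson representation~\eqref{eqMPdef} of $(M,P)$ with $I\equiv 1$, rather than from balance equations, following the proof of Proposition~\ref{GenP} but carrying along both coordinates. Set $g(u,v,w,m):=(\log z_\iM)\,G_\iM(u,v,w)+(\log z_\iP)\,G_\iP(u,v,w,m)$, so that $z_\iM^M z_\iP^P=\exp(\croc{{\cal P},g})$ with $g\le 0$ since $z_\iM,z_\iP\in(0,1)$ and $G_\iM,G_\iP\ge 0$. Proposition~\ref{Pois-lapois} of the appendix then gives
\[
\E\left(z_\iM^M z_\iP^P\right)=\exp\left(-\int_{\cal S}\left(1-z_\iM^{G_\iM(u,v,w)}\,z_\iP^{G_\iP(u,v,w,m)}\right)\nu_{\cal P}(\diff u,\diff v,\diff w,\diff m)\right).
\]
The integrand is admissible: by $1-ab\le(1-a)+(1-b)$ and $1-z^{n}\le n(1-z)$ for $a,b,z\in[0,1]$, $n\in\N$, it is bounded by $(1-z_\iM)G_\iM+(1-z_\iP)G_\iP$, whose $\nu_{\cal P}$-integral equals $(1-z_\iM)\E(M)+(1-z_\iP)\E(P)<+\infty$ by Propositions~\ref{propM} and~\ref{FTMP}.

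First I would integrate out the mark $m$ against $Q_\iP$. Since $G_\iM$ does not involve $m$, only $\E(z_\iP^{G_\iP(u,v,w,{\cal N}_\iP)})$ is needed, and $G_\iP(u,v,w,{\cal N}_\iP)$ counts the points of the Poisson process ${\cal N}_\iP$ (intensity $k_\iP\diff x\otimes E_\iP\otimes L_\iP$) falling in a fixed Borel set, hence is Poisson distributed; this is the computation of Relation~\eqref{Dimaux} (with $z_\iP$ in place of $e^{-\xi}$), which yields
\[
\E\left(z_\iP^{G_\iP(u,v,w,{\cal N}_\iP)}\right)=\exp\left(-(1-z_\iP)\,k_\iP\E(L_\iP)\,\P\left(F_{L_\iP}+E_\iP+v\le u< F_{L_\iP}+E_\iP+v+w\right)\right).
\]
After this, the exponent of $\E(z_\iM^M z_\iP^P)$ becomes $-k_\iM\int_{\R\times\R_+^2}\bigl(1-z_\iM^{\ind{v\le u<v+w}}\,\Phi(u-v,w)\bigr)\,\diff u\,E_\iM(\diff v)\,L_\iM(\diff w)$, where $\Phi(t,w):=\exp\bigl(-(1-z_\iP)k_\iP\E(L_\iP)\,\P(F_{L_\iP}+E_\iP\in(t-w,t])\bigr)$.

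Next I would observe that this integrand depends on $(u,v)$ only through $t:=u-v$, so by translation invariance of Lebesgue measure the integration in $v$ against the probability measure $E_\iM$ drops out, leaving the exponent equal to
\[
-k_\iM\int_{\R}\int_{\R_+}\left(1-z_\iM^{\ind{0\le t<w}}\,\Phi(t,w)\right)\,\diff t\,L_\iM(\diff w).
\]
I would then split the inner integral into $\{t<0\}$, $\{0\le t<w\}$ and $\{t\ge w\}$. On $\{t<0\}$ the probability inside $\Phi$ is $0$ (as $F_{L_\iP}+E_\iP\ge 0$) and the integrand vanishes. On $\{0\le t<w\}$ the indicator equals $1$ and $\P(F_{L_\iP}+E_\iP\in(t-w,t])=\P(F_{L_\iP}+E_\iP\le t)$, so, by Fubini's theorem, $k_\iM\E(\int_0^{L_\iM}h(t)\diff t)=k_\iM\E(L_\iM)\,\E(h(F_{L_\iM}))$ turns this slice into the first term of the statement. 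On $\{t\ge w\}$ the indicator equals $0$, and the shift $s:=t-w$ turns this slice into the second term, the half-open interval $(s,s+w]$ being replaced by the open one $(u,u+w)$ by absolute continuity of $F_{L_\iP}$. Summing the three contributions gives the announced formula.

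The only real difficulty here is bookkeeping: keeping the two substitutions ($t=u-v$, then $s=t-w$) consistent, checking that the $\{t<0\}$ slice contributes nothing, and seeing that the size-biasing that turns $L_\iM$ into $F_{L_\iM}$ shows up in the first term precisely because there $t$ runs over $(0,w)$ whereas in the second it runs over $(w,+\infty)$. No estimate is required beyond the integrability already used to justify the exponential formula. As a sanity check, setting $z_\iM=1$ merges the first two contributions into the single integral of Proposition~\ref{GenP}, recovering the generating function of $P$ alone.
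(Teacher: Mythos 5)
Your proof is correct and follows essentially the same route as the paper's: apply the Laplace functional of ${\cal P}$ to $a_\iM G_\iM+a_\iP G_\iP$, integrate out the mark via the computation of Relation~\eqref{Dimaux}, then reduce by translation invariance; you merely make explicit the ``standard calculations'' (the change of variables $t=u-v$, the three slices, and the size-biasing producing $F_{L_\iM}$) that the paper leaves implicit. One point worth flagging: your derivation yields a \emph{minus} sign in front of the second integral in the exponent, i.e.\ $\exp[-A-B]$ with both $A,B\ge 0$, whereas the proposition as printed has ${+}k_\iM$ there; your own sanity check at $z_\iM=1$ against Proposition~\ref{GenP} (and the fact that a generating function must be $\le 1$) confirms that the minus sign is the correct one and that the ``$+$'' in the displayed statement is a typo, so do not adjust your computation to match it.
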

\begin{proof}
For $a_\iM$, $a_\iP{>}0$,  Relation~\eqref{eqMPdef} gives that
  \[
  a_\iM M{+}a_\iP P= \int_{\cal S} F(u,v,w,m) \,{\cal P}(\diff u, \diff v,\diff w,\diff m)
  \]
where
\begin{align*}
  F(u,v,w,m)&\steq{def} a_\iM G_{\iM}(u,v,w)+ a_\iP G_{\iP}(u,v,w,m)\\&=a_\iM\ind{v{\le}u{<}v{+}w}
  {+}a_\iP \int_{\R{\times}\R_+^2}\ind{\substack{x{+}v{\le}u{<}x{+}v{+}w\\y{\le} x{<}y{+}z}}m(\diff x,\diff y,\diff z).
\end{align*}
We have, by Relation~\eqref{Pois-lapois}  of the appendix,
\[
\E\left(e^{-a_\iM M-a_\iP P}\right){=} \exp\left({-}\int_{\cal S} \left(1{-} e^{{-}F(u,v,w,m)}\right) \,\nu_{\cal P}(\diff u, \diff v,\diff w,\diff m)\right),
\]
hence,
\begin{multline*}
{-}\log\left(\E\left(e^{-a_\iM M-a_\iP P}\right)\right)\\
{=}\int_{\R{\times}\R_+^2} \left(1{-}e^{{-}a_\iM G_\iM(u,v,w)} \E\left( e^{-a_\iP G_\iP(u,v,w,{\cal N}_\iP)}\right)\right)k_\iM \diff u E_\iM(\diff v) L_\iM(\diff w),
\end{multline*}
and, if we use Relation~\eqref{Dimaux},
\[
\E(\left(e^{-a_\iP G_{\iP}(u,v,w,{\cal N}_2)) }\right)
{=}\exp\left({-}\left( 1{-}e^{-a_\iP}\right)k_{\iP}\E(L_{\iP})\P\left(F_{L_{\iP}}{+}E_{\iP}{\in}(u{-}v{-}w, u{-}v) \right)\right),
\]
hence
\begin{multline*}
{-}\log\left(\E\left(e^{-a_\iM M-a_\iP P}\right)\right)\\
{=}\int_{\R{\times}\R_+^2} \left(1{-} \exp\left[{-}a_\iM\ind{v{\le}u{<}v{+}w}{-}\left( 1{-}e^{-a_\iP}\right)k_{\iP}\E(L_{\iP})\P\left(F_{L_{\iP}}{+}E_{\iP}{\in}(u{-}v{-}w, u{-}v) \right)\right]\right)\\k_\iM \diff u E_\iM(\diff v) L_\iM(\diff w).
\end{multline*}
We conclude the proof with standard calculations.
\end{proof}

\section{Convergence Results}\label{Sec-AppP}
The purpose of this section is of revisiting, via rigorous convergence theorems,  several classical research topics of the literature on the stochasticity of the gene expression. Most of studies in this domain take place in the Markovian setting of Section~\ref{MarkSec}  without the elongation of mRNAs and proteins. Kolmogorov's equations associated to the Markov process, the ``master equation'' as it is usually presented, are the starting point of these studies.

As the equilibrium equations cannot be solved explicitly, several scenarios  are investigated, via scalings,  to get insight on the equilibrium distribution of the number of proteins:
Fast Switching rates between active and inactive states, mRNAs with short lifetimes, proteins with long lifetimes,  gene with large expression rates, \ldots 
See Friedman et al.~\cite{Friedman}, Raj et al.~\cite{Raj}, Shahrezaei and Swain~\cite{Swain2}, Swain et al.~\cite{Swain} and Thattai and van Oudenaarden~\cite{Thattai}.  See Bokes et al.~\cite{Bokes} for a quite extensive analysis of the Markovian model.  We will look at three different scaling situations in the light of the results derived in the previous sections. 
\begin{enumerate}
\item  Long{-}Lived Proteins.\\
  The distribution of lifetimes of proteins is an exponential distribution with parameter $\gamma_\iP$, and its mean is converging to infinity. A central limit theorem is proved for the equilibrium of the  number $P$ of proteins.  It is shown that, in distribution,  $P{\sim} \E(P){+}\sqrt{\E(P)}{\cal N}$, where ${\cal N}$ is a centered Gaussian random variable. 
\item[]
\item Short{-}Lived mRNAs and  Long-Lived Proteins.\\
The distribution of  lifetimes of  mRNAs  and proteins are  exponential with respective  parameters $\gamma_\iM$ and $\gamma_\iP$, the mean lifetime  $1/\gamma_\iM$ of mRNAs is converging to $0$ and the death rate $\gamma_\iP$ of proteins is fixed so that  $k_\iM k_\iP/(\gamma_\iM \gamma_\iP)$, the mean  number of proteins at equilibrium, is fixed.
\item[]
\item  Short{-}Lived mRNAs and High Expression Rate of Proteins. \\
  The distribution of  lifetimes of  mRNAs  and proteins are  exponential with respective  parameters $\gamma_\iM$ and $\gamma_\iP$.  The mean lifetime  of mRNAs is converging to $0$ and the translation rate  $k_\iP$ of proteins is fixed so that  the quantity $k_\iP/\gamma_\iM$, the average number of proteins produced by an mRNA, is fixed. 
\end{enumerate}
For several of these regimes, in a Markovian setting, convergence results of this section can be seen in the context of a stochastic averaging principle: the system is driven by a fast process, associated with the mRNAs (for example), and by a slow process describing the time evolution of proteins. In the limit, the time evolution on finite time intervals of proteins can be described as if the fast process is at equilibrium at any time. A typical result states that this picture also holds for the equilibrium of the slow process,  but this is a more difficult result to prove in general. See Kurtz~\cite{Kurtz} for example.  A similar observation could also be done for the central limit theorem~\ref{CTL}, see Chapter~4 of Anderson and Kurtz~\cite{Anderson}.  A direct approach is used here to prove  convergence results by taking advantage of the explicit representation of equilibrium of Theorem~\ref{theoCVMP}.

In most cases, the gene will be assumed to be always active. The reason is that, if some results could be obtained on the distribution of $P$, we have not been able to get usable expressions to get some insight in the scaling regimes analyzed here. The next result shows that, in practice, a model with a permanently active gene is reasonable, provided some parameters are adjusted. This is a folk result of the biological literature: if the state of the gene switches more and more rapidly, then, in the limit, the model is equivalent to a model with a permanently active gene but with a reduced transcription rate. 

\subsection{Fast Switching Rates of Gene between Active and Inactive States}\ 

\noindent
Proposition~\ref{mRNAprop} shows that when the gene is always active then the equilibrium distribution of the number of mRNAs is Poisson. This is minimal from the point of view of the Fano factor in this class of models. If the gene switches quickly between the two states active/inactive, one can expect an averaging effect. The following proposition establishes this intuitive result.

\begin{prop}
If $P_N$ is the random variable defined by Relation~\eqref{EqEqP} with the activation/deactivation rates are respectively given by $k_{+}N$ and $k_{-}N$ for some scaling parameter $N{\ge}1$, then, for the convergence in distribution,
  \[
  \lim_{N\to+\infty} P_N=P,
  \]
  where the distribution of $P$ is the equilibrium distribution of the number of proteins for a model of gene expression for which the gene is always active and with the same parameters except for $k_\iM$ which is replaced by $k_\iM k_+/(k_+{+}k_-)$. 
\end{prop}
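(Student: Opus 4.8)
The plan is to work with Laplace transforms and combine the equilibrium representation of Theorem~\ref{theoCVMP} with the conditioning result of Proposition~\ref{CondI}; a coupling-from-the-past argument in the spirit of Theorem~\ref{theoCVMP} seems less convenient here, since it would require coupling the gene processes simultaneously for all values of $N$. Write $I_N$ for the stationary $\{0,1\}$-valued Markov process with activation rate $k_+N$ and inactivation rate $k_-N$, independent of the Poisson process ${\cal P}$. For every $N$ one has $\P(I_N(u){=}1){=}\delta_+$ with $\delta_+{=}k_+/(k_+{+}k_-)$, while Lemma~\ref{CorI} applied to the scaled rates gives $\Cov(I_N(u),I_N(u')){=}\delta_+(1{-}\delta_+)e^{-\Lambda N|u-u'|}$ with $\Lambda{=}k_+{+}k_-$. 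By Theorem~\ref{theoCVMP}, $P_N{=}\int_{\cal S}I_N(u)G_\iP(u,v,w,m){\cal P}(\diff u,\diff v,\diff w,\diff m)$, and $\E(P_N){=}\delta_+k_\iM\E(L_\iM)k_\iP\E(L_\iP){<}{+}\infty$ by Proposition~\ref{FTMP}, so the Laplace transform of $P_N$ determines its law.

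First I would fix $\xi{\ge}0$ and compute the conditional Laplace transform of $P_N$ given ${\cal F}^{I_N}$. Applying Proposition~\ref{CondI} with the function $\xi G_\iP$ and then integrating out the coordinate $m$ against $Q_\iP$, exactly as in the proof of Proposition~\ref{GenP}, one gets
\[
\E\left(e^{-\xi P_N}\mid{\cal F}^{I_N}\right)=\exp\left(-k_\iM\int_{\cal S}I_N(u)\,\phi_\xi(u,v,w)\,\diff u\,E_\iM(\diff v)L_\iM(\diff w)\right),
\]
where $\phi_\xi(u,v,w)\steq{def}1-\exp\left(-(1-e^{-\xi})k_\iP\E(L_\iP)\P\left(F_{L_\iP}{+}E_\iP{+}v\le u<F_{L_\iP}{+}E_\iP{+}v{+}w\right)\right)$, and $F_{L_\iP}$, $E_\iP$ are independent with the usual distributions. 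Hence $\E(e^{-\xi P_N})=\E(e^{-k_\iM\Psi_N})$ with $\Psi_N\steq{def}\int_{\cal S}I_N(u)\phi_\xi(u,v,w)\diff u\,E_\iM(\diff v)L_\iM(\diff w)$. Since $0\le\phi_\xi\le1$, $1-e^{-x}\le x$, and $\int_\R\P(a\le u<a{+}w)\diff u=w$ for any random $a$, the function $\phi_\xi$ is integrable against $\diff u\,E_\iM(\diff v)L_\iM(\diff w)$ with total mass at most $(1-e^{-\xi})k_\iP\E(L_\iP)\E(L_\iM)$.

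The core of the proof is the averaging statement $\Psi_N\to\Psi_\infty\steq{def}\delta_+\int_{\cal S}\phi_\xi(u,v,w)\diff u\,E_\iM(\diff v)L_\iM(\diff w)$ in probability as $N\to+\infty$. By Fubini and stationarity $\E(\Psi_N)=\Psi_\infty$, and using the covariance of $I_N$,
\[
\Var(\Psi_N)=\delta_+(1{-}\delta_+)\int e^{-\Lambda N|u-u'|}\phi_\xi(u,v,w)\phi_\xi(u',v',w')\,\diff u\diff u'\,E_\iM(\diff v)E_\iM(\diff v')L_\iM(\diff w)L_\iM(\diff w').
\]
The integrand is dominated by $\phi_\xi(u,v,w)\phi_\xi(u',v',w')$, which is integrable by the previous paragraph, and tends to $0$ off the diagonal $\{u{=}u'\}$; dominated convergence gives $\Var(\Psi_N)\to0$, so $\Psi_N\to\Psi_\infty$ in $L^2$, hence in probability. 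Consequently $e^{-k_\iM\Psi_N}\to e^{-k_\iM\Psi_\infty}$ in probability, and since these variables are bounded by $1$, bounded convergence yields $\E(e^{-\xi P_N})\to e^{-k_\iM\Psi_\infty}$. Finally, because $\kappa\mapsto\kappa\int_{\cal S}\phi_\xi\,\diff u\,E_\iM(\diff v)L_\iM(\diff w)$ is linear, the computation in the proof of Proposition~\ref{GenP} identifies $k_\iM\Psi_\infty=(\delta_+k_\iM)\int_{\cal S}\phi_\xi\,\diff u\,E_\iM(\diff v)L_\iM(\diff w)$ with $-\log\E(e^{-\xi P})$, where $P$ is the equilibrium number of proteins in the always-active model with transcription rate $\delta_+k_\iM=k_\iM k_+/(k_+{+}k_-)$. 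Thus $\E(e^{-\xi P_N})\to\E(e^{-\xi P})$ for every $\xi\ge0$, and pointwise convergence of Laplace transforms on $\R_+$ to that of a probability distribution on $\N$ implies convergence in distribution. The only genuinely delicate point is the domination needed for $\Var(\Psi_N)\to0$; the remaining steps are bookkeeping with formulas already established.
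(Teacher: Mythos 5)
Your proposal is correct and follows essentially the same route as the paper: condition on the gene process via Proposition~\ref{CondI} to reduce the Laplace transform of $P_N$ to an integral functional of $I_N$, then show that functional converges in $L^2$ (hence in probability) to its mean by computing its variance with the exponentially decaying correlation $e^{-\Lambda N|u-u'|}$ and applying dominated and bounded convergence. The paper packages the averaging step as a small standalone lemma for $\int I(Nu)f(u)\,\diff u$ with $f$ integrable, whereas you verify integrability and the variance bound directly for $\phi_\xi$, but the argument is the same.
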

\begin{proof}
The stationary activation/deactivation process $(I_N(t))$ can clearly  be expressed as $(I(Nt))$ where $(I(t))$ is a stationary process with switching rates $k_+$ and $k_-$

Let $f$ be an integrable function on $\R_+$ and 
\[
X_N\steq{def} \int_{\R_+} I(Nu) f(u)\,\diff u,
\]
we now prove  the convergence in distribution
\[
  \lim_{N\to+\infty} X_N=\delta_+ \int f(u)\,\diff u,
\]
with $\delta_+{\steq{def}}k_+/(k_+{+}k_-)$.  Note that the convergence is certainly true for the first moment of $X_N$, since $\E(I(t)){=}\delta_+$ for all $t{\ge}0$.

By writing the square of $X_N$ as a double integral, with Proposition~\ref{PaulProp} we obtain the relation
\begin{multline*}
\E(X_N^2)=\E\left(\int_{\R_+^2} I(Nu) I(Nu') f(u) f(u')\,\diff u \diff u'\right)\\=\left(\delta_+\int f(u)\,\diff u\right)^2+\delta_+(1{-}\delta_+) 
\int_{\R_+^2} e^{-N(k_+{+}k_-) |u{-}u'|} f(u)f(u')\,\diff u \diff u'.
\end{multline*}
Therefore,
\[
\lim_{N\to+\infty} \E\left((X_N{-}\E(X_N))^2\right)=\lim_{N\to+\infty} \E\left(X_N^2\right){-}\left(\E(X_N)\right)^2=0,
\]
in particular, $(X_N)$ converges in distribution to the limit of the sequence $(\E(X_N))$.

Denote by ${\cal F}^{I_N}$ the filtration associated to the process $(I(Nt))$. Relations~\eqref{EqP2} and~\eqref{EqPG2} for $P_N$,  and~\eqref{Pois-lapois}  of the appendix for the Laplace transform of Poisson point processes give
\begin{align*}
\E\left(\exp\left(-\xi P_N\right)\left|{\cal F}^{I_N}\right)\right.
    &=\E\left(\exp\left(-\xi \int_{\cal S}I(Nu)G_{\iP}(u,v,w,m){\cal P}(\diff u, \diff v,\diff w,\diff m)\right)\right)\\
& =\exp\left({-}\int_{\cal S} I(Nu)\left[1{-}e^{-\xi G_{\iP}(u,v,w,m) }\right] \nu_{\cal P}(\diff u,\diff v,\diff w,\diff m)\right)\\
  & =\exp\left({-}\int_{\R} I(Nu)\left[1{-}\E\left(e^{-\xi G_{\iP}(u,E_\iM,L_\iM,{\cal N}_{\iP})}\right)\right] k_{\iM}\diff u\right),
\end{align*}
where ${\cal N}_{\iP}$ is a Poisson process whose distribution is $Q_\iP$ defined by Relation~\eqref{nup}.

By using the convergence result, we obtain that
\[
\lim_{N\to+\infty} \E\left(\exp\left(-\xi P_N\right)\right)=
\exp\left({-}\int_{\R} \left[1{-}\E\left(e^{-\xi G_{\iP}(u,E_\iM,L_\iM,{\cal N}_{\iP})}\right)\right] \delta_+k_{\iM}\diff u\right),
\]
which is the Laplace transform of the variable $P$ associated to a Poisson process with the same characteristics as ${\cal P}$ except that the transcription rate is $\delta_+ k_\iM$ and that the gene is always active.
\end{proof}

\subsection{Asymptotic Behavior of the Equilibrium Distribution}\label{AppLongSec}\

In this section, the invariant distribution of the number of proteins when the gene is always active is analyzed under some scaling conditions. Several of them rely on the fact that the average lifetime of a protein is much  larger than the lifetime of an mRNA, which is of the order of $2$mn for an mRNA, where, for a protein, it is at least $30$mn. See the numbers of  Section~\ref{NumbSec} of the appendix. This approach is used in the literature to get a further insight on the distribution of $P$, i.e.\ with more information than the first two moments that have an explicit expression. It is usually done via approximations on the equation satisfied by the generating function of $P$. See  Bokes et al.~\cite{Bokes} or Shahrezaei and Swain~\cite{Swain2} where this is done via an expansion of an hypergeometric function. The representation of Proposition~\ref{GenP} of the generating function of $P$ will allow to get convergence results for the distribution of $P$ in a quite general case and without too much technicality.

\bigskip

\noindent
{\bf A Limiting Gaussian Distribution for the Number of Proteins}\\
The scaling considered here assumes that the average lifetime  of proteins goes to infinity and the other parameters are fixed. In particular these proteins should be numerous within the cell.  This setting is well suited for protein types having a large number of copies, of the order at least  of 10,000 for example.  The following result shows that a central limit theorem holds in this context.
\begin{theorem}[Central Limit Theorem]\label{CTL}
  Under the conditions
  \begin{enumerate}
  \item the gene is always active;
  \item the distribution $L_\iP$ of lifetimes of proteins is  exponential with parameter $\gamma_\iP$;
  \item  the parameters $k_\iM$, $k_\iP$ are fixed as well as the distribution $L_\iM$ of the lifetimes of mRNAs and the respective distributions $E_\iM$ and $E_\iP$ of  the elongation times of mRNAs and proteins.  The distribution  $L_\iM$ of lifetimes of mRNAs has a finite third moment, $\E(L_\iM^3){<}{+}\infty$,
  \end{enumerate}
if $P$ is the random variable defined by Relation~\eqref{EqEqP}, then  for the convergence in distribution,
  \[
  \lim_{\gamma_\iP{\to}0} \sqrt{\gamma_\iP}\left(P{-}\E(P)\right)={\cal N}\left(\sigma\right),
  \]
  where
  \[
  \sigma{=}  \sqrt{k_\iM\E(L_\iM)k_\iP}\left(1{+}k_\iP\int_{\R_+} \P\left(F_{{L_{\iM}}}{\ge}|E_{\iP,1}{-}E_{\iP,2}{+}x|\right)\,\diff x\right)^{1/2},
  \]
  and, for $a{>}0$, ${\cal N}(a)$ is a centered Gaussian random variable with standard deviation~$a$. 
\end{theorem}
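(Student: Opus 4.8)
\medskip
\noindent\textbf{Proof proposal.}
The plan is to compute the characteristic function of $\sqrt{\gamma_\iP}\,(P{-}\E(P))$ from the explicit formula of Proposition~\ref{GenP} and to identify its limit as that of a centered Gaussian. Since the gene is always active and $L_\iP$ is exponential with parameter $\gamma_\iP$, one has $\E(L_\iP){=}1/\gamma_\iP$ and, by the characterization of the exponential distribution recalled in the introduction, $F_{L_\iP}\steq{dist}L_\iP$; in particular $F_{L_\iP}$ is exponential with parameter $\gamma_\iP$ and $F_{L_\iP,1}{-}F_{L_\iP,2}$ has the Laplace density $x\mapsto(\gamma_\iP/2)e^{-\gamma_\iP|x|}$ on $\R$. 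Write $\kappa{=}k_\iP\E(L_\iP){=}k_\iP/\gamma_\iP$ and $p(u,w){=}\P(F_{L_\iP}{+}E_\iP{\in}(u{-}w,u))$. Two elementary facts will be used repeatedly: the law of $F_{L_\iP}{+}E_\iP$ has a density bounded by $\gamma_\iP$, so $p(u,w){\le}\gamma_\iP w$ (and $p{\le}1$); and $\int_{\R_+}p(u,w)\,\diff u{=}w$, so that $k_\iM\int_{\R_+^2}\kappa\,p(u,w)\,\diff u\,L_\iM(\diff w){=}\kappa\,k_\iM\E(L_\iM){=}\E(P)$. Both sides of the identity of Proposition~\ref{GenP} being analytic on the open unit disc and continuous up to the boundary, it holds also at $z{=}e^{is}$, $s{\in}\R$, and can be rewritten as
\[
\E\left(e^{isP}\right)=\exp\left(k_\iM\int_{\R_+^2}\left(e^{\kappa\,p(u,w)(e^{is}-1)}-1\right)\diff u\,L_\iM(\diff w)\right).
\]

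Put $s{=}t\sqrt{\gamma_\iP}$ and $A{=}\kappa(e^{it\sqrt{\gamma_\iP}}{-}1)$; note $\mathrm{Re}\,A{\le}0$, $|A|{\le}|t|k_\iP/\sqrt{\gamma_\iP}$, and $A^2{=}{-}t^2k_\iP^2/\gamma_\iP{+}O(\gamma_\iP^{-1/2})$. Using the expansion $e^{Ap}{-}1=Ap+A^2p^2/2+r(u,w)$, the linear part contributes $k_\iM\int Ap\,\diff u\,L_\iM(\diff w)=A\,k_\iM\E(L_\iM)=(e^{it\sqrt{\gamma_\iP}}{-}1)\E(P)$, which, combined with the de-centering factor $e^{-it\sqrt{\gamma_\iP}\E(P)}$, gives $\E(P)\big(e^{it\sqrt{\gamma_\iP}}{-}1{-}it\sqrt{\gamma_\iP}\big)\to{-}t^2k_\iM\E(L_\iM)k_\iP/2$, since $\gamma_\iP\E(P){=}k_\iM\E(L_\iM)k_\iP$. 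For the quadratic part one uses $\int_{\R_+}p(u,w)^2\,\diff u=\E\big[(w{-}|G_1{-}G_2|)^+\big]$ with $G_i=F_{L_\iP,i}{+}E_{\iP,i}$ i.i.d., so that by Relation~\eqref{eqFa}, after integrating in $w$, $k_\iM\int_{\R_+^2}p(u,w)^2\,\diff u\,L_\iM(\diff w)=k_\iM\E(L_\iM)\,\P(F_{L_\iM}{\ge}|G_1{-}G_2|)$; since $F_{L_\iP,1}{-}F_{L_\iP,2}$ is Laplace with parameter $\gamma_\iP$, dominated convergence and the translation invariance of Lebesgue's measure give
\[
\frac1{\gamma_\iP}\,\P\left(F_{L_\iM}{\ge}|G_1{-}G_2|\right)\longrightarrow\int_{\R_+}\P\left(F_{L_\iM}{\ge}|E_{\iP,1}{-}E_{\iP,2}{+}x|\right)\diff x
\]
as $\gamma_\iP\to0$. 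Adding the two pieces, $\log\E(e^{it\sqrt{\gamma_\iP}(P-\E(P))})$ tends to ${-}\sigma^2t^2/2$ with $\sigma$ exactly as in the statement, and L\'evy's continuity theorem then yields the convergence in distribution to ${\cal N}(\sigma)$. As a consistency check, the same value of $\sigma^2$ equals $\lim_{\gamma_\iP\to0}\gamma_\iP\Var(P)$, which can be read off directly from Proposition~\ref{FTMP} with $k_-{=}0$.

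The remaining, and main, step is to prove that the remainder term $k_\iM\int_{\R_+^2}|r(u,w)|\,\diff u\,L_\iM(\diff w)$ vanishes, and this is where the assumption $\E(L_\iM^3){<}{+}\infty$ is used. Since $\mathrm{Re}\,A{\le}0$ we have $|e^{Ap}|{\le}1$, hence $|r(u,w)|{\le}C|Ap(u,w)|^3$ on $\{|Ap(u,w)|{\le}1\}$ and $|r(u,w)|{\le}C|Ap(u,w)|^2$ on $\{|Ap(u,w)|{>}1\}$. The estimates $\int_{\R_+}p(u,w)^k\,\diff u{\le}\gamma_\iP^{\,k-1}w^k$ (from the density bound on $F_{L_\iP}{+}E_\iP$ and a union bound over the minimum of $k$ independent copies), together with $|A|^2\gamma_\iP{\le}t^2k_\iP^2$, $|A|^3\gamma_\iP^{3/2}{\le}|t|^3k_\iP^3$, and the fact that $|Ap(u,w)|{>}1$ forces $w{>}1/(|t|k_\iP\sqrt{\gamma_\iP})$, bound the two regions by $C\sqrt{\gamma_\iP}\,\E(L_\iM^3)$ and $C\,\E\big(L_\iM^2\ind{L_\iM>1/(|t|k_\iP\sqrt{\gamma_\iP})}\big){\le}C'\sqrt{\gamma_\iP}\,\E(L_\iM^3)$ respectively, both $o(1)$. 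I expect this last, uniform tail estimate --- controlling the tail of the exponential series with only a third moment on $L_\iM$ and no assumption on the tails of $E_\iM$, $E_\iP$ --- to be the delicate part; the rest is bookkeeping with Proposition~\ref{GenP} and Relation~\eqref{eqFa}. An alternative would be to condition on the mRNA point process and the ribosome binding instants, under which $P$ becomes a sum of independent Bernoulli variables amenable to a Lindeberg argument, but the analytic route via Proposition~\ref{GenP} looks shorter.
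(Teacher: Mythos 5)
Your proposal is correct, and at its core it follows the same strategy as the paper: both start from the explicit formula of Proposition~\ref{GenP}, expand the integrand of the log-transform to second order, identify the quadratic term as $\sigma^2$ via Relation~\eqref{eqFa} and the exponential/Laplace form of $F_{L_\iP,1}{-}F_{L_\iP,2}$, and kill the remainder using $\E(L_\iM^3){<}{+}\infty$. The differences are in execution. First, you work with the characteristic function, extending Proposition~\ref{GenP} to $z{=}e^{is}$ by analytic continuation and boundary continuity, and conclude by L\'evy's theorem; the paper instead works with the real Laplace transform $\E(\exp({-}\sqrt{\gamma_\iP}\,\xi(P{-}\E(P))))$ for $\xi{>}0$ and the elementary real inequality $|1{-}e^{-x}{-}x{+}x^2/2|\le x^3/6$. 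Your route requires the (easy) continuation argument but avoids having to justify that pointwise convergence of a one-sided Laplace transform implies convergence in distribution, a point the paper passes over silently. Second, and more substantively, the remainder control is done differently: the paper applies Jensen's inequality to push $E_\iP$ outside, reduces to the explicit integral $\int_{\R_+}(e^{-\gamma_\iP(u-w)^+}{-}e^{-\gamma_\iP u})^3\diff u$, and computes antiderivatives to get the bound $C\gamma_\iP^2w^3$; you instead use the uniform density bound $\P(F_{L_\iP}{+}E_\iP{\in}(u{-}w,u))\le\gamma_\iP w$ together with $\int_{\R_+}p(u,w)^k\diff u\le\gamma_\iP^{k-1}w^k$ and a splitting of the integrand according to whether $|Ap|$ exceeds $1$ (needed because your expansion is of $e^{Ap}$ rather than of the bounded quantity $1{-}e^{-x}$ with $x{\ge}0$). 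Both yield $O(\sqrt{\gamma_\iP}\,\E(L_\iM^3))$; your version is somewhat more robust and avoids the explicit primitives, at the price of the extra case distinction. Your consistency check against Proposition~\ref{FTMP} and the remark that a Lindeberg argument after conditioning would also work are both sound.
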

Despite this framework is quite natural, curiously it does not seem to have  been investigated  in the biological literature, even for the classical Markovian three-step model. It gives in fact  a  Gaussian approximation for the number of proteins at equilibrium when the average lifetime $1/\gamma_\iP$ of proteins is large. It should be noted that, in this scaling regime,  the second order does depend on the distribution of the elongation times of proteins.
\begin{proof}
By using Proposition~\ref{GenP}, the relation $F_{L_\iP}{\steq{dist}}L_\iP$ due to the exponential assumption for $L_\iP$, and the calculation of the proof of Relation~\eqref{meanP} of Proposition~\ref{FTMP}, we get that, for ${\xi}{>}0$, 
\[
\log\left[\E\left(\exp\left(\rule{0mm}{4mm}-\sqrt{\gamma_\iP}\xi (P{-}\E(P))\right)\right)\right]={-}\int_{\R_+^2} I_{\gamma_\iP}(u,w) k_{\iM} \diff uL_\iM(\diff w),
\]
with, for $(u,w){\in}\R_+^2$,
\[
I_{\gamma_\iP}(u,w){\steq{def}}  1{-}e^{{-}\left(1{-}e^{-\sqrt{\gamma_\iP}\xi}\right)\frac{k_{\iP}}{\gamma_\iP}\P\left(L_{\iP}{+}E_{\iP}{\in}(u{-}w,u)\right)} {-}\xi\frac{k_{\iP}}{\sqrt{\gamma_\iP}}\P\left(L_{\iP}{+}E_{\iP}{\in}(u{-}w,u)\right).
\]
The elementary inequality
\begin{equation}\label{elemineq}
\left|1{-}e^{-x}{-}x{+}\frac{x^2}{2}\right|\leq \frac{x^3}{6},\qquad\text{ for }
x{=}\left(1{-}e^{-\xi \sqrt{\gamma_\iP}}\right) \frac{k_{\iP}}{\gamma_\iP}\P\left(L_{\iP}{+}E_{\iP}{\in}(u{-}w,u)\right)
\end{equation}

gives the relation
\begin{multline}\label{Jaug1}
\left|\rule{0mm}{6mm}\right.I_{\gamma_\iP}(u,v)
           {-}\left(1{-}e^{-\xi \sqrt{\gamma_\iP}}{-}\xi\sqrt{\gamma_\iP}{+}\frac{1}{2}\xi^2\gamma_\iP\right) \frac{k_{\iP}}{\gamma_\iP}\P\left(L_{\iP}{+}E_{\iP}{\in}(u{-}w,u)\right)\\
           {+}\frac{1}{2}\xi^2\gamma_\iP\frac{k_{\iP}}{\gamma_\iP}\P\left(L_{\iP}{+}E_{\iP}{\in}(u{-}w,u)\right)
           \left.\rule{0mm}{6mm} {+}\frac{1}{2}\left(1{-}e^{-\xi\sqrt{\gamma_\iP}}\right)^2 \left(\frac{k_{\iP}}{\gamma_\iP}\right)^2\P\left(L_{\iP}{+}E_{\iP}{\in}(u{-}w,u)\right)^2\right|\\
           \leq \frac{1}{6}\left(1{-}e^{-\xi\sqrt{\gamma_\iP} }\right)^3\left(\frac{k_\iP}{\gamma_\iP}\right)^3\P\left(L_{\iP}{+}E_{\iP}{\in}(u{-}w,u)\right)^3.
\end{multline}
After integration with respect to the measure $k_{\iM}\diff u L_{\iM}(\diff w)$ on $\R_+^2$ and by using the relation
\[
\left|1{-}e^{-\xi \sqrt{\gamma_\iP}}{-}\xi\sqrt{\gamma_\iP}{+}\frac{1}{2}\xi^2\gamma_\iP\right|\le \frac{1}{6} \xi^3\gamma_\iP^{3/2},
\]
we get the inequality
\begin{multline}\label{dimeq1}
\left|\log\left[\E\left(\exp\left(\rule{0mm}{4mm}-\sqrt{\gamma_\iP}\xi (P{-}\E(P))\right)\right)\right] {-}A_1{-}A_2(\gamma_\iP) \right|\\ \leq \frac{k_{\iM}}{6} \left(k_\iP^3 B_1(\gamma_\iP){+}k_\iP\xi^3 B_2(\gamma_\iP)\right),
\end{multline}
with
\begin{align*}
  A_1 &\steq{def} \frac{1}{2} k_{\iM}k_{\iP}\xi^2\int_{\R_+^2}  \P\left(L_{\iP}{+}E_{\iP}{\in}(u{-}w,u)\right) \diff uL_\iM(\diff w),\\
  A_2(\gamma_\iP)&\steq{def} \frac{1}{2} k_{\iM}\left(1{-}e^{-\xi\sqrt{\gamma_\iP}}\right)^2 \left(\frac{k_{\iP}}{\gamma_\iP}\right)^2\int_{\R_+^2} \P\left(L_{\iP}{+}E_{\iP}{\in}(u{-}w,u)\right)^2 \diff uL_\iM(\diff w),\\
B_1(\gamma_\iP) &\steq{def}  \left(1{-}e^{-\xi\sqrt{\gamma_\iP} }\right)^3  \frac{1}{\gamma_\iP^3}\int_{\R_+^2}\P\left(L_{\iP}{+}B_{\iP}{\in}(u{-}w,u)\right)^3\diff uL_\iM(\diff w),  \\
B_2(\gamma_\iP) &\steq{def} \sqrt{\gamma_\iP}\int_{\R_+^2} \P\left(L_{\iP}{+}E_{\iP}{\in}(u{-}w,u)\right) \diff uL_\iM(\diff w).
\end{align*}
We now study the behavior of each of these four terms when $\gamma_\iP$ goes to $0$.
For the first term $A_1$, this is is straightforward since
\begin{equation}\label{eqq1}
A_1  =  \frac{1}{2}k_{\iM} k_{\iP} \xi^2 \int_{\R_+^2}\P\left(L_{\iP}{+}E_{\iP}{\le}u{\le} L_\iM{+}L_{\iP}{+}E_{\iP}\right)\diff u =  \frac{1}{2}k_\iM k_{\iP} \xi^2 \E(L_\iM). 
\end{equation}

The second term $A_2(\gamma_\iP)$ is handled in the same way as in the derivation of Relation~\eqref{varP} of Proposition~\ref{FTMP}. First note that
\begin{align*}
  \int_{\R_+^2}\P\left(L_{\iP}{+}E_{\iP}{\in}(u{-}w,u)\right)^2\diff u L_{\iM}(\diff w)&=  \int_{\R_+}\E\left(\left(w{-}|L_{\iP,1}{-}L_{\iP,1}{+}E_{\iP,2}{-}L_{\iP,2}|\right)^+\right)  \diff u L_{\iM}(\diff w)
  \\    &=  \E(L_\iM)\P\left(F_{L_\iM}{\ge}|L_{\iP,1}{-}L_{\iP,1}{+}E_{\iP,2}{-}L_{\iP,2}|\right)
  \\    &=  \E(L_\iM)\int_{0}^{+\infty} \P\left(F_{L_\iM}{\ge}|x{+}E_{\iP,1}{-}E_{\iP,2}|\right)\gamma_\iP e^{-\gamma_\iP x}\diff x,
\end{align*}
with the help of the symmetry $E_{\iP,1}{-}E_{\iP,2}{\steq{dist}}E_{\iP,2}{-}E_{\iP,1}$. Note that
\begin{multline*}
\int_0^{+\infty} \P\left(F_{L_\iM}{\ge}|x{+}E_{\iP,1}{-}E_{\iP,2}|\right)\diff x\leq
\int_0^{+\infty} \P\left(x{\le}F_{L_\iM}{+}|E_{\iP,1}{-}E_{\iP,2}|\right)\diff x\\\le\E\left(F_{L_\iM}\right){+}2\E(E_\iP)=\frac{\E(L_\iM^2)}{2\E(L_\iM)}{+}2\E(E_\iP){<}{+}\infty,
\end{multline*}
Lebesgue's Theorem gives therefore the relation
\begin{multline}\label{eqq2}
  \lim_{\gamma_\iP\to 0} A_2(\gamma_\iP)=
\frac{1}{2}\xi^2 k_{\iM}\E(L_\iM)k_\iP^2 \;  \lim_{\gamma_\iP\to 0} \int_0^{+\infty} \P\left(F_{L_\iM}{\ge}|x{+}E_{\iP,1}{-}E_{\iP,2}|\right)e^{-\gamma_\iP x}\diff x\\
=\frac{1}{2}\xi^2 k_{\iM}\E(L_\iM)k_\iP^2\int_0^{+\infty} \P\left(F_{L_\iM}{\ge}|x{+}E_{\iP,1}{-}E_{\iP,2}|\right)\diff x.
\end{multline}
We now analyze the upper bound of Relation~\eqref{dimeq1} expressed with the terms $B_1(\gamma_\iP)$ and $B_2(\gamma_\iP)$. The last one is easy to handle
\[
B_2(\gamma_\iP)=\sqrt{\gamma_\iP} \int_{\R_+^2} \P\left(L_{\iP}{+}E_{\iP}{\in}(u{-}w,u)\right)\diff u L_\iM(\diff w)= \E(L_\iM)\sqrt{\gamma_\iP},
 \]
 it converges to $0$ as $\gamma_\iP$ goes to $0$.

 For the term $B_1(\gamma_\iP)$, Jensen's Inequality and Fubini's Theorem give
\begin{align*}
 \int_{\R_+^2} \P\left(L_{\iP}{+}E_{\iP}{\in}(u{-}w,u)\right)^3\diff u L_\iM(\diff w)&{=}\int_{\R_+^2}\hspace{-2mm} \left(\E\left(e^{{-}\gamma_\iP(u{-}w{-}E_{\iP})^+}{-}e^{-\gamma_\iP(u{-}E_{\iP})^+}\right)\right)^3\diff u L_\iM(\diff w)\\
 &{\leq}\int_{\R_+^2}\hspace{-2mm} \E\left(\hspace{-1mm} \left(e^{-\gamma_\iP(u{-}w{-}E_{\iP})^+}{-}e^{-\gamma_\iP(u{-}E_{\iP})^+}\right)^3\right)\diff u L_\iM(\diff w)\\
 &{=}\int_{\R_+^2} \left(e^{-\gamma_\iP(u{-}w)^+}{-}e^{-\gamma_\iP u}\right)^3\diff u L_\iM(\diff w).
\end{align*}
With simple calculations, we get that, for $w{>}0$, the quantity 
\[
\int_{\R_+}\left(e^{-\gamma_\iP(u{-}w)^+}{-}e^{-\gamma_\iP u}\right)^3\,\diff u 
\]
is the sum of two terms, $C_1$ and $C_2$, with 
\begin{align*}
C_1 &\displaystyle \steq{def}\int_{w}^{\infty }\left( {e^{-\gamma_\iP \left( u-w \right) }}{-}{e^{-\gamma_\iP u}} \right) ^{3}\,\diff u
=\frac { \left(1{-}{{\rm e}^{-w \gamma_\iP}}\right)^{3}}{3\gamma_\iP},\\
C_2&\steq{def}\displaystyle \int_{0}^{w}\left( 1{-}{e^{-\gamma_\iP u}}\right)^{3}\,\diff u=\frac {2{e^{-3w \gamma_\iP}}{-}9{e^{-2wa}}{+}6w\gamma_\iP{+}
18{e^{-w\gamma_\iP}}{-}11}{6\gamma_\iP}.
\end{align*}
The elementary inequality~\eqref{elemineq} gives the relations
\[
C_1 \le \gamma_\iP^2 \frac{w^3}{3} \text{ and } |C_2|\le 4\gamma_\iP^2 w^3.
\]
By using the fact that $L_\iM$ has a finite third moment, we deduce the relation
\[
\lim_{\gamma_\iP{\to}0}  \left(1{-}e^{-\xi\sqrt{\gamma_\iP} }\right)^3\left(\frac{k_\iP}{\gamma_\iP}\right)^3
\int_{\R_+^2} \P\left(L_{\iP}{+}E_{\iP}{\in}(u{-}w,u)\right)^3\diff u L_\iM(\diff w)=0.
 \]
Relations\eqref{dimeq1}, \eqref{eqq1} and~\eqref{eqq2}  give therefore the convergence
\begin{multline*}
\lim_{\gamma_\iP{\to}0} \log\left[\E\left(\rule{0mm}{4mm}\exp(-\sqrt{\gamma_\iP}\xi (P{-}\E(P)))\right)\right]\\=\frac{\xi^2}{2}  k_\iM k_{\iP}\E(L_\iM)\left(1{+}k_\iP\int_0^{+\infty} \P\left(F_{L_\iM}{\ge}|x{+}E_{\iP,1}{-}E_{\iP,2}|\right)\diff x\right).
\end{multline*}
The theorem is proved. 
\end{proof}

\bigskip

\noindent
{\bf Short-Lived mRNAs and  Long-Lived Proteins}\\
For the scaling considered in this section  the death rate  $\gamma_\iP$ of proteins goes to $0$ and with the constraint that the death rate  $\gamma_\iM$ of mRNAs  is such that the average number of proteins is  kept fixed and equal to $a{>}0$,
\[
k_\iM\E(L_\iM)k_\iP\E(L_\iP)=\frac{k_\iM}{\gamma_\iM}\frac{k_\iP}{\gamma_\iP}=a.
\]
In particular the average lifetime $1/\gamma_\iM$ of an mRNA goes to $0$. 
\begin{prop}\label{Long1Prop}
  Under the conditions
  \begin{enumerate}
  \item the gene is always active;
  \item the distribution of  lifetimes of  mRNAs  and proteins are  exponential with respective  parameters $\gamma_\iM$ and $\gamma_\iP$;
  \item the transcription and translation parameters $k_\iM$ and $k_{\iP}{>}0$  are fixed as well as the distributions of the elongation times,
  \end{enumerate}
if $P$ is the random variable defined by Relation~\eqref{EqEqP}, then  for $a{>}0$, the convergence in distribution,
  \[
  \lim_{\substack{\gamma_\iM\to {+}\infty\\\gamma_\iM\gamma_\iP=k_1k_2/a}} P={\rm Pois}(a),
  \]
where ${\rm Pois}(a)$ is the Poisson distribution with parameter $a$.
\end{prop}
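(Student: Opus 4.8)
The plan is to read the limit directly off the closed-form generating function of $P$ furnished by Proposition~\ref{GenP}, which applies here since the gene is always active. With $L_\iM$ exponential of parameter $\gamma_\iM$ and $L_\iP$ exponential of parameter $\gamma_\iP$ (so that $F_{L_\iP}\steq{dist}L_\iP$ with density bounded by $1/\E(L_\iP)$), Proposition~\ref{GenP} gives, for $z{\in}[0,1]$,
\[
-\log\E\left(z^{P}\right)=k_\iM\int_{\R_+^2}\left[1-e^{-c\,p(u,w)}\right]\gamma_\iM e^{-\gamma_\iM w}\,\diff u\,\diff w,
\]
where $c\steq{def}(1{-}z)k_\iP\E(L_\iP)$ and $p(u,w)\steq{def}\P\left(F_{L_\iP}{+}E_{\iP}{\in}(u{-}w,u)\right)$. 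The elementary fact that drives the whole computation is the identity $\int_{0}^{\infty}p(u,w)\,\diff u=w$ for every $w{\ge}0$: by Fubini it equals the expected Lebesgue measure of $\{u{\ge}0:F_{L_\iP}{+}E_{\iP}{\in}(u{-}w,u)\}$, and since $F_{L_\iP}{+}E_{\iP}{\ge}0$ this set is the interval $(F_{L_\iP}{+}E_{\iP},F_{L_\iP}{+}E_{\iP}{+}w)$, of length $w$.

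First I would rescale, setting $w=s/\gamma_\iM$, so that the right-hand side becomes $k_\iM\int_{0}^{\infty}e^{-s}\bigl(\int_{0}^{\infty}[1-e^{-c\,p(u,s/\gamma_\iM)}]\,\diff u\bigr)\,\diff s$. Using $1{-}e^{-x}{\le}x$ and the identity above, $k_\iM\int_{0}^{\infty}c\,p(u,s/\gamma_\iM)\,\diff u=k_\iM c\,s/\gamma_\iM=(1{-}z)a\,s$, where the last equality uses $\E(L_\iM)=1/\gamma_\iM$ and the scaling constraint $k_\iM\E(L_\iM)k_\iP\E(L_\iP)=k_\iM k_\iP/(\gamma_\iM\gamma_\iP)=a$. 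Since the density of $F_{L_\iP}$ is bounded by $1/\E(L_\iP)$ one has $p(u,w){\le}w/\E(L_\iP)$, hence $c\,p(u,s/\gamma_\iM){\le}(1{-}z)k_\iP s/\gamma_\iM{\to}0$ uniformly in $u$; together with $0{\le}x{-}(1{-}e^{-x}){\le}x^{2}/2$ this shows that the inner integral differs from its linearization by at most $\tfrac12\bigl(\sup_u c\,p\bigr)(1{-}z)a\,s{\to}0$. Thus, for each fixed $s$, $k_\iM\int_{0}^{\infty}[1-e^{-c\,p(u,s/\gamma_\iM)}]\,\diff u\to(1{-}z)a\,s$, while the bound $1{-}e^{-x}{\le}x$ also yields the uniform domination $k_\iM\int_{0}^{\infty}[1-e^{-c\,p(u,s/\gamma_\iM)}]\,\diff u\le(1{-}z)a\,s$, which is integrable against $e^{-s}\,\diff s$.

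Dominated convergence in $s$ then gives $-\log\E(z^{P})\to\int_{0}^{\infty}e^{-s}(1{-}z)a\,s\,\diff s=(1{-}z)a$, i.e. $\E(z^{P})\to e^{-(1{-}z)a}$, the generating function of ${\rm Pois}(a)$; convergence of generating functions on $[0,1]$ for $\N$-valued variables then gives the convergence in distribution. The only genuinely delicate step is the interchange of limit and integral — the uniform-in-$u$ smallness of $c\,p(u,w)$ on the relevant scale $w=s/\gamma_\iM$ and the integrable domination after this substitution — and both rest entirely on the density bound for $F_{L_\iP}$ and the exact identity $\int_0^\infty p(u,w)\,\diff u=w$; everything else is bookkeeping with the constraint $\gamma_\iM\gamma_\iP=k_\iM k_\iP/a$.
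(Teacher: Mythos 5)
Your proof is correct and follows essentially the same route as the paper's: both start from Proposition~\ref{GenP}, linearize $1{-}e^{-x}$ in the exponent, and identify the linear term as $(1{-}z)a$ via the identity $\int_0^{\infty}\P\left(F_{L_{\iP}}{+}E_{\iP}{\in}(u{-}w,u)\right)\diff u=w$. The only divergence is in controlling the second-order remainder --- the paper bounds it by Jensen's inequality and an exact computation with the exponential densities, whereas you use the density bound $\P\left(F_{L_{\iP}}{+}E_{\iP}{\in}(u{-}w,u)\right)\le w/\E(L_{\iP})$ together with the substitution $w=s/\gamma_\iM$ and dominated convergence, a valid and slightly more robust variant.
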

\noindent
Note that Proposition~\ref{PaulProp} gives readily that
\[
\lim_{\substack{\gamma_\iM\to {+}\infty\\\gamma_\iM\gamma_\iP=k_1k_2/a}} \frac{\Var(P)}{\E(P)}=1.
\]
From the point of view of the Fano factor, the fluctuations of the number of proteins are minimal. A (rough) picture of that is that short-lived mRNAs and almost permanent proteins minimize fluctuations of gene expression. The proposition is more precise since it states that the number of proteins at equilibrium is in fact asymptotically Poisson.
\begin{proof}
In view of Proposition~\ref{GenP} and, since $L_\iP$ is exponentially distributed, $L_\iP{\steq{dist}}F_{L_{\iP}}$, all we have to do is that the quantity  
\[
\Delta{\steq{def}}\int_{\R_+^2}\left[\rule{0mm}{4mm}1{-}\exp\left(\rule{0mm}{4mm}{-}\left( 1{-}z\right)k_{\iP}\E(L_{\iP})\P\left(L_{\iP}{+}E_{\iP}{\in}((u{-}w)^+,u)\right)\right)\right] k_{\iM}\diff u  L_{\iM}(\diff w)
\]
converges to $(1{-}z)a$, as $\gamma_1$ goes to infinity, with the constraint $\gamma_\iM\gamma_\iP{=}k_1k_2/a$.

With the elementary inequality~\eqref{elemineq}, we get 
\[
|\Delta{-}( 1{-}z)k_\iM k_{\iP}\Delta_1|{\le}\frac{( 1{-}z)^2k_\iM k_{\iP}^2}{2}\Delta_2,
\]
with 
\[
\Delta_1{\steq{def}}  \int_{\R_+^2}
\frac{1}{\gamma_\iP}\P\left(L_{\iP}{+}E_{\iP}{\in}((u{-}w)^+,u)\right) \diff u \gamma_{\iM}e^{-\gamma_{\iM}w}\diff w
  \]
  and
\[
\Delta_2{\steq{def}}  \int_{\R_+^2}
\left(\frac{\P\left(L_{\iP}{+}E_{\iP}{\in}((u{-}w)^+,u)\right)}{\gamma_\iP}\right)^2 \diff u \gamma_{\iM}e^{-\gamma_{\iM}w}\diff w.
\]
We now take care of these two quantities. By invariance by translation, we have
\begin{align*}
  \Delta_1&{=} \frac{1}{\gamma_\iP}\E\left(\int\left(e^{-\gamma_\iP(u{-}E_\iP{-}w)^+}{-}e^{-\gamma_\iP(u{-}E_\iP)^+}\right)\diff u   \gamma_{\iM}e^{-\gamma_{\iM}w}\diff w\right)\\
&{=} \frac{1}{\gamma_\iP}\E\left(\int\left(e^{-\gamma_\iP(u{-}w)^+}-e^{-\gamma_\iP u}\right)\diff u   \gamma_{\iM}e^{-\gamma_{\iM}w}\diff w\right)=\frac{1}{\gamma_\iM\gamma_\iP}.
\end{align*}
The  Cauchy-Shwartz Inequality gives for the second term
\begin{align*}
  \Delta_2&{\steq{def}}  \int_{\R_+^2}\left[\frac{1}{\gamma_\iP}\E\left(\left(e^{-\gamma_\iP(u{-}E_\iP{-}w)^+}{-}e^{-\gamma_\iP(u{-}E_\iP)^+}\right)\right)\right]^2  \diff u \gamma_{\iM}e^{-\gamma_{\iM}w}\diff w\\
  &\le
  \E\left( \int_{\R_+^2}\frac{1}{\gamma_\iP^2}\left(e^{-\gamma_\iP(u{-}w)^+}{-}e^{-\gamma_\iP u}\right)^2 \diff u \gamma_{\iM}e^{-\gamma_{\iM}w}\diff w\right)=\frac{1}{\gamma_\iM\gamma_\iP(\gamma_\iM{+}\gamma_\iP)}
\end{align*}
hence it converges to $0$ as $\gamma_1$ goes to infinity with $\gamma_1\gamma_2$ being constant. The proposition is proved by gathering these results. 
\end{proof}

\medskip
\noindent
The above result can be explained quite simply as follows. We assume that the elongation times are null for simplicity.  With the ``coupling from the past'' approach of Section~\ref{ConvSec}, if $(u_n)$ denotes the non-decreasing sequence of instants of binding instants before time $0$, since the distribution of the sequence of lifetimes  $(L_{\iM,n})$ of mRNAs is converging to $0$, with high probability there should at most one protein created by the mRNA with index $n$, so that
\[
P \sim R \steq{def}\sum_{n<0} \ind{{\cal N}_{\iP}([u_n{+}E_{\iM,n},u_n{+}E_{\iM,n}{+}L_{\iM,n}]){\not=}0,u_n{+}E_{\iM,n}{+}L_{\iP,n}{>}0 },
\]
and a simple calculation gives
\[
\E\left(z^R\right)=\exp\left( -(1-z) \frac{k_{\iP} k_{\iM}}{\gamma_\iP(\gamma_{\iM}{+}k_\iP)}\right)\sim \exp\left( -(1-z) \frac{k_{\iP} k_{\iM}}{\gamma_\iP\gamma_{\iM}}\right),
\]
as $\gamma_\iP$ goes to $0$ with $\gamma_\iM\gamma_\iP$ constant. The distribution of the variable $R$ is asymptotically Poisson with parameter ${k_{\iP} k_{\iM}}/{(\gamma_\iP\gamma_{\iM})}$.  It is not difficult to see that, with convenient estimates, a rigorous proof could be obtained by using this observation.

\bigskip

\noindent
{\bf Short-Lived mRNAs and High Expression Rate of Proteins}\\
In this section  the death rate  $\gamma_\iM$ of proteins goes to infinity  but it is assumed in addition that  the quantity $k_\iP/\gamma_\iM$, the average number of proteins translated by an mRNA,  is constant. In particular $k_\iP$, the expression rate of proteins, is large and  the average number of proteins is fixed. 

\begin{prop}\label{Long2Prop}
  Under the conditions
  \begin{enumerate}
  \item the gene is always active;
  \item the distribution of  lifetimes of  mRNAs  and proteins are  exponential with respective  parameters $\gamma_\iM$ and $\gamma_\iP$;
  \item  the parameters $k_\iM$ and $\gamma_{\iP}{>}0$  are fixed as well as the distributions of the elongation times,
  \end{enumerate}
if $P$ is the random variable defined by Relation~\eqref{EqEqP} , then  for $a{>}0$, the relation 
  \[
  \lim_{\substack{\gamma_\iM\to {+}\infty\\k_\iP/\gamma_\iM=a}} P=N(a),
  \]
holds for the convergence in distribution, where $N(a)$ is a discrete random variable whose generating function is given by, for $z{\in}(0,1)$, 
  \[
\E\left(z^{N(a)}\right)=\exp\left({-}k_\iM\int_{0}^{+\infty}\frac{\left( 1{-}z\right)a\E\left(e^{{-}\gamma_\iP (u{-}E_\iP)^+}\ind{u{>}E_\iP}\right)}{1{+}\left( 1{-}z\right)a \E\left(e^{{-}\gamma_\iP (u{-}E_\iP)^+}\ind{u{>}E_\iP}\right)}\diff u\right).
  \]
\end{prop}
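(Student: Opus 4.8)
The plan is to feed the explicit generating function of Proposition~\ref{GenP} into the scaling $\gamma_\iM\to{+}\infty$ with $k_\iP{=}a\gamma_\iM$ and to identify the limit by dominated convergence. Since $L_\iP$ is exponential with parameter $\gamma_\iP$ one has $F_{L_\iP}{\steq{dist}}L_\iP$ and $\E(L_\iP){=}1/\gamma_\iP$, and since $L_\iM$ is exponential with parameter $\gamma_\iM$ one has $L_\iM(\diff w){=}\gamma_\iM e^{-\gamma_\iM w}\diff w$; writing $\psi(u,w){\steq{def}}\P\left(L_\iP{+}E_\iP{\in}((u{-}w)^+,u)\right)$, Proposition~\ref{GenP} then reads, for $z{\in}[0,1)$,
\[
-\log\E\left(z^{P}\right)=k_\iM\int_{\R_+^2}\left[1-e^{-(1-z)\frac{k_\iP}{\gamma_\iP}\psi(u,w)}\right]\gamma_\iM e^{-\gamma_\iM w}\,\diff u\,\diff w.
\]
First I would compute $\psi(u,w)$ by conditioning on $E_\iP$ and using the exponential tail of $L_\iP$ together with the identity $((u{-}w)^+{-}e)^+{=}(u{-}w{-}e)^+$, getting
\[
\psi(u,w)=\E\left[\left(e^{-\gamma_\iP(u-w-E_\iP)^+}-e^{-\gamma_\iP(u-E_\iP)}\right)\ind{E_\iP<u}\right].
\]

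Next I would substitute $w{=}s/\gamma_\iM$, so that $\gamma_\iM e^{-\gamma_\iM w}\diff w$ becomes $e^{-s}\diff s$ and the exponent becomes $(1{-}z)\frac{a}{\gamma_\iP}\,\gamma_\iM\psi(u,s/\gamma_\iM)$. The elementary inequality $e^{-\gamma_\iP\alpha}-e^{-\gamma_\iP\beta}\le\gamma_\iP(\beta-\alpha)e^{-\gamma_\iP\alpha}$ for $0\le\alpha\le\beta$, applied with $\alpha{=}(u{-}w{-}E_\iP)^+$, $\beta{=}u{-}E_\iP$ on $\{E_\iP{<}u\}$ and $\beta-\alpha\le w$, shows both that $\psi(u,w)/w\le\gamma_\iP$ and, by dominated convergence inside the expectation, that $\psi(u,w)/w\to\gamma_\iP h(u)$ as $w\to0$, where $h(u){\steq{def}}\E\left(e^{-\gamma_\iP(u-E_\iP)^+}\ind{u>E_\iP}\right)$ is the quantity appearing in the statement. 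Hence $\gamma_\iM\psi(u,s/\gamma_\iM){=}s\,\psi(u,s/\gamma_\iM)/(s/\gamma_\iM)\to s\gamma_\iP h(u)$, and the integrand converges pointwise in $(u,s)$ to $k_\iM\left[1-e^{-(1-z)a s\,h(u)}\right]e^{-s}$.

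The main obstacle is the uniform domination needed to pass the limit inside the double integral. Using $1-e^{-x}\le x$, the bound $\psi(u,w)\le\gamma_\iP w\,\E\left(e^{-\gamma_\iP(u-w-E_\iP)^+}\ind{E_\iP<u}\right)$, and the monotonicity $e^{-\gamma_\iP(u-s/\gamma_\iM-E_\iP)^+}\le e^{-\gamma_\iP(u-s-E_\iP)^+}$ valid for $\gamma_\iM\ge1$, the integrand is bounded, uniformly in $\gamma_\iM\ge1$, by $k_\iM(1-z)a\,s\,\E\left(e^{-\gamma_\iP(u-s-E_\iP)^+}\ind{E_\iP<u}\right)e^{-s}$; the elementary computation $\int_{\R_+}\E\left(e^{-\gamma_\iP(u-s-E_\iP)^+}\ind{E_\iP<u}\right)\diff u{=}s+1/\gamma_\iP$ shows this dominating function is integrable on $\R_+^2$. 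Dominated convergence then gives
\[
\lim_{\substack{\gamma_\iM\to{+}\infty\\ k_\iP/\gamma_\iM=a}}\left(-\log\E\left(z^{P}\right)\right)=k_\iM\int_{\R_+^2}\left[1-e^{-(1-z)a s\,h(u)}\right]e^{-s}\,\diff u\,\diff s=k_\iM\int_{\R_+}\frac{(1-z)a\,h(u)}{1+(1-z)a\,h(u)}\,\diff u,
\]
where the inner integral is evaluated through $\int_{\R_+}(1-e^{-c s})e^{-s}\diff s{=}c/(1{+}c)$; this is precisely $-\log\E(z^{N(a)})$. Finally, the convergence of the generating functions on $(0,1)$ to the generating function of the $\N$-valued variable $N(a)$ — which is a genuine probability generating function, its value tending to $1$ as $z\uparrow1$ by dominated convergence, so that no mass escapes to infinity — is equivalent to the convergence in distribution of $P$ to $N(a)$. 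The only delicate points are the bookkeeping of the positive parts in $\psi(u,w)$ and the construction of the $\gamma_\iM$-uniform integrable dominating function; everything else is routine.
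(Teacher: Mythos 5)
Your proof is correct and follows essentially the same route as the paper: both start from the generating-function formula of Proposition~\ref{GenP}, rescale $w\mapsto w/\gamma_\iM$ in the $L_\iM$ integral, identify the pointwise limit of the exponent as $(1{-}z)a\,s\,h(u)$, pass to the limit by dominated convergence, and integrate out the exponential variable to get the $c/(1{+}c)$ form. The only difference is that you make explicit the $\gamma_\iM$-uniform dominating function and the final $w$-integration, which the paper compresses into ``similar arguments as in the proof of Proposition~\ref{Long1Prop} and Lebesgue's Theorem''.
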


\noindent
The distribution of the  random variable $N(a)$ belong to the class of generalized {\em Neyman Type A distributions}. See Section~9 of Chapter~9 of Johnson et al.~\cite{Johnson} for example. In the Markovian case of Section~\ref{MarkSec}, this is a negative binomial distribution. See Proposition~\ref{NBD}.
\begin{proof}
For $z{\in}(0,1)$, the quantity ${-}\log(\E(z^{P}))/k_\iM$ is, since $F_{L_\iP}{\steq{dist}}L_\iP$, 
\begin{multline}\label{H-Long}
  H(z)\steq{def}\int_{\R_+^2}\left[1{-}e^{{-}\left( 1{-}z\right)k_{\iP}\E(L_{\iP})\P\left(L_{\iP}{+}E_{\iP}{\in}(u{-}w,u)\right)}\right] \diff u \gamma_{\iM}e^{{-}\gamma_{\iM}w}\diff w\\
  =\int_{\R_+^2}\left[1{-}e^{{-}\left( 1{-}z\right)k_{\iP}\E(L_{\iP})\E\left(e^{{-}\gamma_\iP(u{-}w/\gamma_{\iM}{-}E_{\iP})^+}{-}e^{{-}\gamma_\iP(u{-}E_{\iP})^+}\right)}\right] \diff u e^{{-}w}\diff w.
\end{multline}
With similar arguments as in the proof of Proposition~\ref{Long1Prop} and Lebesgue's Theorem, we obtain the relation
\[
\lim_{\substack{\gamma_\iM\to {+}\infty\\k_\iP/\gamma_\iM{=}a}}  H(z)
 =\int_{\R_+^2}\left[1{-}e^{{-}\left( 1{-}z\right)a\gamma_\iP\E(L_{\iP})w\E\left(e^{{-}\gamma_\iP(u{-}E_{\iP})^+}\ind{u{>}E_\iP} \right)}\right] \diff u e^{{-}w}\diff w.
 \]
The proposition is proved. 
\end{proof}

\bigskip

\subsection{The Impact of Elongation on  Variability}

In this section we discuss the impact of the elongation of proteins on the variability of the protein production process. This aspect has been rarely considered in the mathematical models of the literature. Note that the first moments of $M$ and $P$ do not depend on the distributions of the elongation times. The distribution of the random variable $M$ and  the second moment of $P$ do not depend on the distribution of the elongation phase of mRNAs when the gene is always active. 

To simplify the presentation, we assume in this section that the lifetimes of mRNAs and proteins are exponentially distributed with respective parameters $\gamma_\iM$ and $\gamma_\iP$ and that the gene is always active.

If  the protein is composed of $N$ amino-acids, it is natural to assume that the average of the elongation time is proportional to $N$, i.e.\ given by $N/\alpha$ for some $\alpha{>}0$. An index $N$ is added to the variables $E_2$ and  $P$ defined by Relation~\eqref{EqEqP}. Relation~\eqref{varP} of Proposition~\ref{FTMP} gives that the variance of the number of proteins at equilibrium is given by, 
\begin{equation}\label{eqvarP}
\Var(P^N)-\frac{k_\iM k_\iP}{\gamma_\iM\gamma_\iP} = \frac{k_\iM k_\iP^2}{\gamma_\iM\gamma_\iP^2} \E\left(e^{{-}\gamma_\iM \left|S_\iP^N{+}Y_\iP\right|}\right),
\end{equation}
by the equality in distribution $F_{L_{a}}{\steq{dist}}L_a$ for $a{\in}\{\iM,\iP\}$ due to the exponential distribution assumption, with $S_\iP^N{=}E_{\iP,1}^N{-}E_{\iP,2}^N$ and  $Y_\iP{=}L_{\iP,1}{-}L_{\iP,2}$.  

Two assumptions on the distribution of the elongation time of a protein are now considered.
\begin{enumerate}
\item {\sc Markov model.} The variable $E_\iP^N$ is an exponential random variable  given by $E_{\iP,M}^N$ with
  \[
  E_{\iP,M}^N\steq{def} N V_\alpha,
  \]
  where $V_\alpha$ is an exponential random variable with parameter $\alpha$. 

A similar assumption can be done for the elongation of mRNAs, i.e.\ that its distribution is also exponential. If $M^0(t)$, (resp., $P^0(t)$), denotes the number of mRNAs, (resp., proteins),  being built at time $t{\ge}0$, then the process $$(I(t),M^0(t),M(t),P^0(t),P(t))$$ has clearly the Markov property. This is a natural Markovian extension of the classical Markovian model including elongation.
\item[]  
\item {\sc Additive Model.} The variable $E_\iP^N$ is given by $E_{\iP,A}^N$ with
  \[
  E_{\iP,A}^N\steq{def}\sum_{k=1}^N V_{\alpha,i},
  \]
where $(V_{\alpha,i})$ is an i.i.d.\ sequence of exponential random variables with parameter $\alpha$.  This is also a natural assumption since it considers that it takes a random amount of time with an exponential distribution to find each  amino-acid of the protein within the cytoplasm. 
\end{enumerate}

\medskip

The variable $P$ defined by  Relation~\eqref{EqEqP} is denoted as $P^N_M$ for the Markov model and $P^N_A$ for the additive model. Similarly for the variable $S_\iP^N$ of Relation~\eqref{eqvarP}, we define the corresponding variables $S_{\iP,M}^N$ and $S_{\iP,A}^N$.

\bigskip

\noindent
{\bf A non-intuitive phenomenon}.
A simple calculation gives
\[
\Var(E_{\iP,M}^N)=\frac{N^2}{\alpha^2} \text{ and }\Var(E_{\iP,A}^N)=\frac{N}{\alpha^2}.
\]
As it can be seen the variability of the elongation time is larger for the Markov model. Intuitively, this could suggest that the same property holds for the number of proteins, that the variance of $P^N_M$  is larger than the variance of $P^N_A$. But a glance at Relation~\eqref{eqvarP} suggests in fact the contrary. Indeed, the variable $S_{\iP}^N{+}Y_\iP$ will be more variable for the Markov model and, therefore, due to this relation,  the variance should be smaller. The next proposition gives a more formal formulation of this observation.  See also Leoncini~\cite{Leoncini} for related  numerical results of this kind. 
\begin{prop}[Variance for a Large Number of Amino-Acids]
  Under the conditions
  \begin{enumerate}
  \item the gene is always active;
  \item the number of amino-acids of the protein is $N$, the mean elongation time of a protein is $N/\alpha$ for $\alpha{>}0$;
  \item the distribution of  lifetimes of  mRNAs  and proteins are  exponential with respective  parameters $\gamma_\iM$ and $\gamma_\iP$;
  \end{enumerate}
the variable $P$ defined by  Relation~\eqref{EqEqP} is denoted as $P^N_M$ for the Markov model and $P^N_A$ for the additive model, then
\[
\begin{cases} 
  \displaystyle \E\left(P^N_M\right){=}\E\left(P^N_A\right){=}\frac{k_\iM k_\iP}{\gamma_\iM \gamma_\iP}{\steq{def}}a,\\
  \displaystyle \lim_{N\to+\infty}  \Var\left(P^N_M\right){=}  \lim_{N\to+\infty}  \Var\left(P^N_A\right){=} a.
\end{cases}
  \]
 Furthermore,
 if $\gamma_\iM{<}\gamma_\iP$, there exists $\alpha_0{>}0$ such that if $\alpha{>}\alpha_0$ then  there exist constants  $C{>}c{>}0$ such that the relations
  \[
\begin{cases} 
  \displaystyle c \le N\left( \Var(P^N_M){-} a\right) \le  C,\\
  \displaystyle  c \le \sqrt{N}\left(\Var(P^N_A){-}a\right)  \le  C.
\end{cases}
  \]
  hold  for any $N{\ge}1$.
\end{prop}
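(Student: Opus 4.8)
\medskip
\noindent\textbf{Proof proposal.}
The plan is to reduce the whole statement to the asymptotic behaviour, for each of the two models, of the single scalar quantity
\[
\Phi_N\steq{def}\E\left(e^{-\gamma_\iM\left|S_\iP^N+Y_\iP\right|}\right),\qquad S_\iP^N{=}E_{\iP,1}^N{-}E_{\iP,2}^N,\quad Y_\iP{=}L_{\iP,1}{-}L_{\iP,2},
\]
the two families being independent. Since the gene is always active, $\E(L_\iM){=}1/\gamma_\iM$, $\E(L_\iP){=}1/\gamma_\iP$, and Relation~\eqref{meanP} of Proposition~\ref{FTMP} does not involve the elongation distributions, so $\E(P^N_M){=}\E(P^N_A){=}k_\iM k_\iP/(\gamma_\iM\gamma_\iP){=}a$ at once; moreover Relation~\eqref{eqvarP} reads $\Var(P^N){-}a{=}(k_\iM k_\iP^2/(\gamma_\iM\gamma_\iP^2))\,\Phi_N$ with $\Phi_N{\in}(0,1]$ always. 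Hence it suffices to prove that $N\Phi_N^M$ and $\sqrt N\,\Phi_N^A$ converge, as $N$ goes to infinity, to \emph{strictly positive and finite} constants: convergence alone forces $\Phi_N{\to}0$, hence $\lim_N\Var(P^N){=}a$ in both models, while a convergent sequence of positive reals with positive limit is bounded and bounded away from zero; one then gets the four one-sided bounds for all $N{\ge}1$ and takes $c$ (resp.\ $C$) to be the smallest (resp.\ largest) of them.

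For the \textbf{Markov model} the computation is elementary. Here $S_{\iP,M}^N{=}NW$ with $W{=}V_{\alpha,1}{-}V_{\alpha,2}$ a symmetric Laplace variable, so $NW$ has density $s\mapsto(\alpha/2N)e^{-\alpha|s|/N}$, whence
\[
N\,\Phi_N^M=\int_{\R^2}e^{-\gamma_\iM|s+y|}\,\frac{\alpha}{2}e^{-\alpha|s|/N}\,\frac{\gamma_\iP}{2}e^{-\gamma_\iP|y|}\,\diff s\,\diff y\;\longrightarrow\;\frac{\alpha}{2}\cdot\frac{2}{\gamma_\iM}\cdot 1=\frac{\alpha}{\gamma_\iM},
\]
by monotone convergence ($e^{-\alpha|s|/N}\uparrow 1$) together with $\int_\R e^{-\gamma_\iM|s+y|}\,\diff s{=}2/\gamma_\iM$. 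Thus $N(\Var(P^N_M){-}a)\to k_\iM k_\iP^2\alpha/(\gamma_\iM^2\gamma_\iP^2){>}0$.

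For the \textbf{additive model}, $S_{\iP,A}^N$ is a sum of $N$ i.i.d.\ centred Laplace$(\alpha)$ variables, with characteristic function $t\mapsto(\alpha^2/(\alpha^2{+}t^2))^N$, while $Y_\iP$ has characteristic function $t\mapsto\gamma_\iP^2/(\gamma_\iP^2{+}t^2)$; plugging these into the Fourier identity $e^{-\gamma_\iM|x|}{=}(\gamma_\iM/\pi)\int_\R\cos(tx)/(\gamma_\iM^2{+}t^2)\,\diff t$ gives
\[
\Phi_N^A=\frac{\gamma_\iM}{\pi}\int_\R\frac{1}{\gamma_\iM^2{+}t^2}\left(\frac{\alpha^2}{\alpha^2{+}t^2}\right)^{N}\frac{\gamma_\iP^2}{\gamma_\iP^2{+}t^2}\,\diff t.
\]
The substitution $t{=}s/\sqrt N$ and dominated convergence --- a dominating function on $\R$ being $s\mapsto\gamma_\iM^{-2}\alpha^2/(\alpha^2{+}s^2)$, thanks to $(1{+}x/N)^{-N}{\le}(1{+}x)^{-1}$ for $N{\ge}1$, while the integrand converges pointwise to $\gamma_\iM^{-2}e^{-s^2/\alpha^2}$ --- yield $\sqrt N\,\Phi_N^A\to(\gamma_\iM/\pi)\gamma_\iM^{-2}\int_\R e^{-s^2/\alpha^2}\,\diff s=\alpha/(\gamma_\iM\sqrt\pi)$, so $\sqrt N(\Var(P^N_A){-}a)\to k_\iM k_\iP^2\alpha/(\gamma_\iM^2\gamma_\iP^2\sqrt\pi){>}0$. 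In particular $\Var(P^N_A){-}a$ is of exact order $1/\sqrt N$, strictly above the order $1/N$ of $\Var(P^N_M){-}a$, which is the announced counterintuitive phenomenon.

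Finally, a word on the hypotheses and on the main difficulty. The route above does not seem to use either $\gamma_\iM{<}\gamma_\iP$ or a lower bound on $\alpha$; these become relevant if one prefers to obtain the two-sided bounds \emph{directly}, without passing to the limit. Then $\gamma_\iM{<}\gamma_\iP$ makes $\E(e^{\gamma_\iM|Y_\iP|}){=}\gamma_\iP/(\gamma_\iP{-}\gamma_\iM){<}{+}\infty$, so that the decoupling $\E(e^{-\gamma_\iM|S_\iP^N|})\,\E(e^{-\gamma_\iM|Y_\iP|}){\le}\Phi_N{\le}\E(e^{-\gamma_\iM|S_\iP^N|})\,\E(e^{\gamma_\iM|Y_\iP|})$ reduces everything to $\E(e^{-\gamma_\iM|S_\iP^N|})$: for the Markov model this equals $\alpha/(\gamma_\iM N{+}\alpha)$ exactly, giving both bounds, while for the additive model one bounds it above by $2\gamma_\iM^{-1}$ times the sup-norm of the density of $S_{\iP,A}^N$ (the convolution of two Gamma$(N,\alpha)$ densities, of order $\alpha/\sqrt N$ by Stirling) and below by a small-ball estimate of that density near the origin. \emph{The only genuinely non-elementary point is this last anti-concentration / local-central-limit lower bound, which must be made uniform in $N{\ge}1$; arranging $\alpha{>}\alpha_0$ is a convenient way to keep the constants in that estimate under control for every $N$.} Everything else is routine.
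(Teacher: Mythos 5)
Your proposal is correct, and it reaches the conclusion by a genuinely different route. You and the paper share the same reduction: Relation~\eqref{eqvarP} turns everything into the asymptotics of $\Phi_N{=}\E(e^{-\gamma_\iM|S_\iP^N+Y_\iP|})$. The paper then separates $S$ from $Y$ via the triangle inequality, $\E(e^{-\gamma_\iM|Y_\iP|})\E(e^{-\gamma_\iM|S_\iP^N|})\le\Phi_N\le\E(e^{\gamma_\iM|Y_\iP|})\E(e^{-\gamma_\iM|S_\iP^N|})$ --- which is where $\gamma_\iM{<}\gamma_\iP$ enters, to make the upper factor finite --- computes $\E(e^{-\gamma_\iM|S_{\iP,M}^N|}){=}\alpha/(\alpha{+}\gamma_\iM N)$ exactly for the Markov model, and handles the additive model with Berry--Esseen; the explicit error $18/\sqrt{2N}$ in that bound is precisely why the paper needs $\alpha{>}\alpha_0{=}18\gamma_\iM\sqrt{\pi/2}$, so that the Gaussian main term is not swallowed by the approximation error in the lower bound. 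You instead keep $S{+}Y$ together and compute exact limits: monotone convergence against the explicit density of $NW$ for the Markov model, and Fourier inversion of $e^{-\gamma_\iM|\cdot|}$ combined with the product of characteristic functions $(\alpha^2/(\alpha^2{+}t^2))^N\cdot\gamma_\iP^2/(\gamma_\iP^2{+}t^2)$ and dominated convergence (via $(1{+}x/N)^{-N}\le(1{+}x)^{-1}$) for the additive model. What this buys is strictly more: you obtain the limiting constants $N(\Var(P_M^N){-}a)\to k_\iM k_\iP^2\alpha/(\gamma_\iM^2\gamma_\iP^2)$ and $\sqrt N(\Var(P_A^N){-}a)\to k_\iM k_\iP^2\alpha/(\gamma_\iM^2\gamma_\iP^2\sqrt\pi)$, and you show that the hypotheses $\gamma_\iM{<}\gamma_\iP$ and $\alpha{>}\alpha_0$ are artifacts of the paper's decoupling-plus-Berry--Esseen method rather than of the statement itself. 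The one thing the paper's route buys in exchange is robustness: the Berry--Esseen argument only uses that the elongation increments $V_{\alpha,i}$ are i.i.d.\ with a finite third moment, whereas your closed-form characteristic function computation is tied to the exponential assumption on the building blocks; within the stated hypotheses this costs nothing.
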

This proposition shows that, if the second moment of $P$ depend on the distribution of the elongation times, its dependence is somewhat limited for both models, Markov and  additive. It converges to some constant as the average (and the variance) of the elongation time goes to infinity. Recall that the average number of proteins does not depend at all on the distribution of the elongation time.

The inequalities of the proposition show that the convergence rate of the variance to $a$,  as $N$ gets large, is of the order of $1/\sqrt{N}$ for the additive model, and $1/N$ for the Markov model. 

\begin{proof}
With the notations of Relation~\eqref{eqvarP},  for $B{\in}\{A,M\}$, the elementary relations
\[
\E\left(e^{{-}\gamma_\iM \left|Y_{\iP}\right|}\right) \E\left(e^{{-}\gamma_\iM \left|S_{\iP,B}^N\right|}\right)\le\E\left(e^{{-}\gamma_\iM \left|S_{\iP,B}^N{+}Y_\iP\right|}\right)\leq \E\left(e^{{-}\gamma_\iM\left|S_{\iP,B}^N\right|}\right)\E\left(e^{\gamma_\iM \left|Y_{\iP}\right|}\right)
\]
hold. Since $\gamma_\iM{<}\gamma_\iP$, one has
\[
\E\left(e^{\gamma_\iM|Y_\iP|}\right){<}{+}\infty,
\]
hence, one has only to take care of the limiting behavior of $\E\left(\exp\left({-}\gamma_\iM |S_\iP^N|\right)\right)$ as $N$ goes to infinity. 

For the Markov model, a simple calculation gives that $|S_{\iP,M}^N|$ is exponentially distributed with parameter $\alpha/N$,
\[
\E\left(e^{{-}\gamma_\iM \left|S_{\iP,M}^N\right|}\right)= \frac{\alpha}{\alpha{+}\gamma_\iM N},
\text{ so that }
\lim_{N{\to}{+}\infty} N\E\left(e^{{-}\gamma_\iM \left|S_{\iP,M}^N\right|}\right)=\frac{\alpha}{\gamma_\iM},
\]

For the additive model, $S_{\iP,A}^N$ is a sum of $N$ i.i.d. centered random variables
\[
S_{\iP,A}^N=\sum_{k=1}^N (V^1_{\alpha,i}{-}V^2_{\alpha,i}),
\]
where $(V^1_{\alpha,i})$ and $(V^2_{\alpha,i})$ are independent i.i.d.\ sequences of exponential random variables with parameter $\alpha$. Let
\[
\sigma\steq{def} \sqrt{\Var(V^1_{\alpha,1}{-}V^2_{\alpha,1})}=\frac{\sqrt{2}}{\alpha} \text{ and }\rho\steq{def} \E\left(\left|V^1_{\alpha,1}{-}V^2_{\alpha,1}\right|^3\right)=\frac{6}{\alpha^3}, 
\]
Berry-Esseen's theorem, See Feller~\cite{Feller} for example, gives that, for all $N{\in}\N$,
\begin{equation}\label{Berry}
  \sup_{x\in \R_+} \left|\P\left( \frac{|S_{\iP,A}^N|}{\sigma\sqrt{N}}{\leq} x\right){-}\P(|{\cal N}(1)|{\leq} x)\right|\le \frac{6\rho}{\sigma^3 \sqrt{N}}
  =\frac{18}{\sqrt{2N}},
\end{equation}
where ${\cal N}(1)$ is a centered Gaussian random variable with standard deviation $1$. 
From Fubini's Theorem,  we have
  \[
  \E\left(e^{-\gamma_\iM \left|S_{\iP,A}^N\right|}\right)= \gamma_\iM  \int_0^{+\infty} \P\left(\left|S_{\iP,A}^N\right|{\le} x\right)e^{{-}\gamma_\iM x}\,\diff x, 
  \]
and,  with  Relation~\eqref{Berry}, this  gives
\begin{equation}\label{Berry2}
  \left| \E\left(e^{-\left|S_\iP^N\right|} \right){-}  \gamma_\iM  \int_0^{+\infty} \P\left(|{\cal N}(1)|{\le} \frac{x}{\sigma \sqrt{N}}\right)e^{{-}\gamma_\iM x}\,\diff x\right|
\leq  \frac{18}{\sqrt{2N}}.
\end{equation}
Let $a_N{=}{\gamma_\iM}{\sigma\sqrt{N}}$, then 
\begin{multline*}
  \gamma_\iM \int_0^{+\infty} \P\left(|{\cal N}(1)|{\le} \frac{x}{\sigma \sqrt{N}}\right)e^{{-}\gamma_\iM x}\,\diff x=
  \E\left(\exp\left({-}{a_N}|{\cal N}(1)|\right)\right)\\
 = \frac{1}{a_N} \sqrt{\frac{2}{\pi}}\int_0^{+\infty} e^{-x^2/(2a_N^2)}e^{-x}\diff x.
\end{multline*}
With Lebesgue's Theorem, we obtain 
\[
\lim_{N\to+\infty}   \sqrt{N}\gamma_\iM \int_0^{+\infty} \P\left(|{\cal N}(1)|{\le} \frac{x}{\sigma \sqrt{N}}\right)e^{{-}\gamma_\iM x}\,\diff x=\frac{1}{\gamma_\iM\sigma} \sqrt{\frac{2}{\pi}}=\frac{\alpha}{\gamma_\iM\sqrt{\pi}}.
\]
If $\alpha$ is such that
\[
\alpha{>}18\gamma_\iM\sqrt{\frac{\pi}{2}},
\]
Inequality~\eqref{Berry2} gives the relations
\[
0{<}\liminf_{N\to+\infty} \sqrt{N}\E\left(e^{-\left|S_\iP^N\right|} \right)\le \limsup_{N\to+\infty} \sqrt{N}\E\left(e^{-\left|S_\iP^N\right|} \right)<{+}\infty.
\]
This concludes the proof of the proposition. 
\end{proof}

\newpage
\providecommand{\bysame}{\leavevmode\hbox to3em{\hrulefill}\thinspace}
\providecommand{\MR}{\relax\ifhmode\unskip\space\fi MR }
\providecommand{\MRhref}[2]{%
  \href{http://www.ams.org/mathscinet-getitem?mr=#1}{#2}
}
\providecommand{\href}[2]{#2}

\newpage
\appendix
\section{Some Numbers for {\em Escherichia coli}  Bacterial Cells}\label{NumbSec}

We give some numbers and some statistics for a single cell, to give an idea of the orders of magnitude of the different elements of this production process. It is also used to justify some of the assumptions in  the mathematical models introduced. The sources of these numbers can be found at the address {\tt www.bioNumbers.org}. The references Bremer and Dennis~\cite{Bremer} and Chen et al.~\cite{ChenShi} have also been used. 

\begin{figure}[ht]
        \centering
        \begin{subfigure}[b]{0.45\textwidth}
          \includegraphics[width=1\textwidth]{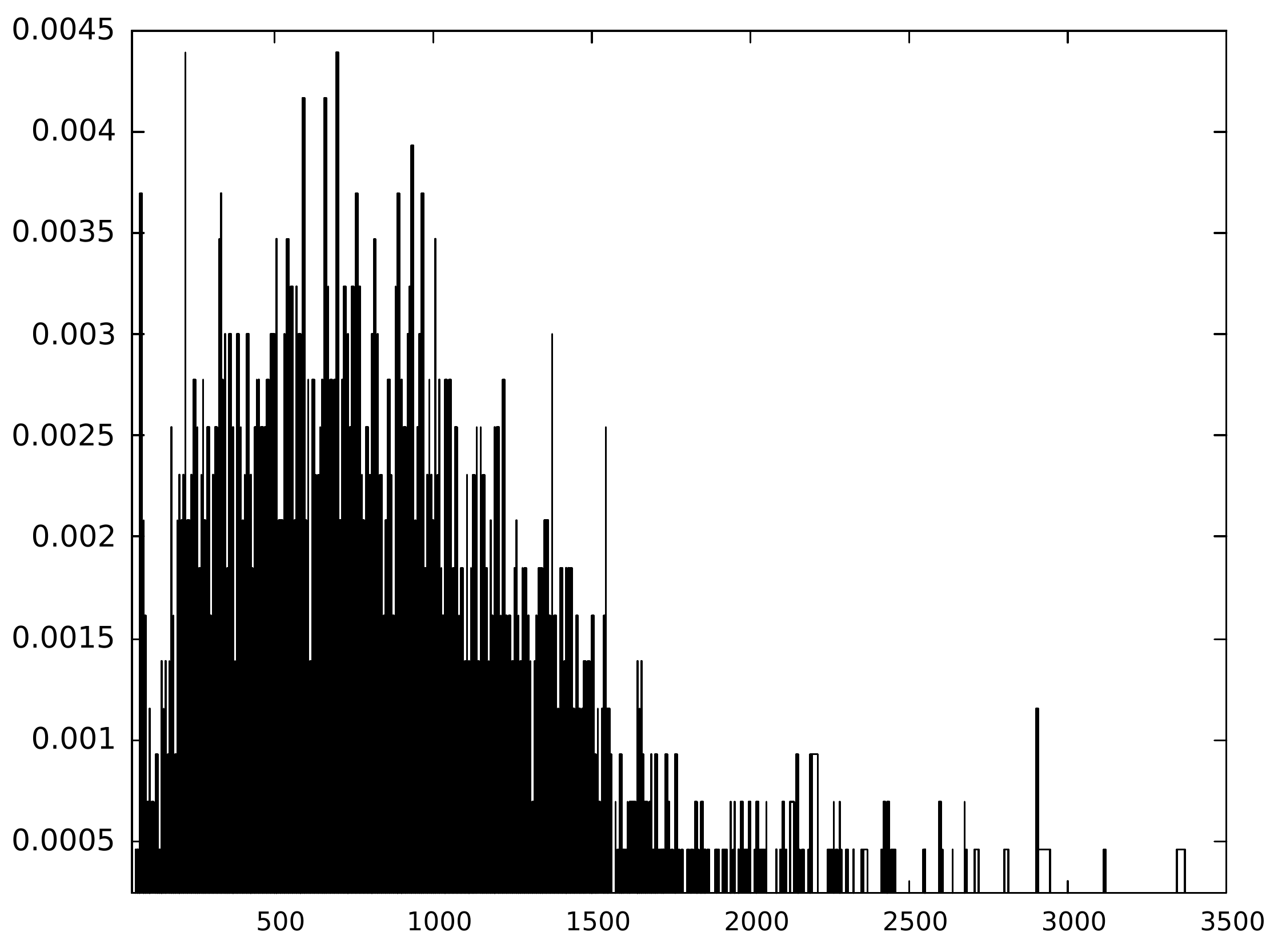}
          \put(-170,50){\rotatebox{90}{{\small Probability}}}
          \put(-80,-10){{\small Size}}
\caption{mRNAs: Dist. of Nb of nucleotides\\\phantom{aaa} 4325 types of mRNAs,\\\phantom{aaa} mean: 939.74, std. dev.: 643.97}
\end{subfigure}\quad
\begin{subfigure}[b]{0.45\textwidth}
\includegraphics[width=1\textwidth]{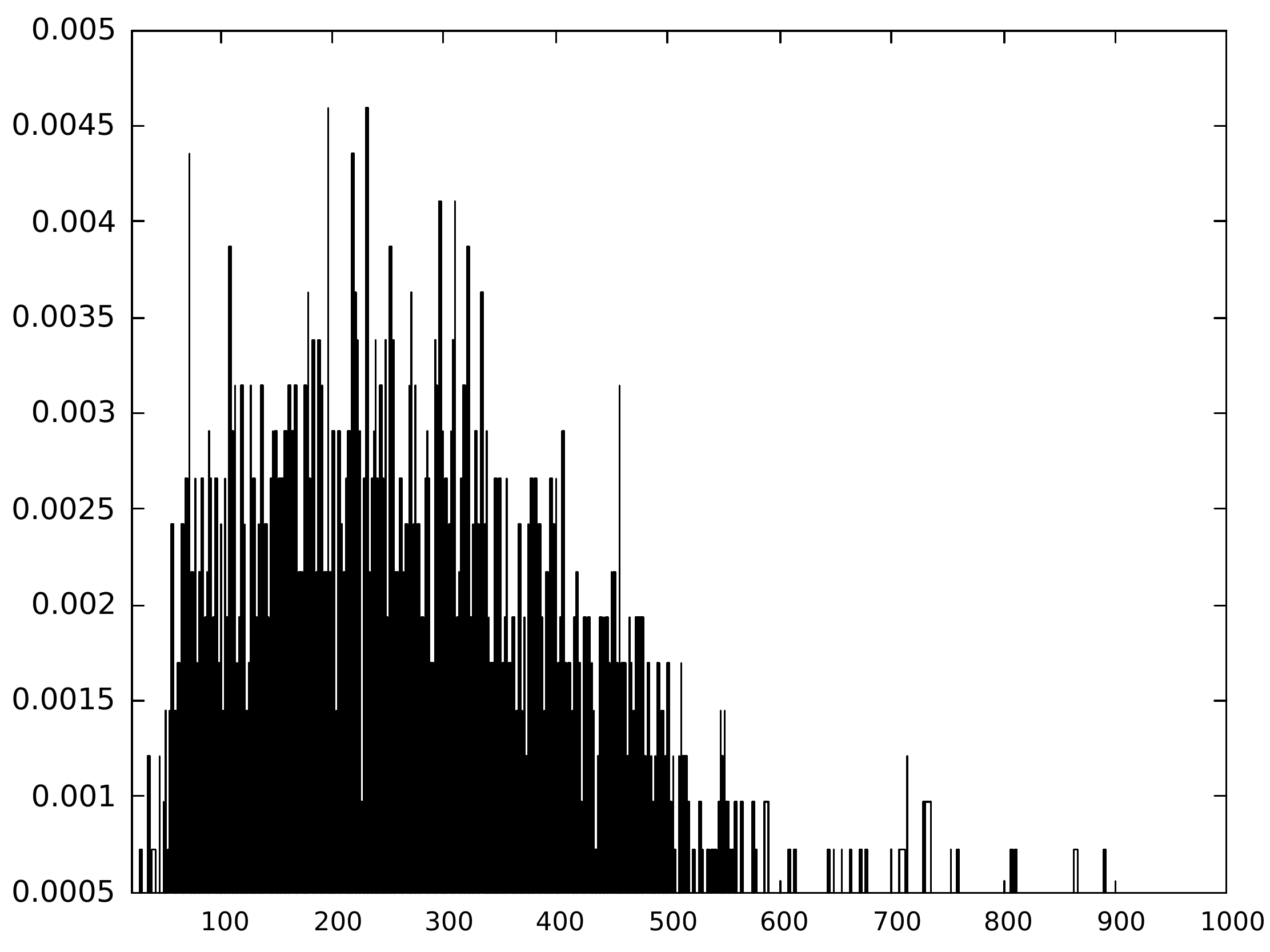}
          \put(-170,50){\rotatebox{90}{{\small Probability}}}
          \put(-80,-10){{\small Size}}
\caption{Proteins: Dist. of Nb of Amino-Acids\\\phantom{aaa} 4136 types of proteins,\\\phantom{aaa} mean: 318.21, std. dev.: 208.29}
\end{subfigure}
\caption{Distributions of Sizes for  {\em Escherichia coli}.\\  Sources of Data:  Karp et al.~\cite{BioCyc} and Valgepea et al.~\cite{Valgepea}.}\label{FigSize}
\end{figure}

 \begin{enumerate}
\item[]
   \item {\sc Global Parameters.}
 \begin{itemize}
 \item Diameter of bacteria: ${\sim} 2{\cdot}10^{-6}$m.
 \item Water content of a cell: $50{-}70$\%.
 \item Dry content of a cell: ${\sim}55$\% proteins, $20$\% mRNAs, $10$\% lipids.
 \end{itemize}
\item[]
\item {\sc Polymerases and Ribosomes.} A ribosome is a complex macro-molecule composed of ribosomal proteins, of the order of $50$ units, and ribosomal RNAs corresponding to a total of ${\sim}3000$ nucleotides. Polymerases and ribosomes can be seen, for the production process,  as resources of the cell. They are used by all types of genes of the cell,   in some way,  genes (resp., the mRNAs) of all types are competing to have access to polymerases (resp., ribosomes) to produce mRNAs (resp., proteins). 
 \begin{itemize}
\item[]
 \item Number of polymerases: $1500{-}11400$.
 \item Number of ribosomes: $6800{-}72000$.
 \item Rate of transcription by an RNA polymerase ${\sim}85$ nucleotides/sec.
 \item Rate of translation by a ribosome ${\sim}20$ amino-acids/sec.
\item[]
 \end{itemize}
 Note that the range of variations of these numbers depends on the growth rate of the cells, i.e.\ of the environment. 
\item[]
 \item {\sc  mRNAs and Proteins.}
\begin{itemize}
\item[]
\item Number of mRNAs in a cell: ${\sim}7800$.
\item Average lifetime of an mRNA: ${\sim}2$mn.
\item Size of protein: ${\sim}2{\cdot}10^{-9}$m.
\item Number of proteins in a cell: ${\sim}2.36{\cdot}10^6$.
\item Diffusion coefficient of a protein:  ${\sim}10^{-2}$sec to cross cell.
\end{itemize}
 \end{enumerate}

\section{Poisson Point  Processes}\
\label{PoisSec}
The main results concerning Poisson processes used in this chapter are briefly recalled, in the following  $H$ is a separable locally compact metric space. See Kingman~\cite{Kingman}, Neveu~\cite{Neveu}, or Robert~\cite{Robert}. 
\begin{defi} 
 If $\lambda{>}0$, $\mu$ is a probability distribution on $H$, a marked Poisson process on $\R{\times}H$ with intensity $\lambda \diff x{\otimes}\mu$  is a sequence ${\cal N}_\lambda{=}((t_n,X_n), n{\in}\Z)$  of elements of $\R{\times}H$ where
\begin{itemize}
\item $(t_n, n{\in}\Z)$ is a (classical) Poisson process on $\R$ with rate $\lambda$, with the convention that $t_n{\leq} t_{n+1}$ for any  $n{\in}\Z$ and $t_0{\leq} 0{<}t_1$.
\item $(X_n, n{\in}\Z)$ is an i.i.d. sequence of random variables with values in $H$ and whose distribution is $X(\diff x)$. 
\end{itemize}
\end{defi}
The sequence ${\cal N}_\lambda$ can also be seen as a marked point process on $\R{\times}H$, if $F$ is a non{-}negative Borelian function on $\R{\times}H$, then 
\[
\croc{{\cal N}_\lambda,F}=\int_{\R\times H} F(u,x){\cal N}_{\lambda}(\diff u,\diff x)=\sum_{n\in\Z} F(t_n,X_n). 
\]
 The following important proposition characterizes marked Poisson point processes. 
\begin{prop}[Laplace Transform of a Poisson Process]\label{Pois-lapois}
The marked point process ${\cal N}_\lambda=(t_n,X_n)$ is a Poisson point process with intensity $\lambda \diff u{\otimes}X(\diff x)$ if and only if the relation
\[
\E\left(\rule{0mm}{4mm}\exp\left(-\croc{{\cal N}_\lambda,F}\right)\right)=\exp\left(-\lambda \int_{\R{\times} H} \left(1-e^{-F(s,x)}\right)\,\diff s\,X(\diff x)\right)
\]
holds for any non{-}negative continuous function $F$ on $\R{\times}H$.
\end{prop}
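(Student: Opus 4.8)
The plan is to prove the two implications separately, and in each case to reduce a general non-negative continuous $F$ to simple functions of the form $F{=}\sum_{i=1}^{k}a_i\mathbbm{1}_{A_i}$, with $a_i{\ge}0$ and the $A_i$ pairwise disjoint Borel sets of finite measure for the candidate intensity $\nu{=}\lambda\,\diff u{\otimes}X(\diff x)$ on $\R{\times}H$; this is the classical characterization of Poisson point processes (see Kingman~\cite{Kingman} or Neveu~\cite{Neveu}). For the direct implication, assume ${\cal N}_\lambda$ is Poisson with intensity $\nu$: by definition the counts ${\cal N}_\lambda(A_1),\dots,{\cal N}_\lambda(A_k)$ are independent, with ${\cal N}_\lambda(A_i)$ Poisson of parameter $\nu(A_i)$; since $\croc{{\cal N}_\lambda,F}{=}\sum_i a_i{\cal N}_\lambda(A_i)$ for such $F$, the elementary identity $\E(e^{-aN}){=}\exp(-m(1-e^{-a}))$ for a Poisson variable $N$ with mean $m$ gives
\[
\E\left(e^{-\croc{{\cal N}_\lambda,F}}\right)=\prod_{i=1}^{k}\exp\left(-\nu(A_i)(1-e^{-a_i})\right)=\exp\left(-\int_{\R{\times}H}\left(1-e^{-F}\right)\,\diff\nu\right).
\]
To pass to a general $F$, I would choose simple functions $F_n\uparrow F$ (first truncate $F$ and restrict it to bounded sets, then approximate from below by simple functions); then $e^{-\croc{{\cal N}_\lambda,F_n}}\downarrow e^{-\croc{{\cal N}_\lambda,F}}$, so dominated convergence (domination by $1$) handles the left-hand side while monotone convergence handles $\int(1-e^{-F_n})\,\diff\nu$ on the right, and the identity persists in the limit (both sides equal to $0$ when the integral diverges).

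For the converse, the essential input is that the map $F\mapsto\E(e^{-\croc{{\cal N},F}})$, over non-negative continuous $F$ with compact support, determines the distribution of a point process on $\R{\times}H$. I would prove this by testing the functional at step functions $F{=}\sum_i a_i g_i$ where each $g_i$ is a continuous approximation of $\mathbbm{1}_{A_i}$ for a relatively compact $A_i$ whose boundary is almost surely not charged by ${\cal N}$ (which can always be arranged, the exceptional radii of a shrinking/growing family of sets forming at most a countable set); passing to the limit recovers the joint Laplace transform of $({\cal N}(A_1),\dots,{\cal N}(A_k))$, and these finite-dimensional Laplace transforms pin down the law of the random measure by a monotone class argument. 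Granting this, the hypothesis states that ${\cal N}_\lambda$ has the same Laplace functional as the Poisson process with intensity $\nu$ — which does satisfy that functional by the direct implication just proved — hence ${\cal N}_\lambda$ is itself a Poisson process with intensity $\nu$.

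The main obstacle is exactly this last part of the converse: the identity is assumed only for continuous $F$, and turning it into assertions about the integer-valued counts ${\cal N}_\lambda(A)$ requires both the approximation of indicators by continuous functions, with the care needed to avoid mass sitting on the approximated boundaries, and the standard but not one-line fact that such finite-dimensional Laplace transforms characterize the law of a random measure. By contrast the direct implication is essentially bookkeeping once the Laplace transform of a Poisson variable and the monotone approximation are in hand.
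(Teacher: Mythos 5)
The paper offers no proof of this proposition: it is stated in the appendix as a classical fact, with the reader referred to Kingman, Neveu and Robert, so there is no internal argument to compare against. What you propose is the standard textbook proof from those references, and it is correct in outline. The direct implication is complete as written: independence of the counts on disjoint finite-measure sets, the elementary identity $\E(e^{-aN}){=}\exp({-}m(1{-}e^{-a}))$, and monotone approximation $F_n{\uparrow}F$ with dominated convergence on the left and monotone convergence on the right (including the degenerate case where the integral diverges and both sides vanish). For the converse you correctly isolate the only genuinely nontrivial ingredient, namely that the Laplace functional on non-negative continuous compactly supported functions determines the law of a random Radon measure, and your sketch (recover the joint Laplace transforms of $({\cal N}(A_1),\dots,{\cal N}(A_k))$ by sandwiching indicators between continuous functions, then a monotone class argument) is the standard route. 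One refinement worth making explicit: the condition you actually want on the dissecting sets is $\nu(\partial A){=}0$ for the \emph{candidate intensity} $\nu{=}\lambda\,\diff u{\otimes}X(\diff x)$ (note $X$ may have atoms, so product sets $I{\times}B$ with $X(\partial B){=}0$ are the natural choice, and they generate the Borel $\sigma$-field). From $\nu(\partial A){=}0$ and the hypothesis itself one deduces ${\cal N}_\lambda(\partial A){=}0$ almost surely, by taking continuous $g_n{\downarrow}\mathbbm{1}_{\partial A}$ and letting the right-hand side of the assumed identity tend to $1$; this replaces, and slightly cleans up, your countable-exceptional-radii device, since the latter requires a monotonicity argument on $r{\mapsto}\E(1{-}e^{-{\cal N}(A_r)})$ to locate the good radii. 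With that point settled, the finite-dimensional Laplace transforms of the counts coincide with those of the Poisson process, and the claim follows.
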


\begin{corollary}\label{PoisMom}
  If $F$ and  $G$ are non{-}negative functions in $L_2{\cap}L_1(\diff u{\otimes}X)$,
  then
\begin{equation}\label{PoisMean}
\E\left(\int_{\R{\times}H} F(s,x)\,{\cal N}_\lambda(\diff s,\diff x)\right)=\int_{\R{\times}H}F(s,x)\,\lambda\diff sX(\diff x)
\end{equation}
and
\begin{multline}\label{PoisVar}
\Cov\left(\int_{\R{\times}H} F(s,x)\,{\cal N}_\lambda(\diff s, \diff x),\int_{\R{\times}H} G(s,x)\,{\cal N}_\lambda(\diff s, \diff x)\right)\\=\int_{\R{\times}H} F(s,x)G(s,x)\,\lambda\diff sX(\diff x)
\end{multline}
\end{corollary}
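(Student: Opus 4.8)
The plan is to obtain both identities by differentiating the exponential formula of Proposition~\ref{Pois-lapois}. Write $\mu(\diff s,\diff x){\steq{def}}\diff s\,X(\diff x)$ for the (in general infinite) measure on $\R{\times}H$. I would first establish everything under the temporary assumption that $F$ and $G$ are bounded and supported on a set of finite $\mu$-measure, where $\croc{{\cal N}_\lambda,F}$ is a.s.\ a finite sum bounded by $\|F\|_\infty$ times a Poisson variable, hence has moments of all orders, and then remove this restriction by monotone convergence.

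For~\eqref{PoisMean}: apply Proposition~\ref{Pois-lapois} to $\theta F$, $\theta{\in}[0,1]$ (legitimate since $\theta F{\ge}0$), to get
\[
\E\left(e^{-\theta\croc{{\cal N}_\lambda,F}}\right)=\exp\left(-\lambda\int_{\R\times H}\left(1-e^{-\theta F(s,x)}\right)\diff\mu(s,x)\right),
\]
and differentiate at $\theta{=}0^+$. In the bounded/finitely-supported case, differentiation under the expectation on the left is justified by dominated convergence via $(1-e^{-\theta x})/\theta\le x$ for $x\ge0$; on the right, $\partial_\theta(1-e^{-\theta F})=Fe^{-\theta F}\le F\mathbbm 1_{\{F\neq0\}}\in L_1(\mu)$. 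This gives $\E\croc{{\cal N}_\lambda,F}=\lambda\int F\,\diff\mu$, and for general non-negative $F\in L_1(\mu)$ one passes to the limit along truncations $F_n{\uparrow}F$ with $F_n$ bounded and $\mu$-finitely supported, using monotone convergence on both sides.

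For~\eqref{PoisVar}: since $F,G{\ge}0$, the function $sF{+}tG$ is non-negative for $s,t{\ge}0$, so Proposition~\ref{Pois-lapois} yields $\E(e^{-s\croc{{\cal N}_\lambda,F}-t\croc{{\cal N}_\lambda,G}})=e^{-\Phi(s,t)}$ with $\Phi(s,t){\steq{def}}\lambda\int(1-e^{-sF-tG})\,\diff\mu$. I would take the mixed derivative $\partial_s\partial_t$ at $(0,0)$. On the left this is $\E(\croc{{\cal N}_\lambda,F}\croc{{\cal N}_\lambda,G})$; on the right, $\partial_s\partial_t e^{-\Phi}=e^{-\Phi}(\partial_s\Phi\,\partial_t\Phi-\partial_s\partial_t\Phi)$, and from $\partial_t(1-e^{-sF-tG})=Ge^{-sF-tG}$ and $\partial_s\partial_t(1-e^{-sF-tG})=-FGe^{-sF-tG}$ one reads off, at $(0,0)$, $\Phi=0$, $\partial_s\Phi=\lambda\int F\,\diff\mu=\E\croc{{\cal N}_\lambda,F}$, $\partial_t\Phi=\E\croc{{\cal N}_\lambda,G}$ and $\partial_s\partial_t\Phi=-\lambda\int FG\,\diff\mu$. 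Hence $\E(\croc{{\cal N}_\lambda,F}\croc{{\cal N}_\lambda,G})=\E\croc{{\cal N}_\lambda,F}\E\croc{{\cal N}_\lambda,G}+\lambda\int FG\,\diff\mu$, which is~\eqref{PoisVar} after subtracting the product of the means. The general $L_1{\cap}L_2$ case again follows by monotone approximation $F_n{\uparrow}F$, $G_n{\uparrow}G$: the cross term $\lambda\int F_nG_n\,\diff\mu$ increases to $\lambda\int FG\,\diff\mu$, finite by Cauchy--Schwarz since $F,G\in L_2(\mu)$, and $\croc{{\cal N}_\lambda,F_n}\croc{{\cal N}_\lambda,G_n}{\uparrow}\croc{{\cal N}_\lambda,F}\croc{{\cal N}_\lambda,G}$.

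The only step requiring genuine care, rather than the formal differentiation, is this passage from the bounded/finitely-supported case to the general $L_1\cap L_2$ case, i.e.\ making sure the differentiations and the limits can be interchanged; the dominating inequalities $|1-e^{-x}|\le x$ and $|e^{-x}-1+x|\le x^2/2$ for $x\ge0$, together with the stated integrability hypotheses, are exactly what makes this work. As an alternative that avoids differentiation altogether, one may note that for $F{=}\mathbbm 1_{A\times B}$ with $|A|{<}{+}\infty$ the variable $\croc{{\cal N}_\lambda,F}$ is Poisson with parameter $\lambda|A|X(B)=\lambda\int F\,\diff\mu$, so its mean and variance both equal $\lambda\int F\,\diff\mu$; bilinearity of the covariance and the independence of the counts of a Poisson process over disjoint regions extend this to non-negative simple functions, and monotone ($L_2$) approximation concludes.
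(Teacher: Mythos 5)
Your proposal is correct, and it follows exactly the route the paper intends: the statement is presented as a corollary of Proposition~\ref{Pois-lapois} with no written proof (it is a standard fact, referred to Kingman, Neveu and Robert), the implicit argument being precisely the differentiation of the Laplace functional at the origin that you carry out. Your attention to the justification of the interchange of differentiation and expectation, and the monotone passage from bounded finitely supported functions to the general $L_1{\cap}L_2$ case, supplies the details the paper leaves to the reader.
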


\bigskip
\noindent
{\bf Negative Binomial Distribution.}\\
For $a$, $b{>}0$, let $G$ be a random variable whose distribution has the density
\[
\frac{b^a}{\Gamma(a)}x^{a-1}e^{-b x}
\]
on $\R_+$, where $\Gamma$ is the classical Gamma function
\[
\Gamma(a)=\int_0^{+\infty} u^{a-1}e^{-u}\,\diff u.
\]
See Whittaker and Watson~\cite{Whittaker} for example. Let ${\cal N}_1$ be a Poisson process on $\R_+$ with rate $1$, then, for $z{\in}[0,1]$,
\begin{equation}\label{NBgen}
\E\left(z^{{\cal N}([0,G])}\right)=\E\left(e^{-G (1-z)}\right)=\int_0^{+\infty} \frac{b^a}{\Gamma(a)}x^{a-1}e^{-(b+1-z)x}\,\diff x
=\left(\frac{b}{1{+}b{-}z}\right)^a,
\end{equation}
additionally, for $n{\in}\N$,
\begin{equation}\label{NBdist}
\P({\cal N}([0,G]){=}n)=\frac{b^a}{n!\Gamma(a)}\int_0^{+\infty} x^{a+n-1}e^{-(b+1)x}\,\diff x
=\left(\frac{b}{b{+}1}\right)^a \frac{\Gamma(a{+}n)}{n!\Gamma(a)}\frac{1}{(1{+}b)^n}.
\end{equation}
\tocless\section{Proof of Proposition~\ref{FTMP}}{ProofFTMP} \ \\
As in section~\ref{mRNAsec}, ${\cal F}^I$ denotes the $\sigma$-field generated by the stochastic process $(I(t))$,  by using again Proposition~\ref{CondI} , we have
\begin{align*}
  \E\left(P{\mid} {\cal F}^I\right) &=  \int_{{\cal S}} G_{\iP}(u,v,w,m)   I(u) \nu_{\cal P}(\diff u\,\diff v,\diff w, \diff m),\\
  &=   {k_{\iM}}\int_{{\cal S}} I(u) G_{\iP}(u,v,w,m)    \diff u\,{L_{\iM}}(\diff v) Q_{\iP}(\diff m)\\
   &=   {k_{\iM}} \int_{\R} I(u) \E\left( G_{\iP}(u,{E_{\iM}},{L_{\iM}},{\cal N}_{2})\right) \diff u,
\end{align*}
by definition of $Q_2$, see Relation~\eqref{nup},  and $G_\iP$ is defined by Relation~\eqref{EqP2}.  The integration of this identity gives
\begin{align*}
  \E(P) &=\delta_+{k_{\iM}}\E\left( \int_{\R}\int_{\R{\times}\R_+^2}\ind{\substack{x{+}{E_{\iM}}{\le}u{<}x{+}{E_{\iM}}{+}{L_{\iM}}\\y{\le} x{<}y{+}z}}{\cal N}_2(\diff x,\diff y,\diff z)\diff u\right)\\
  &=\delta_+{k_{\iM}}\E\left( \int_{\R^2}\ind{\substack{x{+}{E_{\iM}}{\le}u{<}x{+}{E_{\iM}}{+}{L_{\iM}}\\{E_{\iP}}{\le} x{<}{E_{\iP}}{+}{L_{\iP}}}}{k_{\iP}}\diff x\diff u\right)\\
  &=\delta_+{k_{\iM}}\E({L_{\iM}}){k_{\iP}}\E({L_{\iP}}),
\end{align*}
where $\delta_+{=}k_+/\Lambda$. The first identity of the proposition is proved.

From Representation~\eqref{EqP2} of $P$, Proposition~\ref{CondI} and Relation~\eqref{PoisVar} of Corollary~\ref{PoisMom}, we obtain 
\begin{align}\label{vqaux1}
  \E\left(P^2{\mid} {\cal F}^I\right)&=\left(\E\left(P{\mid} {\cal F}^I\right)\right)^2 +\int_{{\cal S}} I(u) G_{\iP}(u,v,w,m)^2\nu_{\cal P}(\diff u,\diff v,\diff w,\diff m)\\
  &=\left(\E\left(P{\mid}{\cal F}^I\right)\right)^2 +{k_{\iM}} \int_{\R} I(u) \E\left[\rule{0mm}{4mm}G_\iP\left(u,{E_{\iM}},{L_{\iM}},{\cal N}_2\right)^2\right] \diff u.\notag
\end{align}
We now take care successively of the two terms of the right hand side of the last equality. 
First we study  $\E\left[G_\iP\left(u,{E_{\iM}},{L_{\iM}},{\cal N}_2\right)^2\right]$.

For $u$, $v$, $w{\in}\R_+$, the distribution of the  random variable 
\[
G_\iP\left(u,v,w,{\cal N}_2\right)=
\int_{\R{\times}\R_+^2}\ind{\substack{x{+}v{\le}u{<}x{+}v{+}w\\y{\le} x{<}y{+}z}}{\cal N}_2(\diff x,\diff y,\diff z)
\]
is Poisson with parameter
\begin{align*}
\kappa(u,v,w)&\steq{def}{k_{\iP}}\int_{\R{\times}\R_+^2}\ind{\substack{x{+}v{\le}u{<}x{+}v{+}w\\y{\le} x{<}y{+}z}}\diff x {E_{\iP}}(\diff y){L_{\iP}}(\diff z)\\
&={k_{\iP}}\E\left(\int_{\R} \ind{\substack{x{+}v{\le}u{<}x{+}v{+}w\\{E_{\iP}}{\le} x{<}{E_{\iP}}{+}{L_{\iP}}}}\diff x\right).
\end{align*}
By using again Relation~\eqref{PoisVar} of Corollary~\ref{PoisMom}, we obtain that
\begin{align*}
  {k_{\iM}}\E\left(\int_{\R_+}\right.&\left. I(u) \E\left[\rule{0mm}{4mm}G_\iP\left(u,{E_{\iM}},{L_{\iM}},{\cal N}_2\right)^2\right] \diff u\right)\\
  &= \delta_+{k_{\iM}} \int_{\R} \left[\E\left(\kappa(u,{E_{\iM}},{L_{\iM}})\right){+}\E\left(\kappa(u,{E_{\iM}},{L_{\iM}})^2\right)\right]\, \diff u\\
  & =\E(P)+\delta_+{k_{\iM}} \int_{\R{\times}\R_+^2} \kappa(u,v,w)^2 \diff u {E_{\iM}}(\diff v){L_{\iM}}(\diff w).
\end{align*}
For $i{=}1$, $2$, as before $(E_{\iP,i})$ and $(L_{\iP,i}$ are independent random variables with the same distribution as ${E_{\iP}}$ and ${L_{\iP}}$ respectively.
By rewriting $\kappa(u,v,w)^2$ as a double integral, we obtain
\begin{align*}
  &\int_{\R{\times}\R_+^2} \kappa(u,v,w)^2 \diff u {E_{\iM}}(\diff v){L_{\iM}}(\diff w)\\
  &={k_{\iP}}^2  \int_{\R{\times}\R_+^2}\E\left( \int_{\R^2} \ind{\substack{x{+}v{\le}u{<}x{+}v{+}w\\E_{\iP,1}{\le} x{<}E_{\iP,1}{+}L_{\iP,1}}, \substack{x'{+}v{\le}u{<}x'{+}v{+}w\\E_{\iP,2}{\le} x{<}E_{\iP,2}{+}L_{\iP,2}}}\diff x\diff x'\right) \diff u {E_{\iM}}(\diff v){L_{\iM}}(\diff w)\\
  &={k_{\iP}}^2  \int_{\R_+^2}\E\left( \int_{\R^2} \left(w{+}x{\wedge}x'{-}x{\vee}x'\right)^+\ind{\substack{E_{\iP,1}{\le} x{<}E_{\iP,1}{+}L_{\iP,1}\\E_{\iP,2}{\le} x'{<}E_{\iP,2}{+}L_{\iP,2}}}\diff x\diff x'\right) {E_{\iM}}(\diff v){L_{\iM}}(\diff w),
\end{align*}
after integration with respect to $u$,   the translation invariance of Lebesgue's measure on $\R$ and Relation~\eqref{eqFa} give the relation
\begin{align*}
  \int_{\R{\times}\R_+^2}& \kappa(u,v,w)^2 \diff u {E_{\iM}}(\diff v){L_{\iM}}(\diff w)\notag\\
  &={k_{\iP}}^2\E(L_\iM)  \E\left( \int_{\R_+^2} \P(F_{{L_{\iM}}}{\ge}|E_{\iP,1}{+}x{-}(E_{\iP,2}{+}x')|) \P(L_{\iP}{\ge}x)\P(L_{\iP}{\ge}x')\diff x\diff x'\right)\notag\\
   &=\E(L_\iM)({k_{\iP}}\E({L_{\iP}}))^2 \P(F_{{L_{\iM}}}{\ge}|E_{\iP,1}{+}F_{{L_{\iP}},1}{-}(E_{\iP,2}{+}F_{{L_{\iP}},2})|),
\end{align*}
hence
\begin{multline}\label{Term1}
  \frac{1}{\E(P)}{k_{\iM}} \int_{\R} I(u) \E\left[\rule{0mm}{4mm}G_\iP\left(u,{E_{\iM}},{L_{\iM}},{\cal N}_2\right)^2\right] \diff u \\
  =  1{+}{k_{\iP}}\E({L_{\iP}}) \P\left(\rule{0mm}{4mm}F_{{L_{\iM}}}{\ge}|E_{\iP,1}{+}F_{{L_{\iP}},1}{-}(E_{\iP,2}{+}F_{{L_{\iP}},2})|\right)
  \end{multline}

\medskip
We now investigate the term $\E\left(\E(P{\mid}{\cal F}^I)^2\right)$ of Relation~\eqref{vqaux1}. As for the variable $M$ of the last section, observe that
\begin{align*}
\Delta^I(P)&\steq{def}  \E\left(\E(P{\mid}{\cal F}^I)^2\right){-}\E(P)^2 =  \E\left(\left(\E(P{\mid}{\cal F}^I){-}\E(P)\right)^2\right) \\
&=\E\left(\left( {k_{\iM}} \int_{\R} (I(u){-}\delta_+) \E\left( G_{\iP}(u,{E_{\iM}},{L_{\iM}},{\cal N}_2)\right) \diff u\right)^2\right).
\end{align*}
We can rewrite the square of the integral as
\begin{align*}
 & \left(\int_{\R} (I(u){-}\delta_+) \E\left( G_{\iP}(u,{E_{\iM}},{L_{\iM}},{\cal N}_2)\right) \diff u\right)^2\\
  &\phantom{aaaa}=\int_{\R^2} (I(u){-}\delta_+)(I(u'){-}\delta_+) \E\left( G_{\iP}(u,{E_{\iM}},{L_{\iM}},{\cal N}_2)\right)\E\left( G_{\iP}(u',{E_{\iM}},{L_{\iM}},{\cal N}_2)\right) \diff u\diff u'\\
  &\phantom{aaaa}=\int_{\R^2} (I(u){-}\delta_+)(I(u'){-}\delta_+)\\
  &\hspace{2cm}{\times}\int_{\R^2}\P\left(\substack{x{+}E_{\iM,1}{\le}u{<}x{+}E_{\iM,1}{+}L_{\iM,1}\\ E_{\iP,1}{\le} x{<}E_{\iP,1}{+}L_{\iP,1}},\substack{x'{+}E_{\iM,2}{\le}u'{<}x'{+}E_{\iM,2}{+}L_{\iM,2}\\  E_{\iP,2}{\le} x'{<}E_{\iP,2}{+}L_{\iP,2}}\right){k_{\iP}}^2\diff x \diff x'  \diff u\diff u'
\end{align*}
By using Relation~\eqref{CorII} and  the translation invariance of Lebesgue's measure on $\R$, we get 
\begin{align*}
  \E\left(\Delta^I(P)\right) &=  \delta_+(1{-}\delta_+){k_{\iM}}^2 \int_{\R^4}e^{-\Lambda |u{-}u'|}\\
& \hspace{1cm}   {\times}\P\left(\substack{x{+}E_{\iM,1}{\le}u{<}x{+}E_{\iM,1}{+}L_{\iM,1}\\ E_{\iP,1}{\le} x{<}E_{\iP,1}{+}L_{\iP,1}},\substack{x'{+}E_{\iM,2}{\le}u'{<}x'{+}E_{\iM,2}{+}L_{\iM,2}\\  E_{\iP,2}{\le} x'{<}E_{\iP,2}{+}L_{\iP,2}}\right){k_{\iP}}^2\diff x \diff x'  \diff u\diff u'\\
&=  \delta_+(1{-}\delta_+) {k_{\iM}}^2{k_{\iP}}^2 \int_{\R^4}e^{-\Lambda |u{+}x{+}E_{\iM,1}{+}E_{\iP,1}{-}(u'{+}x'{+}E_{\iM,2}{+}E_{\iP,2})|}\\
& \hspace{2.5cm}   \P\left(\substack{0{\le}u{<}L_{\iM,1}\\ 0{\le} x{<}L_{\iP,1}},\substack{0{\le}u'{<}L_{\iM,2}\\  0{\le} x'{<}L_{\iP,2}}\right)\diff x \diff x'  \diff u\diff u',
\end{align*}
which gives the identity
\begin{multline}\label{vqaux2}
  \E\left(\Delta^I(P)\right){=}  \frac{1{-}\delta_+}{\delta_+}\E(P)^2\\{\times} \E\left(e^{-\Lambda |F_{{{L_{\iM}}},1}{+}F_{{{L_{\iP}}},1}{+}E_{\iM,1}{+}E_{\iP,1}{-}(F_{{{L_{\iM}}},2}{+}F_{{{L_{\iP}}},2}{+}E_{\iM,2}{+}E_{\iP,2})|}\right)
\end{multline}
%
By plugging the identities~\eqref{Term1} and~\eqref{vqaux2} into Relation~\eqref{vqaux1}, we obtain the desired result.

\end{document}